\documentclass[runningheads]{llncs}
\usepackage{etoolbox}
\AtBeginDocument{\pdfstringdefDisableCommands{\let\unskip=\relax}}
\usepackage[T1]{fontenc}

\usepackage{amsmath}
\usepackage{amssymb}
\usepackage{mathtools}          
\usepackage{bm}                 
\usepackage[mathscr]{eucal}     

\usepackage[linesnumbered,ruled,vlined]{algorithm2e}

\usepackage{graphicx}
\usepackage{xcolor}
\usepackage{tikz}
\usetikzlibrary{arrows.meta}
\usepackage{tabularx}
\usepackage{array}
\usepackage{subcaption}         
\usepackage{environ}            
\usepackage{float}

\usepackage{xspace}             

\usepackage{hyperref}
\usepackage{cleveref}

\spnewtheorem{observation}{Observation}{\bfseries}{\itshape}

\makeatletter
  \newcommand{\@problemtitle}{}
  \newcommand{\@probleminput}{}
  \newcommand{\@problemquestion}{}
  \newcommand{\problemtitle}[1]{\renewcommand{\@problemtitle}{#1}}
  \newcommand{\probleminput}[1]{\renewcommand{\@probleminput}{#1}}
  \newcommand{\problemquestion}[1]{\renewcommand{\@problemquestion}{#1}}

  \NewEnviron{probstat}{%
    \problemtitle{}\probleminput{}\problemquestion{}%
    \BODY
    \par\addvspace{.5\baselineskip}
    \noindent
    \begin{tabularx}{\textwidth}{@{\hspace{\parindent}} l X}
      \multicolumn{2}{@{\hspace{\parindent}}l}{\textbf{\@problemtitle}} \\
      \textbf{Input:} & \@probleminput \\
      \textbf{Goal:}  & \@problemquestion
    \end{tabularx}
    \par\addvspace{.5\baselineskip}
  }
\makeatother

\makeatletter
  \providecommand*{\cupdot}{%
    \mathbin{\mathpalette\@cupdot{}}%
  }
  \newcommand*{\@cupdot}[2]{%
    \ooalign{%
      $\m@th#1\cup$\cr
      \hidewidth$\m@th#1\cdot$\hidewidth
    }%
  }
\makeatother

\newcommand{\gs}{\mathsf{GS}}
\newcommand{\ws}{\mathsf{WS}}
\newcommand{\witN}{\mu_{WS}}

\newcommand{\vis}{\mathsf{Vis}}
\newcommand{\grN}{\mu_{GS}}

\newcommand{\bd}{\partial}

\newcommand{\wvp}{\mathsf{WV-Polygon}}
\newcommand{\WVP}{\mathscr{WVP}}

\newcommand{\AGP}{\mathsf{AGP}}

\newcommand{\WSP}{\mathsf{WSP}}
\newcommand{\dWSP}{\mathtt{DiscWSP}}
\newcommand{\conWSP}{\mathtt{ContWSP}}

\newcommand{\RA}{\mathsf{rAnchor}}
\newcommand{\LA}{\mathsf{\ell Anchor}}

\newcommand{\RC}{\mathsf{rWin}}
\newcommand{\LC}{\mathsf{\ell Win}}
\newcommand{\win}{\mathsf{win}}

\newcommand{\Rb}{\mathsf{rBdry}}
\newcommand{\Lb}{\mathsf{\ell Bdry}}

\newcommand{\I}{\mathtt{Int}}
\newcommand{\po}{\mathcal{P}}

\newcommand{\lb}{\mathscr{L}_{\mathtt{base}}}
\newcommand{\OO}{\mathcal{O}}

\newcommand{\eb}{{e}_{\mathtt{base}}}

\newcommand{\OPT}{\mathrm{OPT}}
\newcommand{\spr}{\mathsf{\Pi_{v}}}
\newcommand{\spl}{\mathsf{\Pi_{u}}}

\newcommand{\wv}{\mathcal{WV}}

\newcommand{\seg}[1]{\overline{#1}}        

\newcommand{\np}{{\sf NP}\xspace}
\newcommand{\xp}{{\sf XP}\xspace}

\newcommand{\nph}{{\sf NP}-hard\xspace}

\newcommand{\woh}{{\sf W}$[1]$-hard\xspace}

\crefname{theorem}{Theorem}{Theorems}       \Crefname{theorem}{Theorem}{Theorems}
\crefname{lemma}{Lemma}{Lemmas}             \Crefname{lemma}{Lemma}{Lemmas}
\crefname{claim}{Claim}{Claims}             \Crefname{claim}{Claim}{Claims}
\crefname{observation}{Observation}{Observations}
\Crefname{observation}{Observation}{Observations}
\crefname{definition}{Definition}{Definitions}
\Crefname{definition}{Definition}{Definitions}
\crefname{corollary}{Corollary}{Corollaries}
\Crefname{corollary}{Corollary}{Corollaries}
\crefname{proposition}{Proposition}{Propositions}
\Crefname{proposition}{Proposition}{Propositions}
\crefname{example}{Example}{Examples}       \Crefname{example}{Example}{Examples}
\crefname{note}{Note}{Notes}                \Crefname{note}{Note}{Notes}
\crefname{remark}{Remark}{Remarks}          \Crefname{remark}{Remark}{Remarks}
\crefname{figure}{Figure}{Figures}          \Crefname{figure}{Figure}{Figures}
\crefname{subfigure}{Figure}{Figures}       \Crefname{subfigure}{Figure}{Figures}
\crefname{section}{Section}{Sections}       \Crefname{section}{Section}{Sections}
\crefname{equation}{Equation}{Equations}    \Crefname{equation}{Equation}{Equations}
\crefname{algocf}{Algorithm}{Algorithms}    \Crefname{algocf}{Algorithm}{Algorithms}
\crefformat{theorem}{Theorem~#2#1#3}        \Crefformat{theorem}{Theorem~#2#1#3}
\crefformat{lemma}{Lemma~#2#1#3}            \Crefformat{lemma}{Lemma~#2#1#3}
\crefformat{claim}{Claim~#2#1#3}            \Crefformat{claim}{Claim~#2#1#3}
\crefformat{observation}{Observation~#2#1#3}
\Crefformat{observation}{Observation~#2#1#3}

\newcommand{\WVpath}{%
  (0,0) -- (0,4) -- (1,4) -- (0.6667,10.6667) -- (2,4) -- (5,4) -- (2.8,5.6) --
  (6,4) -- (9,4) -- (11.5,9) -- (10,4) -- (13,4) -- (16.2,5.6) -- (14,4) --
  (17,4) -- (18.6667,10.6667) -- (18,4) -- (21,4) -- (23,8) -- (22,4) --
  (24,4) -- (24,0) -- cycle}
 
\newcommand{\cone}[5][blue!45]{%
  \fill[#1,opacity=.40] (#2,#3) -- (#4,0) -- (#5,0) -- cycle;
  \draw[line width=.3pt,black!65] (#2,#3) -- (#4,0) (#2,#3) -- (#5,0);}
 
\newcommand{\vlab}[3][2.2pt]{\fill (#2,4) circle (1.5pt);
  \node[below=#1,scale=.60,fill=white,fill opacity=.8,text opacity=1,
        inner sep=.5pt] at (#2,4) {$v_{#3}$};}
 
\newcommand{\Hgt}{4}
\newcommand{\tA}{2.2857}\newcommand{\tB}{4.5714}   
\newcommand{\tC}{5.7143}\newcommand{\tD}{8.0000}   
\newcommand{\tE}{9.1429}\newcommand{\tF}{11.4286}  
\newcommand{\tG}{12.5714}\newcommand{\tH}{14.8571} 
\newcommand{\tI}{16.0000}\newcommand{\tJ}{18.2857} 
\newcommand{\tK}{19.4286}\newcommand{\tL}{21.7143} 
 
\newcommand{\trap}[5][blue!45]{%
  \fill[#1,opacity=.55] (#2,\Hgt) -- (#3,\Hgt) -- (#5,0) -- (#4,0) -- cycle;
  \draw[line width=.4pt] (#2,\Hgt) -- (#3,\Hgt) -- (#5,0) -- (#4,0) -- cycle;}
\newcommand{\traptd}[5][black!45]{%
  \fill[#1,opacity=.30] (#2,\Hgt) -- (#3,\Hgt) -- (#5,0) -- (#4,0) -- cycle;
  \draw[line width=.6pt,densely dotted]
        (#2,\Hgt) -- (#3,\Hgt) -- (#5,0) -- (#4,0) -- cycle;}
 
\newcommand{\rails}{%
  \path (-1.9,-1.8) rectangle (25.6,6.6);
  \draw[line width=1pt] (-0.5,\Hgt) -- (24.6,\Hgt);
  \draw[line width=1pt] (-0.5,0)    -- (24.6,0);
  \node[left] at (-0.6,\Hgt) {$L_t$};
  \node[left] at (-0.6,0)    {$L_b$};}
 
\newcommand{\topticks}{%
  \foreach \x in {0,1.1429,2.2857,3.4286,4.5714,5.7143,6.8571,8.0,9.1429,
                  10.2857,11.4286,12.5714,13.7143,14.8571,16.0,17.1429,
                  18.2857,19.4286,20.5714,21.7143,22.8571,24.0}
     {\fill (\x,\Hgt) circle (1.5pt);}
  \foreach \x/\lb in {0/0,\tA/2,\tB/4,\tC/5,\tD/7,\tE/8,\tF/{10},\tG/{11},
                      \tH/{13},\tI/{14},\tJ/{16},\tK/{17},\tL/{19},24/{21}}
     {\node[above=0pt,scale=.60] at (\x,\Hgt) {$v_{\lb}$};}}
 
\newcommand{\bt}[3][1pt]{\fill (#2,0) circle (1.7pt);
  \node[below=#1,scale=.70] at (#2,0) {#3};}
 
\newcommand{\frontier}{%
  \draw[line width=1.3pt,densely dashed,red!75!black] (8.8,0) -- (\tF,\Hgt);
  \fill[red!75!black] (8.8,0) circle (2.3pt);
  \fill[red!75!black] (\tF,\Hgt) circle (2.3pt);}
\newcommand{\rightzone}{%
  \fill[black!7] (\tF,\Hgt) -- (24.6,\Hgt) -- (24.6,0) -- (8.8,0) -- cycle;}

\newif\ifauthorcomments
\authorcommentsfalse   

\providecommand{\marginnote}[2][]{}

\newcounter{shouvik}

\newcounter{sasanka}

\newcounter{udvas}

\makeatletter
\def\@fnsymbol#1{\ensuremath{\ifcase#1\or \bullet\or \dagger\or \ddagger\or
  \mathsection\or \mathparagraph\or \|\or {\dagger\dagger}\or
  {\ddagger\ddagger}\else\@ctrerr\fi}}
\makeatother

\usepackage[most]{tcolorbox}
\newtcolorbox{probbox}[1]{%
  enhanced, breakable,
  colback=white, colframe=black!70, boxrule=0.5pt, arc=1.2mm,
  left=3mm, right=3mm, top=2mm, bottom=2mm,
  attach boxed title to top left={xshift=4mm, yshift=-\tcboxedtitleheight/2},
  boxed title style={colback=white, colframe=white, boxrule=0pt,
                     sharp corners, left=1mm, right=1mm, top=0mm, bottom=0mm},
  coltitle=black, fonttitle=\normalsize, title={#1}}

\title{Witness Set in Weak Visibility Polygons is Polynomial-Time Solvable}

\titlerunning{Witness Set in Weak Visibility Polygons}
\authorrunning{U. Das, S. Mondal, S. Roy}

\ifdefined\iflatexml\else\expandafter\newif\csname iflatexml\endcsname\fi

\iflatexml
\author{Udvas Das\inst{1}
\and Shouvik Mondal\inst{2}
\and Sasanka Roy\inst{3}}
\institute{Advanced Computing and Microelectronics Unit (ACMU),\\
Indian Statistical Institute, Kolkata, India\\
\email{udvas.das@gmail.com}
\and Advanced Computing and Microelectronics Unit (ACMU),\\
Indian Statistical Institute, Kolkata, India\\
\email{shouvik.math@gmail.com}
\and Advanced Computing and Microelectronics Unit (ACMU),\\
Indian Statistical Institute, Kolkata, India\\
\email{sasanka.ro@gmail.com}}
\else
\author{Udvas Das\inst{1}
\and Shouvik Mondal\inst{1}
\and Sasanka Roy\inst{1}}
\institute{Advanced Computing and Microelectronics Unit (ACMU),\\
Indian Statistical Institute, Kolkata, India\\
\email{udvas.das@gmail.com, shouvik.math@gmail.com, sasanka.ro@gmail.com}}
\fi

\begin{document}
\maketitle

\begin{abstract}

In the classical {\sc Art Gallery Problem} $(\AGP)$, guards are placed in a polygon to see every point together~\cite{orourke1987artgallery}. The {\sc Witness Set Problem} $(\WSP)$, introduced by the authors in~\cite{DBLP:journals/ijcga/AmitMP10}, is a natural dual to the $\AGP$. In this paper, we study the {\sc Witness Set Problem} in weak visibility polygons $(\WVP)$, the simple polygons in which every point is seen from one fixed edge.
A {witness set} is a set of points whose visibility regions are pairwise disjoint, so that no single guard sees two of them. A maximum witness set, therefore, lower-bounds the guard number. Exact polynomial-time algorithms for $\WSP$ are known only for monotone mountains~\cite{DAESCU201922}, which is a proper subclass of $\WVP$s. We give the first exact polynomial-time algorithms for $\WSP$ in weak visibility polygons in two settings.\\
\begin{description}
    \item {\sc Discrete Witness Set Problem} $(\dWSP)$: Here, the witnesses come from a given set of $m$ points, and we find a maximum witness subset in $\OO(n + m \log(n + m))$ time on an $n$-vertex polygon. Crucially, the visibility intersection graph inside $\WVP$s, where two points are adjacent if their visibility regions intersect, properly contains interval graphs and permutation graphs~\cite{GOLUMBIC20041}. Moreover, this graph class turns out to be a subclass of trapezoid graphs, the intersection
    graph of trapezoids between two parallel lines~\cite{DaganGP88,FelsnerMW97}. We also
    give a matching lower bound, so our algorithm for $\dWSP$ problem is optimum. The characterization of graphs could find interesting applications in graph theory.
    \vspace{1mm}
    \item {\sc Continuous Witness Set Problem} $(\conWSP)$: Here, the witnesses are any point of the polygon, and we give an exact polynomial-time algorithm running in $\OO(n \log n + \rho^{2}(n + \rho^{2}))$ time, with $\rho$ reflex vertices.
\end{description}

\keywords{Witness set \and Weak visibility polygon \and Art gallery
problem \and Trapezoid graph \and Exact algorithm}
\end{abstract}

\raggedbottom

\section{Introduction}
\label{F-sec:intro}

Picture a long gallery with a glass wall on one side, a strait watched from
one shore, or a valley overlooked by a road. In each case, the region to be
watched is large and two-dimensional, but the cameras, guards, or sensors may
sit only along one fixed edge of it. The geometric model for such a region is
a \emph{weak visibility polygon}, a simple polygon in which every point is
seen from some point of one fixed edge, called the base edge.

The general form of this question is the classical \textsc{Art Gallery
Problem}. Victor Klee posed it in 1973 and asked how many guards are needed to see all of a gallery with $n$ walls. Chv\'atal proved that
$\lfloor n/3 \rfloor$ guards are always sufficient and sometimes
necessary~\cite{DBLP:journals/jctb/Chvatal75}, and Fisk later gave a short
proof by triangulation and three
colouring~\cite{Fisk78}. The algorithmic version is much harder. Lee and Lin
proved that computing a minimum guard set is
\nph~\cite{DBLP:journals/tit/LeeL86}. The authors in~\cite{EidenbenzSW01}
proved APX-hardness for point, vertex, and edge guards, and the authors
in~\cite{DBLP:journals/jacm/AbrahamsenAM22} proved that the problem is
$\exists \mathbb{R}$-complete, so it is not even known to lie in \np. Bonnet
and Miltzow proved \woh ness for the number of
guards~\cite{BonnetMiltzow20}. On the positive side, Ghosh gave an
$\OO(\log n)$ approximation algorithm for vertex and edge
guards~\cite{GHOSH2010718}, King and Kirkpatrick improved the ratio to
$\OO(\log \log \OPT)$~\cite{KingKirkpatrick11}, and an exact algorithm running in
$n^{\OO(k)}$ time follows from tools of real algebraic geometry, an observation
of Sharir reported by Efrat and Har-Peled~\cite{EfratHarPeled06}.

\paragraph{\textbf{Weak Visibility Polygons.}}
This class is more than a convenient special case. The known constant factor
approximation algorithms for simple polygons are built by cutting the polygon
into a hierarchy of weak visibility polygons. It is also hard in its own
right. The authors in~\cite{BHATTACHARYA2017109} proved that point guarding a
weak visibility polygon is \nph. In the same work, they gave a
$6$-approximation algorithm for vertex guarding such a polygon without holes,
running in $\OO(n^2)$ time, and they showed that with holes, no polynomial time
algorithm achieves a ratio better than $((1-\epsilon)/12)\ln n$ unless
$\np = \mathsf{P}$. Katz gave a PTAS for vertex guarding the vertices of a
weak visibility polygon and for vertex guarding its
boundary~\cite{DBLP:journals/corr/abs-1803-02160}. The authors
in~\cite{DBLP:journals/jocg/AshurFK21} gave a constant factor approximation
algorithm for vertex guarding such a polygon, and the authors
in~\cite{DBLP:journals/comgeo/AshurFKS22} introduced terrain-like graphs,
which yield further PTASs for guarding it from its vertices.

\paragraph{\textbf{The Witness Set Problem.}}
All the results above bound the guard number from above. A practitioner also
needs a certificate from below. The authors
in~\cite{DBLP:journals/ijcga/AmitMP10} supplied one. They placed
\emph{visibility-independent} points, which they called witnesses, and used
them to certify lower bounds on the guard number. A \emph{witness set} is a
set of points whose visibility regions are pairwise disjoint. No single guard
sees two witnesses, so the size of a maximum witness set is a lower bound on
the number of guards. This is the \textsc{Witness Set Problem}. A related
packing notion is a \emph{hidden set}, a set of points no two of which see
each other; the authors in~\cite{BrowneKMP23} gave the first constant factor
approximation algorithms for it. A witness set is a stronger requirement,
since it asks that the visibility regions be disjoint and not merely that the
points fail to see one another.

The witness number is interesting for a second reason. Call a polygon
\emph{perfect} when its guard number equals its witness number. For a perfect
polygon, the lower bound is tight, so computing a maximum witness set also
solves the art gallery problem. Guarding a $1.5$D terrain, the closest
relative of a weak visibility polygon, is \nph, as King and Krohn
proved~\cite{kingkrohn2011terrain}. Yet the authors in~\cite{DAESCU201922}
gave an optimal linear time algorithm for guarding a terrain by guards on a
horizontal line, and proved that the minimum guard number equals the
maximum witness number. The same results apply to monotone mountains, which
are $x$-monotone polygons whose lower chain is a single edge. Every monotone
mountain is therefore perfect, and this is the first non-trivial class of perfect polygons.

Two recent works widen the picture. The authors
in~\cite{DBLP:journals/corr/abs-2511-10224} gave a polynomial time algorithm
for the discrete version in general simple polygons, and for monotone polygons
with $r$ reflex vertices, an exact algorithm running in
$r^{\OO(k)} \cdot n^{\OO(1)}$ time for witness sets of size $k$, together with
a PTAS. The authors in~\cite{DBLP:journals/corr/abs-2605-01592} proved that the witness set problem in simple polygons lies in $\np \cap \xp$. This is a
sharp contrast with the art gallery problem, which is not in \np unless
$\np = \exists \mathbb{R}$.

Monotone mountains are a proper subclass of weak visibility polygons. Outside
that subclass, no exact polynomial-time algorithm for witness sets was known.
This paper closes the gap.

\subsection{Our Contributions}
\label{F-subsec:contributions}

The Problem Statements are defined as follows.

\begin{probbox}{\textsc{Discrete Witness Set} $(\dWSP)$ in $\wvp$}
\textbf{Input:}\ \ A weak visibility polygon $\wv$ with base edge $\eb$, and a
finite set $S = \{s_1,\dots,s_m\}$ of points of $\wv$.\\
\textbf{Task:}\ \ Find a maximum-cardinality subset $Q \subseteq S$ such that
$\vis(w) \cap \vis(w') = \emptyset$ for all distinct $w, w' \in Q$, that is,
a witness set $\ws(\wv,S)$.
\end{probbox}

\begin{probbox}{\textsc{Continuous Witness Set} $(\conWSP)$ in $\wvp$}
\textbf{Input:}\ \ A weak visibility polygon $\wv$ with base edge $\eb$.\\
\textbf{Task:}\ \ Find a maximum-cardinality set $W \subseteq \wv$ such that
$\vis(w) \cap \vis(w') = \emptyset$ for all distinct $w, w' \in W$, that is,
a witness set $\ws(\wv,\wv)$.
\end{probbox}

We give the first exact polynomial time algorithms for witness sets in weak
visibility polygons.\footnote{In a companion manuscript, we study the dual
question for the same class, namely placing the fewest guards on the base edge
so that together they see the whole polygon~\cite{sgp}.} In $\dWSP$, the
witnesses come from a given set $S$ of $m$ points, and for this problem, our
algorithm is near-linear and optimal. In $\conWSP$, a witness may be any point
of the polygon. Throughout, $\wv$ is a weak visibility polygon with $n$ vertices and $\rho$ reflex vertices. Our first result is structural.

\begin{theorem}[\cref{F-thm:trapezoid-graph}]
The visibility intersection graph $G_S$ of an instance $(\wv, S)$ is the
intersection graph of a family of trapezoids between two parallel lines. In
particular, $G_S$ is a trapezoid graph, and a trapezoid representation of it is
read off from the anchors of the points of $S$.
\end{theorem}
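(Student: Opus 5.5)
We prove the statement by assigning to each point of $S$ a trapezoid between two parallel lines, built from where its visibility region meets the base edge, and then showing that two visibility regions intersect exactly when the corresponding trapezoids do.

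Place $\eb$ on a horizontal line $L_b$, with $\wv$ lying above it. For a point $p\in\wv$, let $I(p)=\vis(p)\cap\eb$. Because $\wv$ is weakly visible from $\eb$, $I(p)$ is nonempty. It is also a single interval: if $p$ sees two points of $\eb$, then every segment from $p$ to a point between them lies inside the triangle they span, and a simple polygon cannot intrude into that triangle without blocking one of the two sight lines. We call the endpoints of $I(p)$ the left and right anchors, $\LA(p)$ and $\RA(p)$.

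For the second parameter we use the order along the upper boundary. Let $L_t$ be a second line parallel to $L_b$, and map the boundary chain $\bd\wv\setminus\eb$ monotonically onto $L_t$. This chain is ordered from the endpoint $u$ of $\eb$ to the other endpoint $v$. Let $J(p)$ be the subset of the chain visible from $p$, taken as the interval between its extreme points $\Lb(p)$ and $\Rb(p)$. The trapezoid of $p$ is then $T(p)=\mathrm{conv}(J(p)\cup I(p))$, with top side on $L_t$ and bottom side on $L_b$.

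Since the statement says the representation is read off from anchors, I expect the paper to define the top endpoints through shortest paths. For example, $\Lb(p)$ would be where the extension of the last edge of the shortest path from $\RA(p)$ to $p$ hits the chain; the macros $\spr,\spl$ point this way. I would adopt that definition.

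The key step is the equivalence $\vis(p)\cap\vis(q)\neq\emptyset \iff T(p)\cap T(q)\neq\emptyset$. Two trapezoids between parallel lines are disjoint exactly when one lies strictly left of the other on both lines. So it suffices to prove two implications.

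For the direction ($\Leftarrow$, by contrapositive): suppose $\vis(p)\cap\vis(q)=\emptyset$. Then $I(p)$ and $I(q)$ are disjoint, say $I(p)$ lies left of $I(q)$. We must show that $J(p)$ also lies left of $J(q)$. The idea is that the region $\vis(p)$ contains the segment from $p$ to $\RA(p)$. Together with $\eb$ and the part of the chain, this separates $\wv$. The boundary of $\vis(p)$ on its right side is a window, a chord running from the chain down towards $\eb$ (the $\RC$ window). Any point of $\wv$ to its right has to look across that window to reach anything left of it.

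For the direction ($\Rightarrow$, by contrapositive): if the trapezoids are disjoint, we exhibit a chord of $\wv$ separating the two regions. Every point of $\vis(p)$ is seen from some point of $I(p)$, so $\vis(p)$ lies in the part of $\wv$ swept by visibility from $I(p)$. That part is bounded on the right by the window through $\RA(p)$. Symmetrically, $\vis(q)$ lies to the right of the window through $\LA(q)$. We then need to show the two windows do not cross; this is where the top-order condition is used.

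I expect the main obstacle to be proving that the windows are non-crossing, and that the top-endpoint order is consistent with this, without resorting to ad hoc cases. My first attempt would be an ordering lemma: for $x<y$ on $\eb$, the chain points visible from $x$ and from $y$ are ordered so that their extreme visible points are monotone in $x$. This follows from a standard crossing-segments argument, since two sight segments from $\eb$ that cross both lie in $\wv$, and swapping their endpoints gives visibility. Granting this monotonicity, both implications reduce to comparing anchor coordinates on $L_b$ and boundary coordinates on $L_t$. That comparison yields the trapezoid representation, and hence that $G_S$ is a trapezoid graph.
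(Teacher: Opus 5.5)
\textbf{There is a genuine gap: both directions of the equivalence rest on steps you either justify incorrectly or do not supply.}

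Your frame matches the paper's: the bottom side is $\I(p)=[\ell(p),r(p)]$, the top side is an interval of the upper chain, and two trapezoids are disjoint exactly when one lies left of the other on both lines. Your first choice of top side, the span of the upper-chain points visible from $p$, would actually work. It equals the chain arc from $\LA(p)$ to $\RA(p)$, the first reflex vertices on $\spl(p)$ and $\spr(p)$. These reflex vertices, not the base endpoints $\ell(p),r(p)$ that you called $\LA,\RA$, are the anchors of the statement. But that identification, and the direction ``disjoint trapezoids $\Rightarrow$ disjoint visibility regions'', both need the confinement $\vis(p)\subseteq\wv^-_r(p)\cap\wv^+_l(p)$, and the reason you give for it is false.

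It is not true that every point of $\vis(p)$ is seen from some point of $\I(p)$. Take a pocket whose narrow neck opens onto a wide chamber, put $p$ at the top centre of the chamber and $x$ in a far upper corner. Then $p$ sees $x$, yet every line from $x$ through the neck meets $\eb$ well to one side of $\I(p)$, so $\I(x)\cap\I(p)=\emptyset$. Nor is the region seen from $\I(p)$ bounded by the window at $r(p)$: the point $r(p)$ sees all of $\eb$, including the part to the right of $r(p)$, which lies beyond that window.

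The confinement has to be proved directly. A sight segment from $p$ to a point beyond the chord $\RC(p)=\seg{\RA(p)\,r(p)}$ would meet that chord. It would then lie on the line through $p$ and $\RA(p)$, which leaves $\wv$ at $r(p)$, so no such point exists. Once you have this, the non-crossing you single out as the main obstacle is immediate. If $r(p)<\ell(q)$ and $\RA(p)\prec\LA(q)$, the four endpoints occur in boundary order $\RA(p),\LA(q),\ell(q),r(p)$ and do not interleave. So $\RC(p)$ and $\LC(q)$ are disjoint chords, and $\wv^-_r(p)\cap\wv^+_l(q)=\emptyset$.

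For the converse you must show that $\vis(p)\cap\vis(q)=\emptyset$ and $r(p)<\ell(q)$ force $\RA(p)\prec\LA(q)$. The idea you are missing is that the windows themselves lie in the visibility regions: $\RC(p)\subseteq\vis(p)$ and $\LC(q)\subseteq\vis(q)$. Suppose $\LA(q)\preceq\RA(p)$. Either the two anchors coincide and that vertex is seen by both points, or the endpoints $\LA(q),\RA(p),\ell(q),r(p)$ interleave on $\bd(\wv)$. In the second case the two chords cross, and the crossing point is seen by both.

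Your proposed ordering lemma does not supply this. It concerns what single base points see, not a point common to $\vis(p)$ and $\vis(q)$. Read literally it is vacuous, because every base point sees $v_0$ and $v_n$ along $\eb$. Relatedly, the segment from $p$ to $r(p)$ does not separate $\wv$ unless $p\in\bd(\wv)$; only the full window $\RC(p)$ does.

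Finally, drop the shortest-path definition you say you would adopt. Since $r(p)\in\I(p)$, the shortest path from $r(p)$ to $p$ is the single segment $\seg{r(p)\,p}$, and its extension beyond $p$ hits the chain on the far side of $p$, not at an end of the top side. The correct read-off is simply the arc between $\LA(p)$ and $\RA(p)$. Both are visible from $p$ because they lie on the first edges of $\spl(p)$ and $\spr(p)$. Together with $\LA(p)\prec\RA(p)$, this gives the paper's trapezoid $T(p)$.
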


\begin{theorem}[\cref{F-thm:disws-runtime} and \cref{F-thm:disws-lower}]
$\dWSP$ in a weak visibility polygon is solvable in $\OO(n + m \log (n+m))$
time and $\OO(n+m)$ space. In the algebraic decision-tree model, it requires $\Omega(k \log k)$ operations in the worst case on instances with $k$ points
in a polygon of $\OO(k)$ vertices. For $m = \Theta(n)$ the problem is
therefore solved in $\Theta(n \log n)$ time, and no algebraic decision-tree algorithm does better.
\end{theorem}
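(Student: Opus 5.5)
The plan is to use the trapezoid structure from the first theorem (\cref{F-thm:trapezoid-graph}) and reduce $\dWSP$ to a maximum independent set computation in a trapezoid graph. A witness subset $Q\subseteq S$ is exactly an independent set of $G_S$, so it suffices to build a trapezoid representation of $G_S$ quickly and then find a maximum independent set of trapezoids. A lower bound from a classical $\Omega(k\log k)$ problem then gives optimality.

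\textbf{Upper bound.} First, compute the anchors of every point of $S$. For $s\in S$, $\vis(s)\cap \eb$ is a single segment, because in a weak visibility polygon the part of the base edge seen by a point is connected. Its endpoints, together with the corresponding extreme points on the upper line used in \cref{F-thm:trapezoid-graph}, are $\LA(s)$ and $\RA(s)$. I would triangulate $\wv$ in $\OO(n)$ time (Chazelle) and compute the shortest path trees from the two endpoints of $\eb$ in $\OO(n)$ time (Guibas et al.). The left and right anchors of $s$ are determined by the last vertex on the shortest path from the corresponding base endpoint to $s$. Locating all $m$ points in the shortest-path-map subdivisions takes $\OO(n + m\log n)$ time after $\OO(n)$ preprocessing.

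This gives $m$ trapezoids $T_s=[a_s,b_s]\times[c_s,d_s]$ between two parallel lines. Two points are non-adjacent exactly when one trapezoid lies strictly to one side of the other on both lines. An independent set is therefore a chain in the dominance order $T\prec T'$ iff $b_T<a_{T'}$ and $d_T<c_{T'}$. A maximum chain is found by the standard sweep of Felsner--M\"uller--Wernisch. Sort all $4m$ endpoints on the bottom line in $\OO(m\log m)$ time and sweep. At a left endpoint $a_s$, set $\mathrm{len}(s)=1+\max\{\mathrm{len}(t): b_t<a_s \text{ already closed},\ d_t<c_s\}$, answering the query with a balanced search tree keyed by top coordinate $d_t$. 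At the right endpoint $b_s$, insert $s$ with key $d_s$. Pruning dominated entries keeps the stored values monotone in the key, so each query is a predecessor search taking $\OO(\log m)$ time. Storing back-pointers recovers $Q$. The total is $\OO(n+m\log(n+m))$ time and $\OO(n+m)$ space.

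\textbf{Lower bound.} I would reduce from a problem with an $\Omega(k\log k)$ algebraic decision-tree bound, such as Element Uniqueness or longest increasing subsequence (Fredman; Ramanan). The natural target is permutation graphs, which the abstract says arise inside $\WVP$s. Build a polygon with $\OO(k)$ vertices, for instance a long base with one thin spike, and place $k$ points whose anchor pairs on the two lines realize a prescribed permutation $\pi$. The maximum witness set then equals the longest decreasing (or increasing) subsequence of $\pi$, or differs according to whether two input values coincide.

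\textbf{Main obstacle.} The hardest step is this reduction. The point coordinates must be algebraic functions of constant degree of the input numbers, and the polygon must be fixed or depend on the input only in a way that keeps the reduction within the decision-tree model. I would first try a fixed-shape polygon with points on a constant number of segments, whose anchors are linear fractional functions of a single parameter.
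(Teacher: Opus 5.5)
Your upper bound is the paper's argument. You read anchors from the two shortest path maps by point location, build the trapezoids of \cref{F-def:trapezoid}, and compute a maximum independent set in $\OO(m\log m)$ time; the only difference is that you re-derive the Felsner--M\"uller--Wernisch sweep rather than cite it. There are two slips there. The lower side of $T(s)$ is $[\ell(s),r(s)]$, obtained in $\OO(1)$ time by shooting the ray from $s$ through each anchor to $\eb$; it is not the anchors themselves. Also, a tie $r(t)=\ell(s)$ is a conflict, so at equal abscissae queries must precede insertions.

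The genuine gap is the lower bound. You give no construction, and the one you lean toward (a fixed-shape polygon, one thin spike, points on a constant number of segments) cannot work. Anchors are vertices of $\wv$, not linear fractional functions of a parameter; only $\ell$ and $r$ vary continuously. A fixed-shape polygon therefore has $\OO(1)$ reflex vertices and only $\OO(1)$ possible positions on $L_t$. Every witness set then has at most $\rho+1=\OO(1)$ members, since right anchors strictly increase along a chain (\cref{F-lem:ws-anchor}). Moreover, a dynamic program over pairs (chain length, right anchor of the last point) that stores the least right endpoint $r$ solves every such instance in $\OO(k\rho^2)=\OO(k)$ algebraic operations; storing only the least $r$ suffices by \cref{F-lem:ws-dom}. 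So no reduction from longest increasing subsequence or \textsc{Element Uniqueness} into such polygons can exist. The obstacle you single out, keeping the reduction algebraic, is not the real one.

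The missing idea is that the polygon must carry $\Theta(k)$ reflex vertices, one per input point, whose clockwise order fixes the order on $L_t$ independently of the order on $L_b$. The paper builds $k$ shallow pockets along the upper chain with gateway reflex vertices $\rho_1\prec\dots\prec\rho_k$. It places one point $w_i$ deep in pocket $i$, so that $\LA(w_i)=\RA(w_i)=\rho_i$ (or uses a very narrow mouth, to respect general position). Each pocket is aimed so that the ray from $w_i$ through $\rho_i$ meets $\eb$ at a point $b_i$, with the $b_i$ ordered as $\pi$ (or as the input reals). This turns every $T(w_i)$ into the segment from $\beta(b_i)$ to $\tau(\rho_i)$. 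Two such segments meet exactly at an inversion, so $G_S$ is the permutation graph and a maximum witness set is a longest increasing subsequence. The $\Omega(k\log k)$ bound of Fredman and Ramanan then transfers, because the instance is built with $\OO(k)$ operations and only the aims depend on the input. Since that polygon has $\OO(k)$ vertices, this also yields the $\Theta(n\log n)$ claim for $m=\Theta(n)$, which your proposal does not spell out.
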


\begin{theorem}[\cref{F-thm:VIGgraphclasslocation}]
Let $\mathsf{VIG}$ be the class of visibility intersection graphs of weak
visibility polygons, and let $\mathsf{Int}$, $\mathsf{Perm}$ and
$\mathsf{Trap}$ be the classes of interval, permutation, and trapezoid graphs.
Then $\mathsf{Int} \subsetneq \mathsf{VIG}$,
$\mathsf{Perm} \subsetneq \mathsf{VIG}$ and
$\mathsf{VIG} \subseteq \mathsf{Trap}$.
\end{theorem}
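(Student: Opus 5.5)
The plan has three parts, one per claim, and relies on the trapezoid-graph theorem above for the upper inclusion. It also needs an explicit realisation of interval and permutation graphs as visibility intersection graphs, and separating examples for both strict inclusions.

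\textbf{Upper inclusion.} $\mathsf{VIG} \subseteq \mathsf{Trap}$ is immediate from the first theorem (\cref{F-thm:trapezoid-graph}). For any instance $(\wv,S)$, the graph $G_S$ is the intersection graph of trapezoids between two parallel lines. These trapezoids are read off from the anchors of the points of $S$, so every visibility intersection graph is a trapezoid graph.

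\textbf{$\mathsf{Perm} \subseteq \mathsf{VIG}$.} Take a permutation graph on $\pi$ over $\{1,\dots,k\}$, realised by segments joining position $i$ on a top line to position $\pi(i)$ on a bottom line. I would take $\wv$ to be a convex polygon: a long rectangle with $\eb$ its bottom side. In a convex polygon every point sees everything, so convexity alone fails; I must restrict visibility. Instead I would use a ``comb'' polygon. Its base is $\eb$, and from it rise $k$ narrow vertical spikes, each with two reflex vertices at its mouth. A point deep in spike $j$ sees only a thin wedge of the main body, and this wedge crosses the body as a thin slanted strip. Its direction is chosen by tilting the spike so that it hits $\eb$ near position $\pi(j)$. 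Two such strips intersect precisely when their top and bottom orders are inverted, which reproduces the segment model. The strips are essentially the degenerate trapezoids (segments) of the anchor representation, and the spikes are tilted tunnels. Each spike is visible from $\eb$ along its own slant direction, so the polygon stays weakly visible from $\eb$. I expect the main obstacle to be checking that the strips cannot meet inside the spikes themselves, and that nothing else introduces a spurious intersection. I would handle this by making the spikes very thin and placing the witnesses at their tips, so each visibility region is the strip plus its own spike interior.

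\textbf{$\mathsf{Int} \subseteq \mathsf{VIG}$.} For intervals $[a_j,b_j]$, I would use vertical (untilted) spikes or pockets whose mouth on the body spans $[a_j,b_j]$. Now the visibility region of a point deep in a pocket projects onto a vertical strip over $[a_j,b_j]$. Alternatively, I can use trapezoids with identical top and bottom intervals, realised by a point placed at height in a monotone-mountain-like body. The visibility regions of two such points intersect exactly when their intervals overlap. Equivalently, I can argue through the anchor representation, choosing points whose left and right anchors are $a_j,b_j$ on both lines.

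\textbf{Strictness.} Both inclusions are proper, and I would show this with known separating graphs. $C_4$ is a permutation graph but not an interval graph. Conversely, the interval graph obtained by taking the star $K_{1,3}$ and subdividing each edge once, with a dominating vertex added, is not a comparability-cocomparability graph, hence not a permutation graph. Both graphs lie in $\mathsf{VIG}$ by the constructions above. So $\mathsf{VIG}$ contains a non-interval graph (from $\mathsf{Perm}$) and a non-permutation graph (from $\mathsf{Int}$), which gives $\mathsf{Int} \subsetneq \mathsf{VIG}$ and $\mathsf{Perm} \subsetneq \mathsf{VIG}$. The delicate step remains the geometric realisation for permutation graphs, where the tilted spikes must be shown not to create extra visibility overlaps.
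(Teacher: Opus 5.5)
Your inclusion $\mathsf{VIG}\subseteq\mathsf{Trap}$ and your tilted-pocket realisation of permutation graphs are essentially the paper's (\cref{F-thm:trapezoid-graph} and \cref{F-lem:perm-realization}). The gaps are in the interval realisation and in the strictness example.

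\textbf{The interval realisation.} Your main construction for $\mathsf{Int}\subseteq\mathsf{VIG}$ uses untilted pockets whose mouths span $[a_j,b_j]$. It fails as soon as two intervals overlap. The two mouths would then overlap along the top of the body, which is impossible for two distinct pockets of a simple polygon. If you merge the pockets, each witness sees into the other's region, so the visibility regions are no longer the strips over the $I_j$.

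Your two alternatives (``identical top and bottom intervals'', anchors equal to $a_j,b_j$ ``on both lines'') need something you do not supply. You would need witnesses whose anchor intervals on the upper chain overlap in exactly the pattern of the $I_j$, while their base intervals are prescribed at the same time. You give no construction, and witnesses in pairwise disjoint pockets always have disjoint anchor intervals.

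The missing idea is that the upper sides need not encode the overlaps at all. Fix a linear extension $\sigma$ of the relation ``$I_i$ lies entirely left of $I_j$''. Place pairwise disjoint pockets along the upper chain in the order $\sigma$. Then aim the two mouth rays of pocket $i$ so that $\I(w_i)=I_i$; this is your own tilting device with wider mouths. The upper sides of the trapezoids are now disjoint and ordered by $\sigma$. Hence $T(w_i)$ and $T(w_j)$ are disjoint exactly when their lower sides are ordered the same way. Disjoint intervals are always ordered consistently with $\sigma$, and overlapping intervals make the trapezoids meet on $L_b$. This is \cref{F-lem:interval-realization}.

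\textbf{The strictness example.} Your witness for an interval graph that is not a permutation graph is not an interval graph. It contains the subdivided claw as an induced subgraph, and the three leaves form an asteroidal triple. Each leaf-to-leaf path through the centre avoids the closed neighbourhood of the third leaf, and adding a dominating vertex does not change this. Interval graphs are hereditary and AT-free, so the graph is not interval. Cocomparability graphs are also AT-free, so it is not a permutation graph either, and it separates nothing.

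The paper simply invokes the known incomparability of the two classes. If you want an explicit example, note that every interval graph is a cocomparability graph. So an interval graph is a permutation graph if and only if it is a comparability graph. Take the intervals $p=[1,2]$, $a=[0,6]$, $x=[3.2,3.5]$, $b=[3,7]$, $c=[4,10]$, $y=[6.5,6.8]$, $q=[8,9]$. The induced paths $p\,a\,b$, $p\,a\,c$, $q\,c\,a$, $q\,c\,b$, $a\,b\,y$, $y\,b\,x$, $x\,b\,c$ force, in any transitive orientation, each of $a$, $b$, $c$ to be a source or a sink of the triangle $abc$. That is impossible, so this interval graph is not a permutation graph. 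Your use of $C_4$ for the other strictness claim is correct.
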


\begin{theorem}[\cref{F-thm:ws-correct} and \cref{F-thm:ws-time}]
$\conWSP$ in a weak visibility polygon is solvable exactly in
$\OO\!\left(n \log n + \rho^{2}(n + \rho^{2})\right)$ time.
\end{theorem}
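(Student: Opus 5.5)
The plan is to reduce $\conWSP$ to a finite, discrete instance and then run the $\dWSP$ machinery of \cref{F-thm:disws-runtime} on it. In a weak visibility polygon, the visibility region of a point $p$ is determined, as far as intersections go, by the interval of the base edge $\eb$ that $p$ sees together with its two anchors (the reflex vertices or polygon vertices that bound the left and right windows of $\vis(p)$). By the trapezoid representation of \cref{F-thm:trapezoid-graph}, $\vis(p)\cap\vis(q)=\emptyset$ holds exactly when the trapezoids of $p$ and $q$ are disjoint. So a maximum witness set is a maximum independent set of trapezoids, which is a maximum chain of pairwise "left-of" trapezoids. Since the trapezoids only shrink as a witness is pushed into a pocket, we may restrict attention to inclusion-minimal visibility regions.

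\textbf{Step 1 (canonical candidates).} Show that every witness can be replaced by a point whose visibility region is contained in the original one, so disjointness is preserved. Each such point can be chosen from a finite canonical family. Natural candidates are points deep in pockets cut off by pairs of reflex vertices: take intersections of extension rays through pairs of reflex vertices with each other and with the boundary, and keep those whose visible base interval is minimal. Pockets are determined by pairs of reflex vertices, so this gives $\OO(\rho^2)$ combinatorial types. Computing the minimal point and its anchors for each type costs $\OO(n)$, which yields the $\rho^2 n$ term. Vertices of convex portions contribute only $\OO(n)$ extra candidates, handled after an $\OO(n\log n)$ preprocessing step that computes shortest-path trees from the endpoints of $\eb$.

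\textbf{Step 2 (reduction).} Build the trapezoid family for the $\OO(\rho^2)$ candidates. Computing the pairwise disjointness relation, or running a simple DP over the candidates ordered by right anchor, costs $\OO(\rho^4)$. This matches the $\rho^2\cdot\rho^2$ term; the $\OO(m\log m)$ sweep of \cref{F-thm:disws-runtime} would be faster. Correctness then follows by an exchange argument: take an optimal witness set, order it left to right along $\eb$, and replace each witness by a canonical candidate with a smaller trapezoid.

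\textbf{Main obstacle.} The hard step is Step 1: proving that a minimal visibility region is always attained at one of the $\OO(\rho^2)$ canonical points, including in degenerate cases. Such degenerate cases include points on the boundary, regions that pinch to a segment, and witnesses lying in convex parts, where the visibility region may be a whole face.

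My first attempt would be a monotonicity lemma. Moving a point along the direction away from $\eb$, inside the cell of the arrangement of reflex-vertex extension lines, only shrinks its visible base interval. The minimum over a cell is therefore attained at a vertex of that cell, and the cells that matter correspond to pairs of reflex vertices.
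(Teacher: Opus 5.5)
There is a genuine gap, and it is in your Step~1. No fixed family of points built from pairs of reflex vertices (plus convex vertices) has the property you need. The claim in your monotonicity lemma that ``the minimum over a cell is attained at a vertex of that cell'' is false, because inclusion of trapezoids is only a partial order. Take a piece of the upper chain on which the anchors are constant, $\LA(x)=\lambda$ and $\RA(x)=\mu$. There $\ell(x)$ and $r(x)$ are central projections through $\lambda$ and $\mu$, and they move in the \emph{same} direction as $x$ slides (\cref{F-lem:ws-pieces}). So the trapezoids along the piece are pairwise non-nested. There is no minimal one, only a trade-off between a large $\ell$ (needed against the predecessor in the chain) and a small $r$ (needed for the successor).

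Here is a sketch of a concrete failure. Take three pockets $A$, $B$, $C$ from left to right with mouth vertices $\lambda_X,\rho_X$. Let $a$ be the point of $A$ with smallest $r$, and let $c$ be the point of $C$ with largest $\ell$; both are pocket corners. A far-wall point $x$ of $B$ completes a chain of length three exactly when $\ell(x)>r(a)$ and $r(x)<\ell(c)$. This set is an open sub-arc of the far wall. One of its ends is $\win(r(a),\lambda_B)$, a point determined by $a$, $\rho_A$ and $\lambda_B$, not by a pair of reflex vertices. The qualifying interior points of $B$ form a thin region next to this sub-arc. By aiming the pockets you can make that region arbitrarily small and generically placed, so it contains none of your candidates. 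Your algorithm then reports $2$ where the optimum is $3$. In longer chains the correct position of the $j$-th witness depends on the whole prefix. The required points are therefore iterated constructions of depth up to $\rho+1$, and they cannot be listed in advance without blow-up.

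The missing ideas are the ones the paper uses to control this dependence. By the substitution lemma (\cref{F-lem:ws-dom}), a prefix matters only through three things: its length $i$, the right anchor $\mu$ of its last point, and the infimum $r^{*}(i,\mu)$ of that point's right endpoint. Since chains have length at most $\rho+1$ (\cref{F-lem:ws-anchor}), this gives at most $(\rho+1)^2$ states.

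Row $i+1$ is computed from row $i$ by \emph{firing} the base point $g=r^{*}(i,\mu)$. The candidates are the window endpoints $\win(g,\nu)$, created as the algorithm runs, together with the static piece endpoints (\cref{F-lem:ws-update}). The proof of that lemma uses your push-away-from-the-base step (\cref{F-lem:ws-push}). It then slides the point along its piece until it reaches a piece endpoint or $\ell$ drops to $g$. The entries are infima that need not be attained, so a final shifting step (\cref{F-lem:ws-output}) turns the recorded limits into an actual chain.

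The time bound comes from $(\rho+1)^2$ states. Each state costs one $\OO(n)$ visibility polygon plus an $\OO(n+\rho^2)$ scan, on top of $\OO((n+\rho^2)\log n)$ preprocessing. It does not come from $\rho^2$ static types at $\OO(n)$ each followed by an $\OO(\rho^4)$ pass. That pass would also not account for your $\OO(n)$ convex-vertex candidates.
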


\paragraph{\textbf{Applications.}}
Witness sets are not only a proof device. Practical solvers for the art
gallery problem are primal-dual or integer-programming methods, and they need
a lower bound at every iteration to certify that a guard set is
optimal. A witness set is exactly such a certificate, and the authors
in~\cite{DBLP:journals/ijcga/AmitMP10} computed witnesses for this purpose.
Exact solvers built on the same idea appear
in~\cite{KrollerBFS12,TozoniRS16}. The underlying placement question is itself
applied. The authors in~\cite{BenMosheKM07} motivate terrain guarding by the
placement of antennas for line-of-sight communication and of street lights and
cameras along roads. Gonz\'alez-Banos and Latombe used an art gallery
formulation for sensor placement in robotics~\cite{GonzalezBanosLatombe01},
and coverage of this kind is standard in sensor network
design~\cite{Wang11}. A weak visibility polygon models exactly the case where
the sensors must sit along one fixed wall, shore, or road.

\section{Preliminaries}
\label{F-sec:prelims}

We collect the definitions and notation used throughout. We begin with
simple polygons and weak visibility polygons, then define witness sets
and guard sets together with their cardinalities, and finally recall the
anchors and the interval a point induces on the base edge.

\begin{definition}[Simple Polygon]
\label{F-def:simple-polygon}
A \emph{simple polygon} $\po$ is the closed region bounded by a finite
closed chain of straight, pairwise non-crossing edges that meet only at
their common endpoints, the \emph{vertices}. We write $V(\po)$ for its
vertex set and $\bd(\po)$ for its boundary. A vertex is \emph{reflex}
if the interior angle of $\po$ at it exceeds $\pi$, and \emph{convex}
otherwise. Two points $a$ and $b$ of $\po$ \emph{see} each other if
the segment $\seg{ab}$ lies in $\po$. We write $\vis(p)$ for the set of
points of $\po$ that $p$ sees.
\end{definition}

For a segment, its \emph{relative interior} is the segment without its
two endpoints. A \emph{chord} of a simple polygon $\po$ is a segment
whose two endpoints lie on $\bd(\po)$ and whose relative interior lies
in the interior of $\po$. A chord splits $\po$ into two subpolygons,
and it splits $\bd(\po)$ into two \emph{arcs}, one on each side.

\begin{definition}[Weak Visibility Polygon]
\label{F-def:wvp}
A simple polygon $\wv$ is a \emph{weak visibility polygon} $(\wvp)$
with respect to one of its edges $\eb = uv$, the \emph{base edge}, if every point of $\wv$ is visible from some point of
$\eb$~\cite{ghosh2007vis}. We take the vertices of $\wv$ to be
$v_0, v_1, \dots, v_n$ in clockwise order with $\eb = v_0 v_n$, and we
write $u = v_0$ and $v = v_n$ for the two endpoints of the base. The polygon
has $n+1$ vertices; we suppress this additive constant and call $n$ its size.
\end{definition}

A \emph{base point} is a point of the base edge $\eb$.

\begin{definition}[Witness Set]
\label{F-def:witness-set}
A set $W \subseteq \wv$ is a \emph{witness
set}~\cite{DBLP:journals/ijcga/AmitMP10} if no point of $\wv$ sees two
of its members, that is, if $\vis(w) \cap \vis(w') = \emptyset$ for
every two distinct $w, w' \in W$. We write $\ws(\wv, Q)$ for a witness
set in the polygon $\wv$ with the witness points drawn from a set $Q \subseteq \wv$, and $\witN(\wv, Q)$ for the cardinality of such a set.
\end{definition}

\begin{definition}[Guard Set]
\label{F-def:guard-set}
A set $G \subseteq \wv$ is a \emph{guard set}~\cite{Belleville91} if
every point of $\wv$ is visible from some $g \in G$. We write
$\gs(\wv, A)$ for a guard set in $\wv$ with the guards drawn from $A \subseteq \wv$, and $\grN(\wv, A)$ for the cardinality of such a set.
\end{definition}

A witness set and a guard set bound each other: one guard sees at most
one witness, so $\witN(\wv, Q) \le \grN(\wv, A)$ whenever $Q$ and $A$
range over all of $\wv$, which is why a maximum witness set certifies a
guard lower bound.

For a point $p \in \wv$, let $\spl(p)$ and $\spr(p)$ denote the
Euclidean shortest paths inside $\wv$ from $p$ to $u$ and to $v$,
respectively. The next two notions, from
\cite{DBLP:journals/corr/abs-2511-10224}, turn visibility on the base
into one-dimensional data.

\begin{definition}[Anchors]
\label{F-def:prelim-anchors}
The \emph{left anchor} $\LA(p)$ of a point $p \in \wv$ is the first
reflex vertex on $\spl(p)$, and the \emph{right anchor} $\RA(p)$ is the
first reflex vertex on $\spr(p)$. If the shortest path is a straight
segment, the anchor is the base endpoint it reaches.
\end{definition}

\begin{definition}[Window]
\label{F-def:prelim-window}
Let $x$ be a point of $\wv$ and let $\mu$ be a vertex of $\wv$ that $x$
sees and at which the ray from $x$ through $\mu$ enters the interior of
$\wv$. The \emph{window of $x$ at $\mu$} is the segment
$\seg{\mu\,\win(x,\mu)}$, where $\win(x,\mu)$ denotes the first point of
$\bd(\wv)$ that this ray meets after $\mu$. When $\mu$ is an endpoint of
$\eb$ we set $\win(x,\mu) = \mu$, so that the window is the single point
$\mu$.
\end{definition}

\begin{observation}
\label{F-obs:window}
A window of $x$ at a reflex vertex $\mu$ is a chord of $\wv$, it is the
edge of $\vis(x)$ at which $\mu$ blocks the view of $x$, and the segment
$\seg{x\,\win(x,\mu)}$ lies in $\vis(x)$.
\end{observation}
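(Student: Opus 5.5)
We must prove the Observation: for $x$ seeing a reflex vertex $\mu$ where the ray from $x$ through $\mu$ enters the interior beyond $\mu$, the window $\seg{\mu\,\win(x,\mu)}$ is a chord of $\wv$. Moreover, it is the edge of $\vis(x)$ at which $\mu$ blocks the view, and $\seg{x\,\win(x,\mu)}\subseteq\vis(x)$.

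The plan is a direct argument along the ray $r$ from $x$ through $\mu$. Since $x$ sees $\mu$, the segment $\seg{x\mu}$ lies in $\wv$. By hypothesis, points of $r$ just beyond $\mu$ lie in the interior of $\wv$. Let $w=\win(x,\mu)$ be the first boundary point after $\mu$; it exists because $\wv$ is compact. By the choice of $w$, the open segment $(\mu,w)$ contains no boundary point and starts in the interior. It is connected and never crosses $\bd(\wv)$, so it lies entirely in the interior. Its endpoints $\mu$ and $w$ are on $\bd(\wv)$, so $\seg{\mu w}$ is a chord.

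For the containment, note that $\seg{xw}=\seg{x\mu}\cup\seg{\mu w}$, with both pieces in $\wv$, so every point of $\seg{xw}$ is seen by $x$ along the same segment. Hence $\seg{xw}\subseteq\vis(x)$.

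For the blocking claim, I would use the standard description of $\vis(x)$ as a star-shaped polygon. Its boundary consists of pieces of $\bd(\wv)$ together with constructed edges, each lying on a ray from $x$ through a reflex vertex. Since $\mu$ is reflex and $r$ enters the interior at $\mu$, the two edges of $\wv$ at $\mu$ lie on one side of $r$ locally, so $\mu$ is tangent from $x$. Points just across $r$ on the side away from those edges, and beyond $\mu$, are visible from $x$. Points on the other side, near $\seg{\mu w}$ but beyond $\mu$, are shadowed by the edges at $\mu$, since the segment to $x$ must cross them. Thus $\seg{\mu w}$ separates a visible region from an invisible one and lies on $\bd(\vis(x))$. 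It is maximal: it ends at $\mu$ on one side and at $w\in\bd(\wv)$ on the other. It is therefore an edge of $\vis(x)$.

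The main obstacle is this local shadow argument at $\mu$, particularly degenerate cases such as $x$ collinear with an edge at $\mu$. I would handle these by a small perturbation or by stating a general-position assumption. The base-endpoint case is trivial by the convention $\win(x,\mu)=\mu$, and the observation concerns reflex $\mu$ anyway.
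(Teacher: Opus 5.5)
The paper states this observation without proof and uses it as a standard fact. Later it invokes only two parts: the chord property, in \cref{F-lem:anchor-order}, \cref{F-lem:shadow} and \cref{F-lem:ws-push}, and the containment $\seg{x\,\win(x,\mu)}\subseteq\vis(x)$, behind \cref{F-obs:anchor-collinear}. Your arguments for these two parts are complete and correct. The open segment from $\mu$ to $w$ is connected, starts in the interior and misses $\bd(\wv)$, so it lies in the interior. And $\seg{xw}=\seg{x\mu}\cup\seg{\mu w}\subseteq\wv$ gives visibility of every point of $\seg{xw}$.

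For the edge claim, your shadow half is sound. Both directions of the line lie in the closed interior angle at $\mu$, because $\seg{x\mu}\subseteq\wv$ and the ray enters the interior at $\mu$. So the exterior angle at $\mu$, which is convex and smaller than $\pi$, lies in one closed half-plane. A sight line from $x$ to a point at offset $\delta$ beyond $\mu$ on that side passes within $O(\delta)$ of $\mu$ and enters that exterior angle. This already gives $\seg{\mu w}\subseteq\bd(\vis(x))$, with the nearby part of $\wv$ on the side of the edges at $\mu$ invisible. It also goes through unchanged when an edge at $\mu$ points back toward $x$, so the degenerate case you single out is harmless.

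The step that is not justified is the other half: that points just beyond $\mu$ on the side away from those edges are visible, together with your maximality claim. This needs clearance along all of $\seg{x\mu}$, and it fails if $\seg{x\mu}$ passes through a second reflex vertex $\mu'$.
\begin{itemize}
\item If the edges at $\mu'$ lie on the opposite side, then $\seg{\mu w}$ is an antenna of $\vis(x)$, with invisible points arbitrarily close on both sides.
\item If they lie on the same side, the boundary edge of $\vis(x)$ is $\seg{\mu' w}$, and the window is only part of it.
\end{itemize}
The paper's general-position assumption constrains vertices, and in \cref{F-sec:disws-wvp} the points of $S$. It does not constrain an arbitrary $x$ such as a fired base point, and a given $x$ cannot be perturbed. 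So either add the hypothesis that $x$ lies on no line through two vertices, or state the edge claim in the weaker one-sided form above. Nothing later in the paper needs more than that.
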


\begin{definition}[Interval on the Base]
\label{F-def:prelim-interval}
The \emph{base interval} of a point $p \in \wv$ is
$\I(p) = \vis(p) \cap \eb$. It is a single nonempty closed segment
$[\ell(p), r(p)] \subseteq
\eb$~\cite{DBLP:journals/corr/abs-2511-10224}, so a base point
$g \in \eb$ sees $p$ if and only if $g \in \I(p)$.
\end{definition}

The two endpoints of the base interval are window endpoints, and we record this together with the two segments we use below.

\begin{observation}
\label{F-obs:anchor-collinear}
For every $p \in \wv$ the three points $p$, $\LA(p)$, $\ell(p)$ are
collinear and occur in this order, and so do $p$, $\RA(p)$, $r(p)$;
hence $\ell(p) = \win(p, \LA(p))$ and $r(p) = \win(p, \RA(p))$. The
\emph{boundary segments} of $p$ are $\Lb(p) = \seg{p\,\ell(p)}$ and
$\Rb(p) = \seg{p\,r(p)}$; both lie in $\vis(p)$.
\end{observation}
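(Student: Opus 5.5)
The plan is to work from the left anchor forwards rather than from $\ell(p)$ backwards. Put $\mu = \LA(p)$, $w = \win(p,\mu)$, and show that $w = \ell(p)$. The right-hand statement is symmetric, obtained by exchanging $u$ and $v$. The claim that $\Lb(p)$ and $\Rb(p)$ lie in $\vis(p)$ then follows at once. Indeed, $\ell(p), r(p) \in \I(p) \subseteq \vis(p)$, so the segments $\seg{p\,\ell(p)}$ and $\seg{p\,r(p)}$ lie in $\wv$ by the definition of seeing. This is also \cref{F-obs:window} applied to the window at $\mu$.

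\emph{Trivial case.} Suppose $\spl(p)$ is a straight segment. Then $p$ sees $u$ and $\LA(p) = u$. Since $u$ is the left endpoint of $\eb$ and $u \in \I(p)$, we get $\ell(p) = u = \win(p,u)$ by the convention in the window definition. The three points are trivially collinear in the required order.

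\emph{Main case: $\mu$ is a reflex vertex.} Standard facts on shortest paths in simple polygons give three things. The first edge of $\spl(p)$ is $\seg{p\mu}$, so $p$ sees $\mu$. The path bends at $\mu$. Hence the ray from $p$ through $\mu$ enters the interior of $\wv$ beyond $\mu$, and the window $\seg{\mu w}$ is a chord by \cref{F-obs:window}.

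This chord splits $\wv$ into two subpolygons. Let the \emph{pocket} $P'$ be the one not containing $p$. Because $\spl(p)$ wraps around $\mu$ to the side of the ray facing away from $p$, the point $u$ lies in $P'$. No point of $P'$ off the chord is visible from $p$: any sight line from $p$ into $P'$ would have to cross $\seg{\mu w}$ in its relative interior, which is impossible since that segment is contained in the ray from $p$ itself. Hence every point of $\I(p)$ lies outside the relative interior of $P'$ or on the chord.

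\emph{Key step: $w$ lies on $\eb$.} If $w \notin \eb$, the base edge, being an edge of $\bd(\wv)$ not crossed by the chord, lies entirely on one arc. Since $u \in P'$, the whole of $\eb$ would lie on the arc of $P'$. The only points of $P'$ that $p$ could then see are on the chord, and the chord meets $\eb$ at most at $w \notin \eb$, so $\I(p) = \emptyset$. This contradicts the nonemptiness of $\I(p)$ in \cref{F-def:prelim-interval}. So $w \in \eb$, and $\eb$ is split at $w$:
\begin{itemize}
\item the part $[u,w)$ lies in $P'$ and is invisible from $p$;
\item $w$ itself is seen, since $\seg{pw}$ lies in $\vis(p)$ by \cref{F-obs:window}.
\end{itemize}
Therefore $\ell(p) = w = \win(p,\mu)$. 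The collinearity and the order $p,\mu,\ell(p)$ follow because $w$ is by definition the first boundary point on the ray from $p$ beyond $\mu$.

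I expect the main obstacle to be making rigorous that $u$ lies in the pocket. This needs the precise statement that the taut path $\spl(p)$ turns at $\mu$ toward the side opposite $p$, so that after $\mu$ it never recrosses the window chord. If that standard shortest-path property is invoked cleanly, the rest is a short separation argument.
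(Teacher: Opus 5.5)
The paper gives no proof of this statement. It is recorded as an observation right after the definitions of anchors and windows, and the only cited ingredient is that $\I(p)$ is a segment, from \cite{DBLP:journals/corr/abs-2511-10224}. Your proposal therefore supplies an argument where the paper has none, and its outline is sound. The closest reasoning in the paper is the proof of \cref{F-lem:shadow}. It uses the same collinearity trick: a sight line from $p$ that meets a window of $p$ must lie on the line of that window. But it takes the present observation as input, in order to know that this line leaves $\wv$ at a point of $\eb$. You run the trick in the other direction. The shortest-path fact $u \in P'$, together with $\I(p) \neq \emptyset$ (the only place weak visibility enters your proof), forces $\win(p,\LA(p))$ onto $\eb$. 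As a by-product, since your $P'$ is $\wv^-_l(p)$, your separation claim is exactly the left half of \cref{F-lem:shadow}, namely $\vis(p) \subseteq \wv^+_l(p)$.

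Two steps should be tightened.

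\emph{The claim $u \in P'$.} Characterize $P'$ as the subpolygon lying, near $\mu$, on the side of the line $p\mu$ that contains the exterior angle at $\mu$. A shortest path bends toward that exterior angle, so its edge after $\mu$ leaves the chord into $P'$. The segment $\seg{\mu w}$ is itself a shortest path, so uniqueness of shortest paths in a simple polygon makes $\spl(p) \cap \seg{\mu w}$ connected. It is therefore equal to $\{\mu\}$, and the rest of the path, including $u$, stays in $P'$.

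\emph{The separation sentence.} You say a sight line into $P'$ would have to cross $\seg{\mu w}$ in its relative interior, ``which is impossible''. This is not right as written, since the sight line along the ray does meet that relative interior. The correct reason has two parts. Any sight line from $p$ that meets the chord lies on the ray. The ray leaves $\wv$ at $w$, because it crosses an edge transversally there.

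That step, like your claim that $\spl(p)$ genuinely bends at its first reflex vertex, needs $p$ to lie on no line through two vertices of $\wv$. Otherwise the ray could graze a second reflex vertex at $w$ and continue past it. The observation is stated for every $p \in \wv$, but the paper's general-position assumption at that point only concerns vertices. The stronger assumption is imposed only on $S$, in \cref{F-sec:disws-wvp}. You should state this hypothesis explicitly.
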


Throughout the paper, we assume \emph{general position}: no three vertices of
$\wv$ are collinear.

\section{\textsc{Discrete Witness Set Problem} in $\wvp$}
\label{F-sec:disws-wvp}

Restricting the witnesses to a finite input set turns the witness
problem into a maximum independent set problem on a trapezoid graph,
and the conflict relation among the input points is then decided by four numbers per point. We prove this correspondence and derive a
near-linear-time exact algorithm from it; when the point set is
proportional to the polygon the running time is $\Theta(n \log n)$,
which we show is optimal in the algebraic decision-tree model.

Throughout this section, $\wv$ is a simple polygon with vertices
$v_0, v_1, \dots, v_n$ in clockwise order that is weakly visible
from its base edge $\eb = v_0 v_n$. The vertex $v_0$ is the left
endpoint of $\eb$, and the vertex $v_n$ is the right endpoint, so
$v_0, v_1, \dots, v_n$ is the chain of $\wv$ other than $\eb$. We
call this chain the \emph{upper chain}. For two points $x$ and $y$
of the upper chain we write $x \prec y$ when $x$ precedes $y$ in the
clockwise order. We orient $\eb$ from $v_0$ to $v_n$, and for two
points $x$ and $y$ of $\eb$ we write $x < y$ when $x$ precedes $y$
in this orientation. The words \emph{left} and \emph{right} always
refer to these two orders. We also write $u = v_0$ and $v = v_n$ for
the two endpoints of $\eb$, and $\spl(p)$, $\spr(p)$ for the shortest
paths of \cref{F-sec:prelims} from $p$ to $u$ and to $v$.

\begin{definition}[Discrete Witness Set]
\label{F-def:disws}
An instance of \textsc{Discrete Witness Set} consists of a weak
visibility polygon $\wv$ and a finite set $S = \{s_1, \dots, s_m\}$ of
points of $\wv$. A subset $Q \subseteq S$ is a \emph{witness set} if
$\vis(w) \cap \vis(w') = \emptyset$ for every two distinct points $w$
and $w'$ of $Q$. The problem asks for a witness set of maximum
cardinality. This is a witness set $\ws(\wv, S)$ in the notation of
\cref{F-sec:prelims}, and the problem asks for its size $\witN(\wv, S)$.
\end{definition}

In addition to the general position assumed in \cref{F-sec:prelims}, no point of $S$ lies on a line through two vertices of $\wv$.

\begin{figure}[ht!]
    \centering
    \includegraphics[scale=0.9]{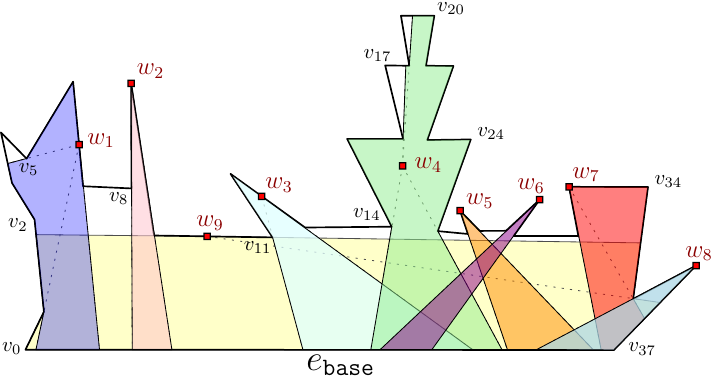}
    \caption{The \textsc{Discrete Witness Set} problem. Here
    $S = \{w_1, \dots, w_9\}$, and not all points of $S$ lie on
    $\bd(\wv)$. The subset $Q = \{w_1, w_2, w_4, w_5, w_7\}$ is a valid
    witness set in $\wv$, and its visibility regions are pairwise
    disjoint, so $\witN(\wv, S) = 5$.}
    \label{F-fig:WVPWitness}
\end{figure}


The witness sets of an instance are the independent sets of one
graph, which we now name.

\begin{definition}[Visibility Intersection Graph]
\label{F-def:conflict-graph}
The \emph{visibility intersection graph} (VIG) $G_S$ of an instance $(\wv, S)$ has vertex
set $S$, and two distinct points $s$ and $s'$ of $S$ are adjacent if
and only if $\vis(s) \cap \vis(s') \neq \emptyset$.
\end{definition}

A subset of $S$ is a witness set if and only if it is an independent
set of $G_S$. The whole difficulty, therefore, lies in the structure
of $G_S$. We show in \cref{F-subsec:disws-correctness} that $G_S$ is
a trapezoid graph, and that a trapezoid representation of it is read
off from the anchors of the points of $S$.

\subsection{Windows and Base Intervals}
\label{F-subsec:disws-anchors}

We use the anchors $\LA(p)$ and $\RA(p)$ and the base interval
$\I(p) = [\ell(p), r(p)]$ of \cref{F-sec:prelims}, together with the
boundary segments $\Lb(p) = \seg{p\,\ell(p)}$ and
$\Rb(p) = \seg{p\,r(p)}$. By \cref{F-obs:anchor-collinear} the point $p$
has a window at each of its two anchors, and we write
$\LC(p) = \seg{\LA(p)\,\ell(p)}$ and $\RC(p) = \seg{\RA(p)\,r(p)}$ for
them, see \cref{F-fig:anchor}. Both lie in $\vis(p)$, and by \cref{F-obs:window} each is a chord of
$\wv$ unless its anchor is an endpoint of $\eb$, in which case it is the
single point $v_0$ or $v_n$. These two are the leftmost and the
rightmost windows of $\vis(p)$, and they are the only ones the analysis
uses.


\begin{figure}[H]
    \centering
    \includegraphics[width=0.9\linewidth]{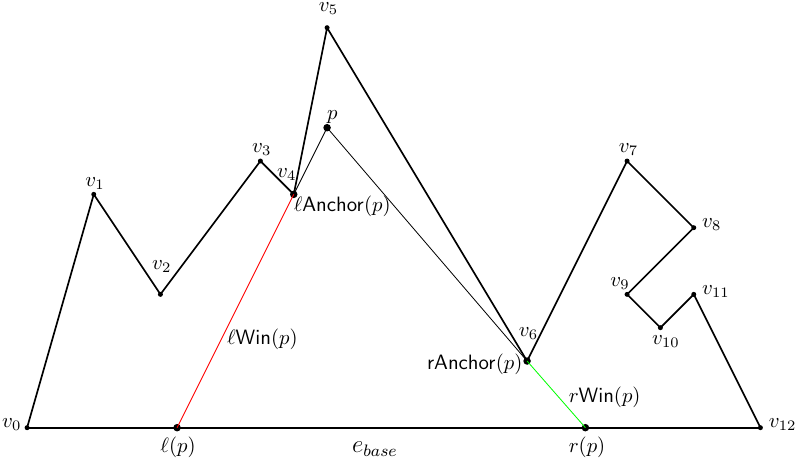}
    \caption{Illustrative figure for Anchors and Windows (Definitions~\ref{F-def:prelim-anchors},~\ref{F-def:prelim-window})}
    \label{F-fig:anchor}
\end{figure}

General position gives $\ell(p) < r(p)$ for every point $p$ of $S$.
Indeed, an equality would place $\LA(p)$ and $\RA(p)$ on a
common line through $p$, and these are two vertices of $\wv$.

By \cref{F-obs:window} the right window of $p$ is a chord of $p$ is a chord of
$\wv$ whenever it is not a single point. It therefore splits $\wv$
into two closed subpolygons. We write $\wv^-_r(p)$ for the one
whose boundary arc contains $v_0$ and $\wv^+_r(p)$ for the one whose
boundary arc contains $v_n$, and we set $\wv^-_r(p) = \wv$ when
$r(p) = v_n$. Symmetrically, $\LC(p)$ splits $\wv$ into
$\wv^-_l(p)$ and $\wv^+_l(p)$, and we set $\wv^+_l(p) = \wv$ when
$\ell(p) = v_0$.

We record one fact about chords that the proofs below reuse.

\begin{observation}
\label{F-obs:chord-cross}
A chord of $\wv$ splits $\wv$ into two subpolygons. A second chord
crosses it if and only if the two endpoints of the second chord lie on
the two different boundary arcs. Equivalently, two chords cross if and
only if their four endpoints interleave in the boundary order.
\end{observation}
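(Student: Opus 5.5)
The statement to prove is \cref{F-obs:chord-cross}: a chord $c=\seg{ab}$ splits $\wv$ into two subpolygons, and a second chord $c'=\seg{a'b'}$ crosses $c$ if and only if its endpoints lie on different boundary arcs of $c$, that is, if and only if $a,a',b,b'$ interleave in the boundary order.

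I would argue directly from the Jordan curve theorem applied to simple closed polygonal curves.

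\emph{Splitting.} The boundary $\bd(\wv)$ is a simple closed curve. The points $a$ and $b$ cut it into two arcs $A_1$ and $A_2$. Since the relative interior of $c$ lies in the interior of $\wv$ and meets $\bd(\wv)$ nowhere, the curves $A_1\cup c$ and $A_2\cup c$ are simple closed polygonal curves. By the Jordan curve theorem each bounds a closed region, call them $P_1$ and $P_2$. Then $P_1\cup P_2=\wv$, and $P_1\cap P_2=c$. To see this, take any interior point of $\wv$ not on $c$ and connect it to a point just off $c$ on one side; the side determines which region contains it. Hence the interior of $\wv$ minus $c$ has exactly two components, $\I(P_1)$ and $\I(P_2)$ in the interior sense, lying on the two sides of $c$.

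\emph{If the endpoints lie on different arcs, the chords cross.} Suppose $a'\in A_1\setminus\{a,b\}$ and $b'\in A_2\setminus\{a,b\}$. The relative interior of $c'$ lies in the interior of $\wv$. It is a connected set that accumulates at $a'\in P_1\setminus c$ and at $b'\in P_2\setminus c$. Near $a'$ it lies in $P_1\setminus c$, and near $b'$ it lies in $P_2\setminus c$. Since $P_1\setminus c$ and $P_2\setminus c$ are relatively open in $\wv\setminus c$ and disjoint, connectedness forces $c'$ to meet $c$. The meeting point is in the relative interior of both segments, because the endpoints are distinct boundary points and the relative interiors avoid the boundary. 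So the chords cross.

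\emph{If the endpoints lie on the same arc, the chords do not cross.} Suppose both $a'$ and $b'$ lie in $A_1$. If $c'$ met $c$ transversally at an interior point, then, being a straight segment, it meets the line of $c$ only once (it cannot be collinear with $c$ by general position, or else it would coincide with $c$). It would then pass from $P_1$ into the interior of $P_2$ and could never return. But $P_2$ is bounded by $A_2\cup c$, and $c'$ cannot exit through $A_2$ because its relative interior avoids $\bd(\wv)$. It cannot exit through $c$ a second time either. So its endpoint $b'$ would lie in $P_2$, and off $c$ it would lie on $A_2$, contradicting $b'\in A_1\setminus\{a,b\}$.

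\emph{Interleaving.} Finally, $a'$ and $b'$ lie on different arcs exactly when, going around the boundary, we meet $a,a',b,b'$ in alternating order. This is just a restatement of the definition of the two arcs.

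The main care is needed at degenerate cases: shared endpoints, and endpoints coinciding with $a$ or $b$. There I would adopt the convention that such chords do not cross, which is consistent with the fact that interleaving requires four distinct points. The rest is routine.
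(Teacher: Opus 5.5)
The paper gives no proof of this observation. It states it as a standard fact and applies it only to chords with four distinct endpoints, in \cref{F-lem:anchor-order} and \cref{F-lem:separation}. Your Jordan-curve and connectedness argument is the natural justification, and it is correct. Three points need tightening.

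\emph{The local two-sidedness of $c$.} Your justification of $P_1 \cap P_2 = c$, and of the two sides of $c$ lying in different subpolygons, is only sketched. The clean tool is the $\theta$-curve theorem: the arcs $A_1$, $A_2$, $c$ between $a$ and $b$ split the plane into three faces, bounded by $A_1\cup c$, $A_2\cup c$ and $A_1\cup A_2$. Alternatively, argue locally. A small disk around a relative-interior point $y$ of $c$ lies in the interior of $\wv$. Each of its two open half-disks off $c$ lies in $P_1\setminus c$ or in $P_2\setminus c$ by connectedness. They cannot both lie in $P_1$, because $y$ lies on the Jordan curve $A_1\cup c$ that bounds $P_1$. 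The same-arc direction rests on exactly this fact, where you assert that $c'$ passes from $P_1$ into $P_2$ at the crossing point.

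\emph{Non-collinearity.} Two distinct chords are non-collinear not because of general position. General position concerns only vertices, and chord endpoints such as $\ell(p)$ need not be vertices. The reason is the chord property itself. If two chords were collinear and overlapping, an endpoint of one would lie in the relative interior of the other, hence in the interior of $\wv$, which is impossible for a point of $\bd(\wv)$.

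\emph{Disjointness.} The paper uses the converse direction in the form ``non-interleaving chords are disjoint.'' Your argument only rules out a crossing at interior points. Add one line: with four distinct endpoints, a common point cannot be an endpoint of either chord, since endpoints lie on $\bd(\wv)$ and relative interiors do not. So any common point would be such a crossing, and the chords are in fact disjoint.
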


\subsection{The Trapezoid Representation}
\label{F-subsec:disws-construction}

The construction uses two horizontal parallel lines. Let $L_b$ be
the lower line and let $L_t$ be the upper line. Let
$\beta \colon \eb \to L_b$ be an isometry that preserves the left to
right order of $\eb$, and let $\tau$ map the vertex $v_i$ to the
point of abscissa $i$ on $L_t$, for $i = 0, 1, \dots, n$. The map
$\tau$ preserves the clockwise order of the upper chain by
construction.

\begin{definition}[Trapezoid of a point]
\label{F-def:trapezoid}
The \emph{trapezoid} $T(p)$ of a point $p \in S$ is the convex hull
of the upper side $\seg{\tau(\LA(p))\,\tau(\RA(p))}$ and
the lower side $\seg{\beta(\ell(p))\,\beta(r(p))}$.
\end{definition}


\begin{figure}[ht!]
    \centering
    \includegraphics[
    width=1.0\textwidth,
    height=0.21\textheight
]{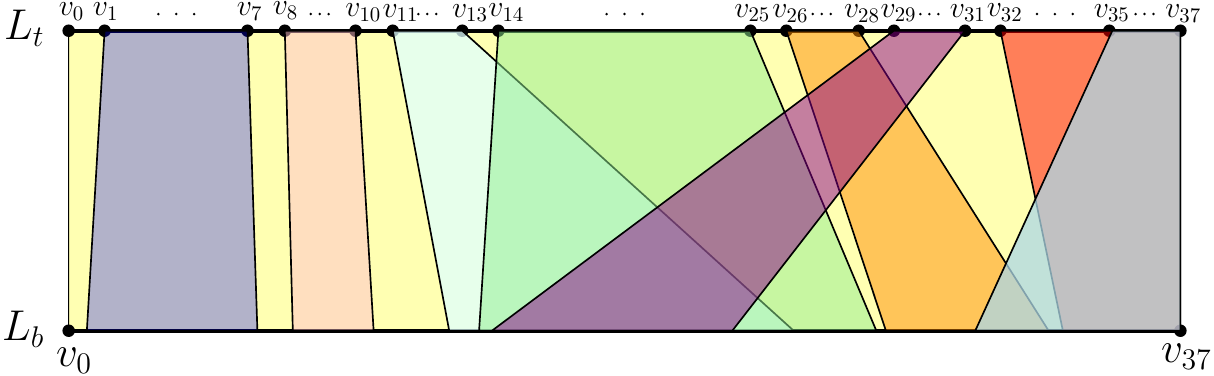}
    \caption{The trapezoid representation of \cref{F-def:trapezoid} for the
    instance of \cref{F-fig:WVPWitness}. The base edge is drawn as the lower
    line $L_b$ and the upper chain as the upper line $L_t$; by
    \cref{F-thm:trapezoid-graph} two points of $S$ are witnesses exactly when their trapezoids are disjoint.}
    \label{F-fig:WVPTrap}
\end{figure}

\Cref{F-fig:WVPTrap} shows the representation built from the instance of
\cref{F-fig:WVPWitness}.

\Cref{F-def:trapezoid} is legitimate only if the upper side has its
left endpoint before its right endpoint. The following lemma
supplies this, and the inequality $\ell(p) < r(p)$ supplies the
same for the lower side.

\begin{lemma}
\label{F-lem:anchor-order}
Every point $p \in S$ satisfies $\LA(p) \prec \RA(p)$.
\end{lemma}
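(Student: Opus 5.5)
The plan is to compare the two base-interval endpoints with the two anchors through the collinearities of \cref{F-obs:anchor-collinear}. We argue by contradiction, dispose of the degenerate cases first, and then use a crossing argument for chords via \cref{F-obs:chord-cross}.

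\emph{Degenerate cases.} If $\LA(p)=v_0$, then $\LA(p)\preceq$ every vertex of the upper chain, and $\RA(p)\ne v_0$ because $\spr(p)$ ends at $v_n$ and $v_0$ is not reflex. The case $\RA(p)=v_n$ is symmetric. Equality $\LA(p)=\RA(p)$ is also impossible. In that case the segments $\seg{p\,\ell(p)}$ and $\seg{p\,r(p)}$ would both pass through the same vertex $\mu$ and continue past it, so $\ell(p)=\win(p,\mu)=r(p)$. This contradicts $\ell(p)<r(p)$, which was derived just before from general position.

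\emph{Main case.} Both anchors are reflex vertices on the upper chain. Suppose for contradiction that $\RA(p)\prec\LA(p)$. Consider the windows $\LC(p)=\seg{\LA(p)\,\ell(p)}$ and $\RC(p)=\seg{\RA(p)\,r(p)}$; by \cref{F-obs:window} both are chords of $\wv$. Take the boundary in clockwise order $v_0,\dots,v_n$ followed by the base from $v_n$ back to $v_0$. Along this order the endpoints appear as $\RA(p)\prec\LA(p)$ on the upper chain, and then $r(p)$ before $\ell(p)$ when the base is traversed from $v_n$ to $v_0$, since $\ell(p)<r(p)$. The cyclic order is therefore $\RA(p),\LA(p),r(p),\ell(p)$. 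This interleaves the pairs $\{\LA(p),\ell(p)\}$ and $\{\RA(p),r(p)\}$, so by \cref{F-obs:chord-cross} the two windows cross at a point $x$ interior to both.

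\emph{Deriving the contradiction.} Both windows lie on rays emanating from $p$: $\LC(p)\subset\Lb(p)$ and $\RC(p)\subset\Rb(p)$. Two distinct rays from a common point $p$ meet only at $p$. They are distinct because they end at $\ell(p)\neq r(p)$ on the base, and the base lies on a line not through $p$ unless $p$ itself lies on $\eb$, which can be handled separately. Hence $x=p$. But $p$ is not in the relative interior of $\LC(p)$, since on its segment $p$ precedes $\LA(p)$, which precedes $\ell(p)$, and the window starts at $\LA(p)$. This is a contradiction, so $\LA(p)\prec\RA(p)$.

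The main obstacle is the justification of the cyclic boundary order in the main case, namely that $\ell(p)<r(p)$ translates into the correct position on the closing base arc relative to the clockwise orientation. I would check this by fixing the orientation: the clockwise traversal $v_0\to v_n$ along the upper chain returns along $\eb$ from $v_n$ to $v_0$, which is consistent with the paper's convention of the polygon lying above the base. The case $p\in\eb$ should be handled separately: then $\vis(p)\cap\eb=\eb$, the anchors are $v_0$ and $v_n$, and the claim is immediate.
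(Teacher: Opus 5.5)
Your proof is correct and follows essentially the same route as the paper: you dispose of the cases where an anchor is a base endpoint, assume $\RA(p)\prec\LA(p)$, read off the interleaved boundary order $\RA(p),\LA(p),r(p),\ell(p)$, and reach a contradiction with \cref{F-obs:chord-cross} because the two windows lie on $\Lb(p)$ and $\Rb(p)$, which meet only at $p$. The only difference is the order of steps: the paper first deduces that the windows are disjoint, which also gives four distinct endpoints and so covers your separate case $\LA(p)=\RA(p)$, and then gets the contradiction from the crossing, whereas you take the crossing point and show it would have to be $p$.
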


\begin{proof}
If $\LA(p) = v_0$, then the claim holds because $\RA(p)$
is a reflex vertex or is $v_n$, and no such vertex is $v_0$. The
case $\RA(p) = v_n$ is symmetric. So assume that both windows
are chords. The windows are contained in $\Lb(p)$ and in
$\Rb(p)$ respectively, these two segments meet only at $p$, and
neither window contains $p$. Hence, the two windows are disjoint,
and their four endpoints are distinct. Read $\bd(\wv)$ clockwise
from $v_0$. The upper chain comes first in the order
$v_0, v_1, \dots, v_n$, and the points of $\eb$ come afterward from
right to left. Suppose that $\RA(p) \prec \LA(p)$. Since
$\ell(p) < r(p)$, the four endpoints then appear in the clockwise
order $\RA(p)$, $\LA(p)$, $r(p)$, $\ell(p)$. The four
endpoints interleave in this order, so the two windows cross by
\cref{F-obs:chord-cross}. This contradicts their disjointness.
\end{proof}

\subsection{The Algorithm}
\label{F-subsec:disws-algorithm}

Here, we present the pseudocode of the algorithm. 

\begin{algorithm}[htbp]
\caption{Maximum witness subset of a finite point set in a weak
 visibility polygon.}
\label{F-alg:disws-wvp}
\KwIn{A weak visibility polygon $\wv$ with vertices
  $v_0, \dots, v_n$ in clockwise order and base edge
  $\eb = v_0 v_n$, and a set $S = \{s_1, \dots, s_m\}$ of points of
  $\wv$.}
\KwOut{A maximum witness subset $Q \subseteq S$.}
\BlankLine
Triangulate $\wv$\;
Build the shortest path map of $\wv$ with source $v_0$ and the
shortest path map with source $v_n$, each with a point location
structure\;
\ForEach{$s \in S$}{
  $\LA(s) \gets$ the second point of $\spl(s)$, read from
  the map with source $v_0$\;
  $\RA(s) \gets$ the second point of $\spr(s)$, read from
  the map with source $v_n$\;
  \lIf{$\LA(s) = v_0$}{$\ell(s) \gets v_0$}
  \lElse{$\ell(s) \gets$ the point in which the ray from $s$ through
    $\LA(s)$ meets $\eb$}
  \lIf{$\RA(s) = v_n$}{$r(s) \gets v_n$}
  \lElse{$r(s) \gets$ the point in which the ray from $s$ through
    $\RA(s)$ meets $\eb$}
}
Fix the two parallel lines $L_b$ and $L_t$, the isometry
$\beta \colon \eb \to L_b$, and the map $\tau$ of
\cref{F-subsec:disws-construction}\;
\lForEach{$s \in S$}{$T(s) \gets$ the trapezoid of
  \cref{F-def:trapezoid}}
$Q \gets$ a maximum independent set of the trapezoid graph
represented by $\{T(s) : s \in S\}$, computed by the algorithm
of~\cite{FelsnerMW97}\;
\Return $Q$\;
\end{algorithm}

\subsection{Correctness}
\label{F-subsec:disws-correctness}

The correctness rests on two facts. The first one confines the
visibility polygon of a point to one side of each of its two
windows. The second one turns this confinement into a
characterization of the conflict relation.

\begin{lemma}
\label{F-lem:shadow}
Every point $p \in S$ satisfies
$\vis(p) \subseteq \wv^-_r(p) \cap \wv^+_l(p)$.
\end{lemma}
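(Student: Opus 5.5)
By symmetry it suffices to prove $\vis(p) \subseteq \wv^-_r(p)$. The inclusion in $\wv^+_l(p)$ then follows by the mirror argument with $\LC(p)$, $\LA(p)$ and $\ell(p)$. If $r(p) = v_n$ then $\wv^-_r(p) = \wv$ and there is nothing to prove. So assume $\RC(p)$ is a chord with endpoints $\RA(p)$ and $r(p)$.

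The first step is to place $p$ on the correct side, that is, to show $p \in \wv^-_r(p)$. By \cref{F-obs:anchor-collinear}, the segment $\Rb(p)$ runs from $p$ through $\RA(p)$ to $r(p)$, and the piece $\seg{p\,\RA(p)}$ is the first link of $\spr(p)$. I will show that this piece lies in $\wv^-_r(p)$ locally at $\RA(p)$. The ray from $p$ enters the interior of $\wv$ at the reflex vertex $\RA(p)$ (\cref{F-def:prelim-window}), and the two edges at $\RA(p)$ lie on one side of the line $p\,\RA(p)$. Since the shortest path to $v_n$ bends around $\RA(p)$ toward $v_n$, the subpath of $\spr(p)$ after $\RA(p)$ lies in $\wv^+_r(p)$. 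The open segment from $p$ to $\RA(p)$ lies on the opposite side of the chord, since it is the extension of the chord beyond $\RA(p)$. Therefore $p \in \wv^-_r(p)$, and $p$ lies off the chord by general position.

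The second step is the main claim. Suppose some $q \in \vis(p)$ lies in $\wv^+_r(p)$ but not on $\RC(p)$. Then $\seg{pq}$ lies in $\wv$ and joins the two sides of the chord, so it crosses $\RC(p)$ at some point $x$ in its relative interior (the endpoints are vertex or base points, handled by general position). I derive a contradiction from the fact that $r(p)$ is the right endpoint of $\I(p)$. Consider the triangle formed by $p$, $q$ and $r(p)$. Rotating the ray from $p$ slightly clockwise past $\seg{p\,r(p)}$, toward $v_n$, the ray still passes through $x$'s side, so it reaches the base to the right of $r(p)$. Its segment up to the base stays inside $\wv$, because it lies in the region swept between $\seg{p\,r(p)}$ and $\seg{pq}$, which contains no boundary of $\wv$ near the chord. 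This would give a base point $g > r(p)$ that sees $p$, contradicting $\I(p) = [\ell(p), r(p)]$.

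The hard part is making the sweep precise. One has to show that the portion of $\wv^+_r(p)$ seen by $p$ through the chord forces visibility of the base beyond $r(p)$. My first attempt uses the weak visibility of $\wv$. The point $q$ is seen from some base point $g$, and $g$ must lie at or beyond $r(p)$, since $\wv^+_r(p)$ meets $\eb$ only in $[r(p), v_n]$. I then apply a planarity argument, using \cref{F-obs:chord-cross}, to the closed curve $p \to q \to g \to r(p) \to p$. The reflex vertex $\RA(p)$ would have to lie inside the quadrilateral region bounded by this curve, and the polygon boundary through $\RA(p)$ connects $\RA(p)$ to $v_0$ or to $v_n$. That connection must cross one of the four segments, all of which lie in $\wv$, and that is impossible for a simple polygon. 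This yields the contradiction and proves the inclusion.
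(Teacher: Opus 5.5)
Your reduction to $\vis(p)\subseteq\wv^-_r(p)$ and your two-step plan match the paper: first put $p$ on the minus side, then rule out points of $\vis(p)$ strictly on the plus side. The second step, however, is the whole content of the lemma, and it is not established. You say yourself that the sweep is not made precise. The fallback planarity argument also fails, for three reasons.
\begin{itemize}
\item A base point $g$ that sees $q$ need not satisfy $g \ge r(p)$. Base points just left of $r(p)$ can look through the chord $\RC(p)$ into $\wv^+_r(p)$, because a chord is not a wall.
\item $\RA(p)$ lies on the side $\seg{p\,r(p)}$ of your quadrilateral, not inside it.
\item Segments contained in $\wv$ may touch $\bd(\wv)$, so the boundary through $\RA(p)$ meeting a side gives no contradiction.
\end{itemize}
More fundamentally, both arguments picture $\seg{pq}$ crossing the chord transversally at $x$, with a nondegenerate triangle $p,q,r(p)$. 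That configuration cannot occur.

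The missing idea is a one-line collinearity observation, which is exactly what the paper uses. The chord $\RC(p)$ lies on the line through $p$ and $\RA(p)$. Also $p\notin\RC(p)$, since $p$, $\RA(p)$, $r(p)$ occur in this order. So if $\seg{pq}$ meets $\RC(p)$ at a point $x$, then $x\neq p$, and the whole segment $\seg{pq}$ lies on that line, on the ray from $p$ through $\RA(p)$. This ray leaves $\wv$ at $r(p)$, where it crosses $\eb$ transversally. Hence $q\in\Rb(p)=\seg{p\,\RA(p)}\cup\RC(p)$. The half-open segment from $p$ to $\RA(p)$ is connected, misses $\RC(p)$ and contains $p\in\wv^-_r(p)$, so it lies in $\wv^-_r(p)\setminus\RC(p)$. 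Therefore $q\notin\wv^+_r(p)\setminus\RC(p)$, which is the required contradiction.

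The same observation also repairs your first step. There, the justification ``it is the extension of the chord beyond $\RA(p)$'' does not by itself decide which side of the chord the segment $\seg{p\,\RA(p)}$ enters, and the shortest-path remark does not settle it either. The paper's fix is as follows. Pick $b\in\eb$ with $\ell(p)<b<r(p)$, so that $\seg{pb}\subseteq\wv$. A common point of $\seg{pb}$ and $\RC(p)$ would put $b$ on the line through $p$ and $\RA(p)$, forcing $b=r(p)$. So $p$ lies in the same subpolygon as $b$, namely $\wv^-_r(p)$.
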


\begin{proof}
We prove the inclusion $\vis(p) \subseteq \wv^-_r(p)$, and the other
inclusion is symmetric. The claim is trivial when $r(p) = v_n$,
so assume that $\RC(p)$ is a chord. Write $C = \RC(p)$.

We first place $p$ itself. Choose a point $b$ of $\eb$ with
$\ell(p) < b < r(p)$, which exists because $\ell(p) < r(p)$. The
point $b$ lies in $\wv^-_r(p)$ and not on $C$. The segment
$\seg{pb}$ lies in $\wv$, since $b$ is visible from $p$. It does
not meet $C$. Indeed, a common point of $\seg{pb}$ and $C$ would be
collinear with $p$ and with $\RA(p)$, and it would force $b$
onto the line through $p$ and $\RA(p)$, whence $b = r(p)$.
Hence $p$ and $b$ lie in the same subpolygon, and $p \in \wv^-_r(p)$.

Now let $q$ be a point of $\vis(p)$ that lies in the interior of
$\wv^+_r(p)$. The segment $\seg{pq}$ lies in $\wv$ and joins
$\wv^-_r(p)$ to the interior of $\wv^+_r(p)$, so it meets $C$ in
some point $y$. The three points $p$, $y$ and $q$ are collinear,
and $y$ lies on the line through $p$ and $\RA(p)$. Hence $q$
lies on that line too. The part of that line beyond $r(p)$ leaves
$\wv$, because the line meets $\eb$ transversally at $r(p)$ under
general position. Therefore $q$ lies on $\Rb(p)$, and no point
of $\Rb(p)$ lies in the interior of $\wv^+_r(p)$. This
contradiction shows that $\vis(p)$ misses the interior of
$\wv^+_r(p)$, which is the claim.
\end{proof}

\begin{lemma}
\label{F-lem:separation}
Two points $p$ and $q$ of $S$ satisfy
$\vis(p) \cap \vis(q) = \emptyset$ if and only if one of the following two conditions holds:
\begin{enumerate}
\item[(i)] $r(p) < \ell(q)$ and $\RA(p) \prec \LA(q)$;
\item[(ii)] $r(q) < \ell(p)$ and $\RA(q) \prec \LA(p)$.
\end{enumerate}
\end{lemma}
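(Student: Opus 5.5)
The plan is to prove the two directions separately. Both reduce to the relative position of the two chords $\RC(p)$ and $\LC(q)$, analysed with \cref{F-obs:chord-cross} and the clockwise reading of $\bd(\wv)$ from the proof of \cref{F-lem:anchor-order}. In that reading the upper chain $v_0,\dots,v_n$ comes first, and then $\eb$ from right to left. By symmetry it suffices to treat condition (i) in the ``if'' direction, and the case $r(p)<\ell(q)$ in the ``only if'' direction.

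\emph{Sufficiency.} Assume (i).

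First, neither window is a single point. If $\RA(p)=v_n$ then $r(p)=v_n$, which is impossible since $r(p)<\ell(q)$. Symmetrically, $\LA(q)\neq v_0$. So both windows are chords.

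Next, locate $\RC(p)$ relative to $\LC(q)$. The boundary arc of $\wv^+_l(q)$ runs clockwise from $\LA(q)$ through $v_n$ and back along $\eb$ to $\ell(q)$. This arc contains neither endpoint of $\RC(p)$:
\begin{itemize}
\item $\RA(p)\prec\LA(q)$, so $\RA(p)$ lies before the start of the arc on the upper chain;
\item $r(p)<\ell(q)$, so $r(p)$ lies to the left of the arc's end on $\eb$.
\end{itemize}
Also, no endpoint of $\RC(p)$ equals an endpoint of $\LC(q)$, since the inequalities in (i) are strict. The clockwise order of the four endpoints is therefore $\RA(p),\LA(q),\ell(q),r(p)$, which is nested rather than interleaved. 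By \cref{F-obs:chord-cross} the chords do not cross, and $\RC(p)\subseteq\wv^-_l(q)$ with $\RC(p)\cap\LC(q)=\emptyset$. Consequently $\wv^+_l(q)\subseteq\wv^+_r(p)$.

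Now apply \cref{F-lem:shadow}:
\[
\vis(p)\cap\vis(q)\subseteq \wv^-_r(p)\cap\wv^+_l(q)\subseteq \wv^-_r(p)\cap\wv^+_r(p)\cap\wv^+_l(q).
\]
The intersection $\wv^-_r(p)\cap\wv^+_r(p)$ is the chord $\RC(p)$. Since $\RC(p)\subseteq\wv^-_l(q)$, intersecting further with $\wv^+_l(q)$ leaves at most $\RC(p)\cap\LC(q)$, which is empty. Hence $\vis(p)\cap\vis(q)=\emptyset$.

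\emph{Necessity.} Suppose neither (i) nor (ii) holds; we exhibit a common visible point.

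If $\I(p)\cap\I(q)\neq\emptyset$, any base point $g$ in both intervals is seen by both $p$ and $q$, so $g\in\vis(p)\cap\vis(q)$. Otherwise the intervals are disjoint, say $r(p)<\ell(q)$. Since (i) fails, $\LA(q)\preceq\RA(p)$. As before, both windows are chords, because $r(p)\ne v_n$ and $\ell(q)\neq v_0$.

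If $\LA(q)=\RA(p)$, this vertex lies in $\RC(p)\cap\LC(q)\subseteq\vis(p)\cap\vis(q)$. If $\LA(q)\prec\RA(p)$, the clockwise order of the endpoints is $\LA(q),\RA(p),\ell(q),r(p)$. These interleave, so by \cref{F-obs:chord-cross} the chords cross at some point $y$. By \cref{F-obs:anchor-collinear}, $\RC(p)\subseteq\Rb(p)\subseteq\vis(p)$, and likewise $\LC(q)\subseteq\vis(q)$. Hence $y\in\vis(p)\cap\vis(q)$.

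The main obstacle is the sufficiency direction. It requires a careful justification that the nested endpoint order really yields $\wv^+_l(q)\subseteq\wv^+_r(p)$ with the two chords disjoint, including that endpoints do not coincide. It also requires correct bookkeeping of which arc, and hence which side ($v_0$ or $v_n$), each subpolygon contains. The degenerate cases where an anchor is $v_0$ or $v_n$ need to be excluded explicitly, as done above.
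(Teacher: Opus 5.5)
Your proposal is correct and follows the paper's proof essentially step by step. Sufficiency uses the same non-interleaved endpoint order $\RA(p),\LA(q),\ell(q),r(p)$ together with \cref{F-lem:shadow}, and necessity uses the same three cases: overlapping base intervals, a shared anchor, and crossing windows. The only difference is cosmetic: you place $\RC(p)$ inside $\wv^-_l(q)$, whereas the paper places $\LC(q)$ inside $\wv^+_r(p)$. Your step ``consequently $\wv^+_l(q)\subseteq\wv^+_r(p)$'' deserves a one-line justification. Either note that $\wv^+_l(q)$ is connected, misses $\RC(p)$, and contains $v_n\in\wv^+_r(p)$, or use the paper's argument that the whole boundary of $\wv^+_l(q)$ lies in $\wv^+_r(p)$.
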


\begin{proof}
We record first that the inequality $r(p) < \ell(q)$ forces both
windows involved to be chords. It gives $r(p) \neq v_n$, since
$\ell(q)$ lies on $\eb$, and it gives $\ell(q) \neq v_0$ for the same
reason.

Assume condition~(i), and write $C = \RC(p)$ and $D = \LC(q)$.
The four endpoints of $C$ and $D$ are distinct. Read $\bd(\wv)$
clockwise from $v_0$ as in the proof of \cref{F-lem:anchor-order}.
The two hypotheses of~(i) place the four endpoints in the clockwise
order $\RA(p)$, $\LA(q)$, $\ell(q)$, $r(p)$. The chord
$C$ joins the first endpoint to the fourth, and the chord $D$ joins
the second to the third, so the two endpoint pairs do not
interleave and the chords are disjoint. Both endpoints of $D$ lie
on the boundary arc of $C$ that contains $v_n$, so
$D \subseteq \wv^+_r(p)$. The boundary of $\wv^+_l(q)$ consists of
$D$ and of the boundary arc from $\LA(q)$ to $\ell(q)$ through
$v_n$, and both lie in $\wv^+_r(p)$ and miss $C$. Hence
$\wv^+_l(q) \subseteq \wv^+_r(p)$ and
$\wv^+_l(q) \cap C = \emptyset$. The two subpolygons $\wv^-_r(p)$
and $\wv^+_r(p)$ meet exactly in $C$, so $\wv^-_r(p)$ and
$\wv^+_l(q)$ are disjoint. By \cref{F-lem:shadow} we get
$\vis(p) \subseteq \wv^-_r(p)$ and
$\vis(q) \subseteq \wv^+_l(q)$, and the two visibility polygons are
disjoint. Condition~(ii) is symmetric.

Conversely, assume that neither condition holds. We give a point of
$\vis(p) \cap \vis(q)$ in each remaining case.

Suppose first that $\I(p) \cap \I(q) \neq \emptyset$. A common point
of the two base intervals is visible from $p$ and from $q$, and the two visibility polygons meet.

Suppose next that $\I(p) \cap \I(q) = \emptyset$, and assume without
loss of generality that $r(p) < \ell(q)$. Both windows $\RC(p)$
and $\LC(q)$ are then chords. Since condition~(i) fails, we have
$\LA(q) \preceq \RA(p)$. If $\LA(q) = \RA(p)$,
then this common vertex lies on $\Rb(p)$ and on $\Lb(q)$, so
it is visible from $p$ and from $q$. If
$\LA(q) \prec \RA(p)$, then the four endpoints appear in
the clockwise order $\LA(q)$, $\RA(p)$, $\ell(q)$,
$r(p)$. The four endpoints interleave, so the chords cross by
\cref{F-obs:chord-cross}. A crossing point lies on $\Rb(p)$ and on
$\Lb(q)$, so it is visible from $p$ and from $q$. In both cases, the two
visibility polygons meet, which completes the proof.
\end{proof}


\begin{remark}
\label{F-rem:intervals-insufficient}
The base intervals alone do not decide the conflict relation. Two
points can have disjoint base intervals and still see a common
point, and by the proof above, this happens exactly when their two
windows cross. The upper side of the trapezoid records precisely
this case. Only the converse survives without it: disjoint
visibility polygons do force disjoint base intervals, because
$\I(p) = \vis(p) \cap \eb$ for every $p$.
\end{remark}

We now transfer the separation lemma to the trapezoids.

\begin{observation}
\label{F-obs:trapezoid-disjoint}
Two points $p$ and $q$ of $S$ satisfy $T(p) \cap T(q) = \emptyset$
if and only if condition~(i) or condition~(ii) of
\cref{F-lem:separation} holds.
\end{observation}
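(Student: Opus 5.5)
The observation is a statement about two convex quadrilaterals spanned between the parallel lines $L_b$ and $L_t$. My plan is to reduce disjointness of $T(p)$ and $T(q)$ to the relative position of their sides on the two lines. By \cref{F-lem:anchor-order} and the inequality $\ell(p)<r(p)$, each $T(s)$ is a genuine (possibly degenerate) trapezoid. Its upper side is $[\tau(\LA(s)),\tau(\RA(s))]$ on $L_t$ and its lower side is $[\beta(\ell(s)),\beta(r(s))]$ on $L_b$. Since $\tau$ preserves $\prec$ and $\beta$ preserves $<$, condition~(i) says exactly that both sides of $T(p)$ lie strictly to the left of the corresponding sides of $T(q)$. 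The case where an anchor is a base endpoint is harmless: $\tau(v_0)$ and $\tau(v_n)$ are the extreme points $0$ and $n$.

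For the ``if'' direction I assume~(i). Let $a=\tau(\RA(p))<a'=\tau(\LA(q))$ on $L_t$ and $b=\beta(r(p))<b'=\beta(\ell(q))$ on $L_b$. The right edge $\seg{ab}$ of $T(p)$ and the left edge $\seg{a'b'}$ of $T(q)$ are disjoint segments, since they do not cross and have distinct endpoints on each line. Each horizontal line at height $t$ strictly between $L_b$ and $L_t$ meets $T(p)$ in an interval whose right end is the convex combination $(1-t)b+ta$. It meets $T(q)$ in an interval whose left end is $(1-t)b'+ta'$, which is strictly larger. On $L_b$ and $L_t$ themselves the sides are disjoint by hypothesis. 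Hence the trapezoids are disjoint. Condition~(ii) is symmetric.

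For the ``only if'' direction I argue the contrapositive: if neither~(i) nor~(ii) holds, the trapezoids meet. If the lower sides overlap, or the upper sides overlap, we are done. Otherwise, without loss of generality $r(p)<\ell(q)$. Since~(i) fails, $\LA(q)\preceq\RA(p)$, and the upper sides are disjoint, so in fact $\tau(\RA(q))<\tau(\LA(p))$. Then the segment from $\beta(r(p))$ to $\tau(\RA(p))$ runs from left-low to right-high relative to the segment from $\beta(\ell(q))$ to $\tau(\LA(q))$. By the intermediate value theorem applied to the difference of the two linear functions of the height $t$, these two segments cross. The crossing point lies in both trapezoids.

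The only mild obstacle is careful bookkeeping of strict versus non-strict inequalities and of degenerate trapezoids. In particular, equal anchors $\LA(q)=\RA(p)$ give a shared point on $L_t$, which counts as an intersection, consistent with the equality case in \cref{F-lem:separation}.
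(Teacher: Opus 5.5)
Your proof is correct. The ``if'' direction matches the paper's: both slice the two trapezoids horizontally, observe that the slice endpoints are affine in the height, and conclude that the gap between the right leg of $T(p)$ and the left leg of $T(q)$ is positive on the whole strip because it is positive on $L_b$ and on $L_t$.

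The ``only if'' direction is organized differently. The paper uses no cases. At each level the two slices are disjoint iff $f=a_q-b_p>0$ or $g=a_p-b_q>0$. These two sets are disjoint and relatively open, so connectedness of the strip forces one of them to be the whole strip. Positivity of an affine function on the strip then reduces to the two endpoint conditions.

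You argue by contraposition with an explicit case split: overlapping lower sides, overlapping upper sides, or disjoint sides with, say, $r(p)<\ell(q)$ and $\LA(q)\preceq\RA(p)$. In the last case you exhibit the common point as the crossing of the two inner legs, by the intermediate value theorem. The underlying tool is the same, connectedness of an interval. Your version is constructive and mirrors the window-crossing case in the converse of \cref{F-lem:separation}; the paper's is shorter and uniform.

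A small simplification is available. Your intermediate deduction $\tau(\RA(q))<\tau(\LA(p))$ is not needed. Once $r(p)<\ell(q)$ and (i) fails, $\LA(q)\preceq\RA(p)$ already makes the inner legs swap order between $L_b$ and $L_t$, or share their top endpoint. So the overlapping-upper-sides case can be absorbed into the crossing case.
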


\begin{proof}
For a level $y$ in the closed strip bounded by $L_b$ and $L_t$, the
intersection of $T(p)$ with the horizontal line of level $y$ is a
closed interval $[a_p(y), b_p(y)]$, and both endpoints are affine
functions of $y$. Put $f = a_q - b_p$ and $g = a_p - b_q$, which
are again affine. The two intervals of level $y$ are disjoint if
and only if $f(y) > 0$ or $g(y) > 0$. The two conditions never hold
together, so the sets $\{f > 0\}$ and $\{g > 0\}$ are disjoint, and
each is relatively open in the strip. The strip is connected, so
these two sets cover it only if one of them is the whole strip.
Hence $T(p)$ and $T(q)$ are disjoint if and only if $f$ is positive
on the whole strip or $g$ is positive on the whole strip. An affine
function is positive on the strip if and only if it is positive on
both bounding lines. On $L_b$ the inequality $f > 0$ reads
$r(p) < \ell(q)$, and on $L_t$ it reads
$\RA(p) \prec \LA(q)$, because $\beta$ and $\tau$ preserve
the two orders. These are the two parts of condition~(i), and $g$
gives condition~(ii) in the same way.
\end{proof}

\begin{theorem}
\label{F-thm:trapezoid-graph}
Let $(\wv, S)$ be an instance of \textsc{Discrete Witness Set}. The
visibility intersection graph $G_S$ is the intersection graph of the trapezoids
$\{T(s) : s \in S\}$. In particular, $G_S$ is a trapezoid graph, and
the family $\{T(s) : s \in S\}$ is a trapezoid representation of it.
\end{theorem}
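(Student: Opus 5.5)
The theorem follows by chaining \cref{F-lem:separation} with \cref{F-obs:trapezoid-disjoint}. For two distinct points $p,q\in S$, \cref{F-lem:separation} says that $\vis(p)\cap\vis(q)=\emptyset$ holds exactly when condition~(i) or condition~(ii) holds. \Cref{F-obs:trapezoid-disjoint} says that $T(p)\cap T(q)=\emptyset$ holds under exactly the same disjunction. Hence
\[
\vis(p)\cap\vis(q)\neq\emptyset \iff T(p)\cap T(q)\neq\emptyset .
\]
The left side is the definition of adjacency in $G_S$ (\cref{F-def:conflict-graph}). So $p\mapsto T(p)$ is an isomorphism from $G_S$ onto the intersection graph of $\{T(s):s\in S\}$.

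Before drawing that conclusion, I would check that each $T(p)$ really is a trapezoid between the two parallel lines $L_b$ and $L_t$. Its lower side $\seg{\beta(\ell(p))\,\beta(r(p))}$ lies on $L_b$ and is nondegenerate, because general position gives $\ell(p)<r(p)$ and $\beta$ is an order-preserving isometry. Its upper side $\seg{\tau(\LA(p))\,\tau(\RA(p))}$ lies on $L_t$ and is correctly oriented: by \cref{F-lem:anchor-order} we have $\LA(p)\prec\RA(p)$, and $\tau$ preserves the clockwise order of the vertices. The anchors are vertices of $\wv$ (reflex vertices or base endpoints), so $\tau$ is defined on them. Thus $T(p)$ is the convex hull of two segments, one on each line, which is precisely a trapezoid in the sense of trapezoid graphs.

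One subtlety arises when two points share an anchor. Then their upper sides touch at a common point of $L_t$, and the trapezoids intersect. I need this to match the conflict relation, so I would confirm that the strict inequalities in (i) and (ii) are what \cref{F-obs:trapezoid-disjoint} uses. They are: shared endpoints make $f$ or $g$ vanish on a bounding line, so the corresponding condition fails. This agrees with the case $\LA(q)=\RA(p)$ in the proof of \cref{F-lem:separation}, where the shared vertex is seen by both points.

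I expect no real obstacle, since the geometric work is already done in \cref{F-lem:shadow} and \cref{F-lem:separation}. The only care needed is that the two orders on $L_t$ and $L_b$ faithfully transfer the orders $\prec$ and $<$, including these tie cases. Once that is checked, the statement that $G_S$ is a trapezoid graph, with representation read off from the anchors and the base intervals, follows directly.
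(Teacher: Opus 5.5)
Your proposal is correct and follows the paper's own proof. Both sides of the equivalence are characterized by the same disjunction of conditions~(i) and~(ii), via \cref{F-lem:separation} and \cref{F-obs:trapezoid-disjoint}, and $T(p)$ is well defined by \cref{F-lem:anchor-order} together with $\ell(p)<r(p)$. Your remark on shared anchors is consistent with the strict inequalities, but those two results already cover that case, so it adds no needed step.
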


\begin{proof}
Let $s$ and $s'$ be two distinct points of $S$. By
\cref{F-lem:separation} the visibility polygons of $s$ and $s'$ are
disjoint if and only if condition~(i) or condition~(ii) holds, and
by \cref{F-obs:trapezoid-disjoint} the same two conditions
characterize disjointness of $T(s)$ and $T(s')$. Hence $s$ and $s'$
are adjacent in $G_S$ if and only if $T(s)$ and $T(s')$ intersect.
The trapezoids are well defined by \cref{F-lem:anchor-order}, and they
have their parallel sides on the two fixed lines $L_b$ and $L_t$, so
they form a trapezoid representation in the sense
of~\cite{DaganGP88,FelsnerMW97}.
\end{proof}

We show in \cref{F-subsec:disws-expressive} that this class strictly
contains the interval graphs, so the trapezoid structure is not an
artifact of the analysis.

\begin{corollary}
\label{F-cor:witness-mis}
A subset $Q \subseteq S$ is a witness set if and only if the
trapezoids $\{T(s) : s \in Q\}$ are pairwise disjoint. The maximum
witness subsets of $S$ are exactly the maximum independent sets of
the trapezoid graph of \cref{F-thm:trapezoid-graph}.
\end{corollary}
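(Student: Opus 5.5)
This corollary follows directly from \cref{F-thm:trapezoid-graph} together with \cref{F-def:disws}. I would argue the two claims in order.

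For the first claim, take any subset $Q \subseteq S$. By \cref{F-def:disws}, $Q$ is a witness set exactly when $\vis(w) \cap \vis(w') = \emptyset$ for every two distinct $w, w' \in Q$. I would then apply \cref{F-lem:separation} and \cref{F-obs:trapezoid-disjoint} to each such pair, or equivalently cite \cref{F-thm:trapezoid-graph}. Together they say that $\vis(w) \cap \vis(w') = \emptyset$ if and only if $T(w) \cap T(w') = \emptyset$. Since the witness property is a conjunction of pairwise conditions, quantifying this equivalence over all pairs of $Q$ gives the first sentence. The one point to check is that the trapezoids are well defined, which is \cref{F-lem:anchor-order} for the upper side and $\ell(p) < r(p)$ for the lower side.

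For the second claim, I would note that an independent set of the intersection graph of $\{T(s) : s \in S\}$ is, by definition, a subset $Q$ whose trapezoids are pairwise disjoint. By the first claim, these are exactly the witness subsets of $S$. So the family of witness subsets and the family of independent sets are the same family of subsets of $S$, and their maximum-cardinality members coincide. This settles the statement about maximum independent sets.

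I expect no real obstacle here. The only subtlety is that $G_S$ and the trapezoid graph must have the same vertex set $S$, one vertex per point even if two points had coincident trapezoids, so that independent sets transfer verbatim. \cref{F-thm:trapezoid-graph} already guarantees this identification.
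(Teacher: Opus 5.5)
Your proposal is correct and takes the same route as the paper, whose proof simply combines \cref{F-def:disws} with \cref{F-thm:trapezoid-graph}. Your pairwise quantification of the equivalence, and your remark that independent sets transfer verbatim because both graphs have vertex set $S$, just spell out that one-line argument.
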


\begin{proof}
Combine \cref{F-def:disws} with \cref{F-thm:trapezoid-graph}.
\end{proof}

\begin{theorem}
\label{F-thm:disws-correct}
\Cref{F-alg:disws-wvp} returns a maximum witness subset of $S$.
\end{theorem}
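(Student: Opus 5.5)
The proof checks that each stage of \cref{F-alg:disws-wvp} computes exactly the objects the correctness results speak about. After that, the statement follows from \cref{F-cor:witness-mis}.

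First, the anchors. For each $s \in S$, the path $\spl(s)$ is the Euclidean shortest path from $s$ to $v_0$. Its second point, which the shortest path map with source $v_0$ returns, is either a reflex vertex or $v_0$ itself. The first case happens when the path bends, since shortest paths in a simple polygon bend only at reflex vertices. The second happens when the path is the straight segment $\seg{s v_0}$. In both cases this second point is $\LA(s)$ by \cref{F-def:prelim-anchors}. The same argument with source $v_n$ gives $\RA(s)$.

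Second, the endpoints of the base interval. By \cref{F-obs:anchor-collinear}, $\ell(s) = \win(s,\LA(s))$, and the points $s$, $\LA(s)$, $\ell(s)$ are collinear in this order, with $\ell(s) \in \eb$. So if $\LA(s) \neq v_0$, the point $\ell(s)$ is where the ray from $s$ through $\LA(s)$ meets $\eb$. If $\LA(s) = v_0$, the window convention of \cref{F-def:prelim-window} gives $\ell(s) = v_0$. These are exactly the two branches of the algorithm, and $r(s)$ is handled symmetrically. One point needs care here: the algorithm uses the intersection of the ray with the line $\eb$, not the first boundary point the ray hits. I would justify that the two agree using \cref{F-obs:window} and \cref{F-obs:anchor-collinear}. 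Together they say the first boundary point after $\LA(s)$ is $\ell(s)$, and $\ell(s)$ lies on $\eb$, so the ray's first boundary hit is its meeting point with $\eb$.

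With these values correct, each $T(s)$ built by the algorithm is the trapezoid of \cref{F-def:trapezoid}. It is well defined by \cref{F-lem:anchor-order} and by $\ell(s) < r(s)$. By \cref{F-thm:trapezoid-graph} the family $\{T(s)\}$ is a trapezoid representation of $G_S$. The algorithm of~\cite{FelsnerMW97} returns a maximum independent set of the trapezoid graph given by this representation. By \cref{F-cor:witness-mis}, that set is a maximum witness subset of $S$.

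The main obstacle is the anchor-extraction step. One must check that the second vertex of the shortest path coincides with the \emph{first reflex vertex} in the sense of \cref{F-def:prelim-anchors}. This needs two facts: interior vertices of shortest paths in a simple polygon are reflex, and when the path is a segment the endpoint convention is matched. A degenerate case needs attention, namely when $s$ is itself a vertex or lies on the boundary. I expect the general-position assumptions on $S$ to exclude the problematic collinearities.
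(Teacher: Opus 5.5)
Your proposal is correct and follows the same route as the paper's proof. You check that the algorithm computes the anchors and, via \cref{F-obs:anchor-collinear}, the endpoints $\ell(s)$ and $r(s)$, so its trapezoids are those of \cref{F-def:trapezoid}, and then \cref{F-thm:trapezoid-graph}, the algorithm of~\cite{FelsnerMW97}, and \cref{F-cor:witness-mis} finish the argument; your extra checks (that the second point of the shortest path is the first reflex vertex, and that the ray meets $\eb$ exactly at $\ell(s)$) are details the paper leaves implicit.
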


\begin{proof}
The algorithm computes the anchors of \cref{F-def:prelim-anchors}, and it
computes $\ell(s)$ and $r(s)$ from them by
\cref{F-obs:anchor-collinear}. It then builds the trapezoids of
\cref{F-def:trapezoid}, which represent $G_S$ by
\cref{F-thm:trapezoid-graph}. A maximum independent set of that
representation is a maximum witness subset of $S$ by
\cref{F-cor:witness-mis}, and the algorithm of~\cite{FelsnerMW97} returns one.
\end{proof}

\subsection{Running Time}
\label{F-subsec:disws-runtime}

\begin{theorem}
\label{F-thm:disws-runtime}
Let $\wv$ be a weak visibility polygon with $n$ vertices and let
$S$ be a set of $m$ points of $\wv$. \Cref{F-alg:disws-wvp} computes
a maximum witness subset of $S$ in $\OO(n + m \log (n+m))$ time and
$\OO(n + m)$ space.
\end{theorem}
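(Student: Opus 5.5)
We bound the cost of each line of \cref{F-alg:disws-wvp} separately and add up the costs.

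Triangulating $\wv$ takes $\OO(n)$ time by Chazelle's linear-time triangulation. Given the triangulation, the shortest path map from a single source in a simple polygon can be built in $\OO(n)$ time (Guibas--Hershberger--Leven--Sharir--Tarjan). We build two of them, with sources $v_0$ and $v_n$. Each map is a planar subdivision of linear complexity, so Kirkpatrick's or a similar point location structure on it also takes $\OO(n)$ preprocessing time and $\OO(n)$ space. For each $s\in S$, one query in each structure takes $\OO(\log n)$ time and returns the cell containing $s$. By construction, the label of that cell is the predecessor of $s$ on the shortest path, namely the vertex $\LA(s)$ (respectively $\RA(s)$) of \cref{F-def:prelim-anchors}. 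If the label is the source itself, the anchor is $v_0$ or $v_n$. The point $\ell(s)$ or $r(s)$ is then the intersection of a line with the line supporting $\eb$. By \cref{F-obs:anchor-collinear} no search is needed, so this step costs $\OO(1)$. Thus the loop over $S$ costs $\OO(m\log n)$ time.

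Building the trapezoids takes $\OO(m)$ time once $\tau$ is fixed. Reading off the index $i$ of $\tau(v_i)$ is $\OO(1)$ if each vertex stores its index, and $\beta$ is just the parameter along $\eb$. The maximum independent set of a trapezoid graph given by its representation with $m$ trapezoids is computed by Felsner--M\"uller--Wernisch~\cite{FelsnerMW97} in $\OO(m\log m)$ time and $\OO(m)$ space. Their algorithm sorts the $4m$ corner coordinates and sweeps with a balanced search tree. Summing the costs gives $\OO(n)+\OO(m\log n)+\OO(m\log m)=\OO(n+m\log(n+m))$ time. Every structure used is linear in its input, so the space is $\OO(n+m)$.

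The main point needing care is the claim that the point location query returns the anchor directly. The shortest path map labels each region by the last vertex before the query point on the path from the source. We traverse the path from $s$ toward the source, so this vertex is the first vertex after $s$, which is reflex or is the source, as required. Degenerate queries, where $s$ lies on a boundary between cells, are excluded by the general position assumption that no point of $S$ is collinear with two vertices.
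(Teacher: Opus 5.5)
Your proposal is correct and follows the paper's proof essentially step for step. You use linear-time triangulation and two shortest path maps with point location, then $\OO(\log n)$ anchor queries and $\OO(1)$ ray--base intersections per point of $S$, then the $\OO(m\log m)$ trapezoid-graph independent set algorithm of Felsner, M\"uller and Wernisch, summing to $\OO(n+m\log(n+m))$ time and linear space. The differences are cosmetic: you place the sorting of the trapezoid corners inside the Felsner et al.\ step instead of a separate construction phase, and you spell out why the point-location label is the anchor and why general position keeps query points off cell boundaries, which the paper leaves implicit.
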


\begin{proof}
We account for the four phases of \cref{F-alg:disws-wvp}.

The triangulation of $\wv$ takes $\OO(n)$ time~\cite{Chazelle91}.
Each of the two shortest path maps is built in $\OO(n)$ time from
the triangulation~\cite{GHLST87}, and each has $\OO(n)$ cells, so a
point location structure with $\OO(\log n)$ query time is built in
$\OO(n)$ additional time. This phase costs $\OO(n)$ time.

The anchor phase performs two-point location queries per point of
$S$, at $\OO(\log n)$ time each. Each cell of a shortest path map
stores the last vertex on the shortest path from the source to the
points of that cell, which is the anchor by \cref{F-def:prelim-anchors}.
Given the anchor, the endpoint $\ell(s)$ or $r(s)$ is one
intersection of a ray with the supporting line of $\eb$, computed in
$\OO(1)$ time by \cref{F-obs:anchor-collinear}. This phase costs
$\OO(m \log n)$ time.

The construction phase writes down one trapezoid per point of $S$ in
$\OO(1)$ time each, and it sorts the $2m$ endpoints on $L_b$ and the
$2m$ endpoints on $L_t$. This phase costs $\OO(m \log m)$ time.

The last phase computes a maximum independent set of a trapezoid
graph on $m$ trapezoids from its representation, in $\OO(m \log m)$
time~\cite{FelsnerMW97}.

The four phases together take $\OO(n + m \log n + m \log m)$ time,
which is $\OO(n + m \log (n+m))$. The triangulation, the two maps
and their point location structures use $\OO(n)$ space, the
trapezoids use $\OO(m)$ space, and the algorithm
of~\cite{FelsnerMW97} uses $\OO(m)$ space.
\end{proof}

\begin{corollary}
\label{F-cor:disws-runtime}
If the number of input points satisfies $m = \OO(n)$, then a maximum
witness subset of $S$ is computed in $\OO(n \log n)$ time.
\end{corollary}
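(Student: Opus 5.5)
Since $m = \OO(n)$ gives $n + m = \OO(n)$, the plan is to plug this into the running-time bound of \cref{F-thm:disws-runtime}. That theorem states that \cref{F-alg:disws-wvp} returns a maximum witness subset of $S$ (by \cref{F-thm:disws-correct}) in $\OO(n + m\log(n+m))$ time.

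Fix a constant $c \ge 1$ with $m \le c n$ for all sufficiently large $n$. Then $n + m \le (1+c)n$, so $\log(n+m) \le \log n + \log(1+c) = \OO(\log n)$ for $n \ge 2$. Also $m\log(n+m) \le c n \cdot \OO(\log n) = \OO(n\log n)$. The additive $\OO(n)$ term is dominated by $n\log n$. Summing gives the claimed $\OO(n\log n)$ bound.

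There is no real obstacle. The one point to handle is that $\log(n+m)$ is $\OO(\log n)$ only because $m$ grows at most linearly in $n$; this fails if $m$ is superpolynomial in $n$, which is exactly why the hypothesis $m = \OO(n)$ is needed. The degenerate case of constant $n$ is absorbed into the constants of the $\OO$-notation.
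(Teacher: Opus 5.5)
Your proof is correct and is the same one-line argument the paper relies on: substitute $m = \OO(n)$ into the $\OO(n + m\log(n+m))$ bound of \cref{F-thm:disws-runtime}, with correctness supplied by \cref{F-thm:disws-correct}. One small aside: $\log(n+m) = \OO(\log n)$ already holds whenever $m$ is polynomial in $n$, so the hypothesis $m = \OO(n)$ is actually needed for the leading factor $m$ in $m\log(n+m)$, not for the logarithm.
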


\begin{remark}
\label{F-rem:no-graph}
\Cref{F-alg:disws-wvp} never builds the visibility intersection graph $G_S$
explicitly. The graph can have $\Theta(m^2)$ edges, for instance
when every point of $S$ sees a common point of $\eb$, so the bound
of \cref{F-thm:disws-runtime} is below the size of the object it
optimizes over.
\end{remark}

\subsection{Optimality of the Running Time}
\label{F-subsec:disws-lower}

The near-linear bound of \cref{F-cor:disws-runtime} is best possible in the algebraic decision-tree model, already when the number of input points is proportional to the size of the polygon. The hardness comes
from the witness side of the problem: a maximum witness subset is a
maximum independent set of the visibility intersection graph, and the visibility intersection graphs
of \cref{F-thm:trapezoid-graph} are rich enough to contain every
permutation graph.

\begin{figure}[htbp]
\centering
\includegraphics[scale=0.9]{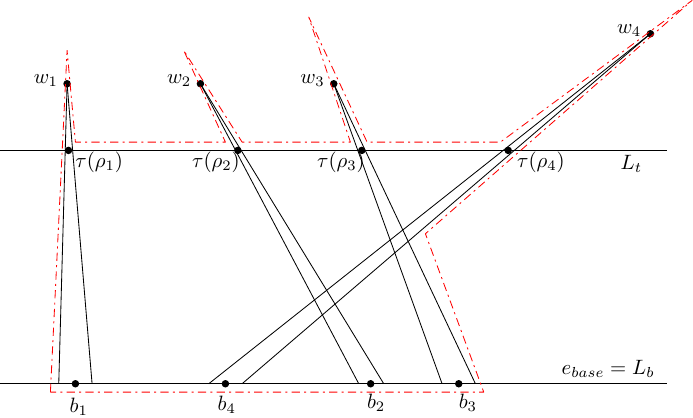}
\caption{The construction of \cref{F-lem:perm-realization} for
$\pi = (1,4,2,3)$. Each witness $w_i$ reaches the base only through
its gateway reflex vertex, its common anchor
$\rho_i = \LA(w_i) = \RA(w_i)$, so its trapezoid collapses to the
segment from the base point $b_i$ up to $\tau(\rho_i)$ on $L_t$ (drawn
as a thin pocket). The base points appear in the order of $\pi$ and
the top points in the identity order, so two segments cross exactly at
an inversion of $\pi$; here $w_4$ conflicts with $w_2$ and $w_3$. A
maximum independent set is a longest increasing subsequence of $\pi$,
for instance $\{w_1, w_2, w_3\}$.}
\label{F-fig:perm-wvp}
\end{figure}

\begin{lemma}[Permutation Graphs are Visibility Intersection Graphs]
\label{F-lem:perm-realization}
For every permutation $\pi$ of $\{1, \dots, k\}$ there is an instance
$(\wv_\pi, S_\pi)$ of \textsc{Discrete Witness Set} in which $\wv_\pi$
has $\OO(k)$ vertices, $\lvert S_\pi \rvert = k$, and the visibility
intersection graph $G_{S_\pi}$ is isomorphic to the permutation graph
of $\pi$. The instance is produced from $\pi$ in $\OO(k)$ time.
\end{lemma}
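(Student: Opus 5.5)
We construct the instance directly from the figure's idea: a rectangle-like polygon whose upper chain carries $k$ narrow pockets, one per index, arranged left to right in the identity order. Each witness sits deep inside its pocket and sees the base only through a single reflex gateway vertex $\rho_i$. Then $\LA(w_i)=\RA(w_i)=\rho_i$, and the base interval $\I(w_i)$ is a very short segment around a prescribed base point $b_i$. By \cref{F-thm:trapezoid-graph} the trapezoid $T(w_i)$ is essentially the segment from $\tau(\rho_i)$ on $L_t$ to $\beta(b_i)$ on $L_b$. Two such degenerate trapezoids intersect exactly when the segments cross, which is exactly when the top order and the base order disagree. So we place the tops in identity order and the base points so that the base order of $b_1,\dots,b_k$ is given by $\pi$. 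This yields the permutation graph of $\pi$.

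\textbf{Step 1: the pockets.} Take the base $\eb$ on the $x$-axis with $v_0=(0,0)$, $v_n=(k+1,0)$, and a horizontal top at height $1$. For each $i$ the pocket is a thin spike rising from the top edge near abscissa $i$. It is bounded by two edges meeting at an apex above, and its mouth on the top line is a short segment with reflex endpoints. We make the pocket a slanted thin corridor: a parallelogram-like spike whose axis is the line through $\rho_i$, where $\rho_i$ is one mouth vertex placed at $(i,1)$, and the axis points toward the base point $b_i=(\pi^{-1}\text{-position}\cdot\text{spacing},0)$. Concretely, the line from $b_i$ through $\rho_i$ is extended upward, and the pocket is a narrow triangle of angular width $\varepsilon$ around this line with apex at $\rho_i$'s side. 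The witness $w_i$ lies on the axis near the far end.

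Two conditions must be checked. First, the whole polygon must be weakly visible from $\eb$; each pocket is seen from points of $\eb$ near $b_i$ because the pocket is a thin wedge aimed at $b_i$. Second, the visibility region of $w_i$ is contained in the thin wedge with apex $w_i$ through the mouth, so the pocket walls block everything else. Both anchors of $w_i$ are then mouth vertices, and we take the common anchor $\rho_i$ as in the figure. The vertex count is $\OO(k)$, namely three or four vertices per pocket plus the base, and the coordinates are computed in $\OO(k)$ time.

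\textbf{Step 2: the base order.} We want $b_i<b_j$ iff $\pi(i)<\pi(j)$, or the inverse depending on convention. So we put $b_i$ at abscissa $\pi(i)$, scaled to lie in $(0,k+1)$. The top anchors satisfy $\rho_1\prec\dots\prec\rho_k$. By \cref{F-lem:separation}, $\vis(w_i)\cap\vis(w_j)=\emptyset$ for $i<j$ iff $r(w_i)<\ell(w_j)$ and $\RA(w_i)\prec\LA(w_j)$. The second condition always holds for $i<j$, and the first holds, for $\varepsilon$ small, iff $\pi(i)<\pi(j)$. Hence $w_i,w_j$ are adjacent iff $(i,j)$ is an inversion of $\pi$, which is exactly the permutation graph.

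\textbf{Main obstacle.} The delicate part is the geometric validity of Step 1. The slanted pockets must not overlap one another or cross the top edge. Each wedge's visibility, that is, the segment from $w_i$ through the mouth to $\eb$, must stay inside $\wv$, and must not be cut by other pockets' walls, which lie above height $1$ and so cannot block. The width $\varepsilon$ must make the base intervals disjoint, and general position must hold. We would handle this by choosing pockets of height $h$ small relative to the spacing. A spike slanted at the bounded angle $\arctan(k)$ then has horizontal extent at most $hk$, so taking $h<1/(2k)$ keeps the pockets disjoint. A final small perturbation secures general position.
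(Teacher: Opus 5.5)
Your proposal is correct and takes essentially the same route as the paper. Pockets sit along the upper chain in the identity order, each aimed so that its base point $b_i$ realizes the order of $\pi$, so the nearly degenerate trapezoids meet exactly at inversions, and the pockets are then narrowed for general position; your explicit coordinates and direct appeal to \cref{F-lem:separation} (condition~(ii) fails trivially for $i<j$) make the paper's informal aiming argument concrete. One small correction, which the paper's own initial description shares: once the mouth has positive width the anchors are the two distinct mouth vertices, $\LA(w_i)=\lambda_i$ (left) and $\RA(w_i)=\rho_i$ (right), not a common $\rho_i$, since a common anchor would force $\ell(w_i)=r(w_i)$ and contradict general position; your Step~2 only uses that both anchors of pocket $i$ precede those of pocket $j$, so the argument is unaffected.
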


\begin{proof}
Take the base $\eb$ to be the horizontal segment from $u = (0,0)$ to
$v = (k+1,0)$. We build the upper chain of $\wv_\pi$ so that it
carries, in clockwise order, $k$ reflex vertices
$\rho_1, \rho_2, \dots, \rho_k$ of strictly increasing abscissa, with
one shallow pocket hanging behind each $\rho_i$. Inside the pocket
behind $\rho_i$ we place a single input point $w_i$, so deep in the
pocket that $\rho_i$ is the only vertex of $\wv_\pi$ through which $w_i$
sees the base; that is, the pocket has $\rho_i$ as its sole gateway to
$\eb$. Both shortest paths $\spl(w_i)$ and $\spr(w_i)$ then leave the
pocket through $\rho_i$, so
\[
  \LA(w_i) \;=\; \RA(w_i) \;=\; \rho_i .
\]
By \cref{F-obs:anchor-collinear} the endpoints $\ell(w_i)$ and $r(w_i)$
both lie on the line $\seg{w_i\,\rho_i}$; hence they coincide with the
single point $b_i$ where that line, extended, meets $\eb$, and
$\I(w_i) = \{b_i\}$.

The pocket may be aimed freely. Choosing where $w_i$ sits behind
$\rho_i$ turns the ray $\seg{w_i\,\rho_i}$ to any direction, so $b_i$
can be placed at any prescribed abscissa on $\eb$, independently of the
other pockets: the pockets are separated horizontally, and the region below the reflex vertices is empty, so the aiming rays may cross there
without any pocket obstructing another's single view of the base. We
aim them so that the base points realize the order of $\pi$,
\[
  x(b_i) < x(b_j) \quad\Longleftrightarrow\quad \pi(i) < \pi(j).
\]
Each pocket contributes $\OO(1)$ vertices, so $\wv_\pi$ has $\OO(k)$
vertices and is written down from $\pi$ in $\OO(k)$ time; it is weakly
visible from $\eb$ because each pocket is seen from its base point
$b_i$ and the remainder of $\wv_\pi$ lies directly above $\eb$.

It remains to read off the visibility intersection graph. Recall the
two horizontal lines $L_b$, $L_t$ and the order-preserving maps
$\beta \colon \eb \to L_b$ and $\tau$ of
\cref{F-subsec:disws-construction}. Because
$\LA(w_i) = \RA(w_i) = \rho_i$, the upper side of the trapezoid
$T(w_i)$ is the single point $\tau(\rho_i)$, and because
$\ell(w_i) = r(w_i) = b_i$ its lower side is the single point
$\beta(b_i)$. Every trapezoid therefore
degenerates to a segment
\[
  T(w_i) \;=\; \seg{\beta(b_i)\;\tau(\rho_i)}
\]
joining a point of $L_b$ to a point of $L_t$. The reflex vertices were
placed with increasing abscissa, so their clockwise indices increase
with $i$ and the top endpoints $\tau(\rho_1), \dots, \tau(\rho_k)$
appear in this order along $L_t$; the bottom endpoints
$\beta(b_1), \dots, \beta(b_k)$ appear along $L_b$ in the order of
$\pi$. Two segments spanning a pair of parallel lines cross if and only if their endpoints occur in opposite orders on the two lines, so
$T(w_i)$ and $T(w_j)$ cross exactly when $i < j$ and $\pi(i) > \pi(j)$,
that is, exactly when $\{i,j\}$ is an inversion of $\pi$. By
\cref{F-thm:trapezoid-graph} the visibility intersection graph
$G_{S_\pi}$ is the intersection graph of the family $\{T(w_i)\}$, which
is thus the permutation graph of $\pi$.

Finally, replacing the single aiming ray of each pocket by the two
bounding rays of a sufficiently narrow pocket makes each $\I(w_i)$ a
short interval and each $T(w_i)$ a thin trapezoid, without altering
which pairs cross; this satisfies the general position assumption of
\cref{F-sec:prelims} while preserving $G_{S_\pi}$. \Cref{F-fig:perm-wvp}
shows the construction for $\pi = (1,4,2,3)$.
\end{proof}

\begin{theorem}[Lower Bound]
\label{F-thm:disws-lower}
In the algebraic decision-tree model, \textsc{Discrete Witness Set}
requires $\Omega(k \log k)$ operations in the worst case on instances
with $\lvert S \rvert = k$ points in a weak visibility polygon of
$\OO(k)$ vertices.
\end{theorem}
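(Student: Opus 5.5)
The plan is a linear-time reduction from a problem with a known $\Omega(k\log k)$ algebraic decision-tree lower bound, through \cref{F-lem:perm-realization}. The natural source problem is \textsc{Longest Increasing Subsequence} (LIS). Fredman's bound, and its decision version in the algebraic decision-tree model (Ramanan's result), give $\Omega(k \log k)$ for deciding whether a sequence of $k$ reals has an increasing subsequence of length at least a given threshold. A maximum independent set in the permutation graph of $\pi$ is exactly a longest increasing subsequence of $\pi$, since an independent set is a set of pairwise non-inverted indices. So a maximum witness subset in the instance $(\wv_\pi, S_\pi)$ of \cref{F-lem:perm-realization} has size equal to the LIS length of $\pi$.

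First, given a sequence $x_1,\dots,x_k$ of reals (distinct without loss of generality), I take $\pi$ to be its relative order. Computing $\pi$ would cost sorting, which is too expensive. Instead I will use the freedom in the construction of \cref{F-lem:perm-realization}: the proof aims each pocket so that $b_i$ lands at a prescribed abscissa. I will place $b_i$ at an affine image of $x_i$ inside $\eb$, which requires only the minimum and maximum of the $x_i$, found in $\OO(k)$ time. The pocket geometry for $w_i$ then depends only on $i$ and on $x_i$, through a constant number of arithmetic operations. So the instance is built by $\OO(k)$ algebraic operations without ever sorting. The vertices $\rho_i$ keep increasing abscissa, and the bottom endpoints realise the order of the $x_i$. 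By \cref{F-thm:trapezoid-graph} and the segment-crossing argument of the lemma, $G_{S}$ is the permutation graph of the order type of $(x_i)$.

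Next, any algebraic decision tree solving \textsc{Discrete Witness Set} on $k$ points and $\OO(k)$ vertices, composed with this $\OO(k)$-operation encoding, yields a decision tree for LIS-threshold with only $\OO(k)$ extra depth. Every polygon vertex coordinate is a rational function of the $x_i$, so the composed tree is still algebraic of bounded degree. The $\Omega(k\log k)$ bound for LIS therefore transfers.

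The main obstacle is making the aiming step of \cref{F-lem:perm-realization} genuinely algebraic and uniform. The pocket vertices must be explicit low-degree functions of the target abscissa, and the polygon must stay simple and weakly visible for every input, without comparisons among the $x_j$. I would handle this by fixing a pocket template around $\rho_i=(i, h)$ whose inner point $w_i$ lies on the line through $\rho_i$ and $(x_i',0)$ at a fixed small depth. The pockets must be thin and short enough that, for every target point on $\eb$, they cannot interfere; this is guaranteed by bounding all targets to lie in $\eb$. If this is awkward, a fallback is to reduce instead from element uniqueness via a direct counting argument on the connected components of the input space, as in Ben-Or's theorem.
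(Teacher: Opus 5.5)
Your reduction is the paper's: the $\Omega(k\log k)$ bound for longest increasing subsequence, the pocket realization of permutation graphs in \cref{F-lem:perm-realization}, and the fact that independent sets of a permutation graph are increasing subsequences. The paper's proof simply starts from $\pi$. You are right that in the algebraic decision-tree model the input is a real vector whose ranks would cost a sort, and your affine encoding is aimed at exactly that.

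The gap is in that encoding. You use the degenerate pockets of the first part of \cref{F-lem:perm-realization}, with $\LA(w_i)=\RA(w_i)=\rho_i$ and $\I(w_i)=\{b_i\}$, so that each $T(w_i)$ is a segment and only the ``bottom endpoints'' matter. Such a pocket cannot occur in an admissible instance. Under the paper's general position every point satisfies $\ell(p)<r(p)$ and $\LA(p)\prec\RA(p)$ (\cref{F-lem:anchor-order}). A point that sees $\eb$ along a single ray lies on a line through two vertices, which the assumption on $S$ excludes. That is why the last paragraph of the lemma widens each pocket so that $\I(w_i)$ becomes a short interval. For a fixed $\pi$ this is harmless, because the targets can be placed a unit apart.

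In your encoding the targets $x_i'$ are affine images of arbitrary reals, so their gaps are arbitrarily small. A fixed template, however, gives each $\I(w_i)$ a margin on both sides of $x_i'$ that is bounded below by a positive constant independent of the input. Take $i<j$ with $x_i<x_j$ and $x_j'-x_i'$ below that constant. Then $\I(w_i)$ and $\I(w_j)$ overlap, $w_i$ and $w_j$ see a common base point, and $G_S$ acquires an edge that is not an inversion. Consequently $G_S$ is not the permutation graph of the order type, and the witness number can fall below the LIS length. The composed tree then does not decide LIS-threshold on every input of $\mathbb{R}^k$, which the transfer of an algebraic decision-tree lower bound requires. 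Shrinking the width adaptively does not help, since it would need a positive lower bound on the minimum gap of the $x_i$, which is element-uniqueness information.

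A standard repair is to make the pocket width a formal infinitesimal $\varepsilon>0$, together with a symbolic perturbation that enforces general position. For every $x\in\mathbb{R}^k$ and every sufficiently small real $\varepsilon$ the instance is then admissible. Pairs $i<j$ with $x_i<x_j$ receive disjoint trapezoids, and pairs with $x_i\ge x_j$ conflict, ties included. This is exactly the conflict relation of strictly increasing subsequences, so the witness number equals the LIS length for every input, and no distinctness assumption is needed. Each bounded-degree test is evaluated on coordinates that are polynomials in $(x,\varepsilon)$. For small $\varepsilon$ it has the sign of its lowest-order nonzero coefficient in $\varepsilon$, and $\OO(1)$ tests in $x$ determine that sign. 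The composed tree is therefore still a bounded-degree algebraic decision tree, of depth $\OO(k)$ plus a constant times the depth of the assumed algorithm, and the $\Omega(k\log k)$ bound transfers.
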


\begin{proof}
A maximum independent set of the permutation graph of $\pi$ has size
equal to the length of a longest increasing subsequence of $\pi$, and
computing this length requires $\Omega(k \log k)$ comparisons in the
worst case~\cite{Fredman75}; the bound holds in the algebraic
decision-tree model, as does the equivalent bound obtained by a
reduction from \textsc{Element
Uniqueness}~\cite{DBLP:conf/stoc/Ben-Or83,DBLP:books/sp/PreparataS85}.
Suppose an algorithm solved \textsc{Discrete Witness Set} in
$o(k \log k)$ operations on the stated instances. Given $\pi$, build
$(\wv_\pi, S_\pi)$ in $\OO(k)$ time by \cref{F-lem:perm-realization} and run the algorithm on it. By \cref{F-cor:witness-mis} the size of its
output equals a maximum independent set of $G_{S_\pi}$, hence the
longest increasing subsequence length of $\pi$, computed in
$o(k \log k)$ operations, a contradiction.
\end{proof}

\begin{corollary}
\label{F-cor:disws-tight}
For instances with $m = \Theta(n)$, \textsc{Discrete Witness Set} in a
weak visibility polygon is solvable in $\Theta(n \log n)$ time, and no algebraic decision-tree algorithm does better.
\end{corollary}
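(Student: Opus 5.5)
The corollary combines the two bounds already established. I will state the upper and lower bounds separately and then note that they coincide when $m = \Theta(n)$.

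\textbf{Upper bound.} I invoke \cref{F-thm:disws-runtime}, which gives a running time of $\OO(n + m\log(n+m))$. Substituting $m = \Theta(n)$ makes this $\OO(n + n\log(2n)) = \OO(n\log n)$. This is the content of \cref{F-cor:disws-runtime}, and it needs only $m = \OO(n)$.

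\textbf{Lower bound.} I apply \cref{F-thm:disws-lower} with $k = |S|$. That theorem produces hard instances with $k$ points in a polygon with $\OO(k)$ vertices. By \cref{F-lem:perm-realization}, the polygon $\wv_\pi$ has $n = \Theta(k)$ vertices, since the $k$ pockets contribute a constant number of vertices each. So these hard instances satisfy $m = k = \Theta(n)$ and lie in the class the corollary considers.

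On this class, any algebraic decision-tree algorithm needs $\Omega(k\log k) = \Omega(n\log n)$ operations in the worst case. One small point needs checking: if the construction yields $n = \OO(k)$ but the number of vertices is less than linear in $k$, I pad $\wv_\pi$ with $\Theta(k)$ extra convex vertices. Adding them along a straight stretch of the upper chain away from the pockets, in near-collinear position, changes neither the anchors nor the base intervals, so the lower bound is preserved.

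Together the matching bounds give $\Theta(n\log n)$, which proves the corollary. The only step that needs care is making sure the lower-bound family really has $m=\Theta(n)$ and not just $n = \OO(m)$, and the padding argument handles this.
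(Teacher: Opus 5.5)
Your proposal is correct and follows the paper's proof: the upper bound is \cref{F-cor:disws-runtime}, and the lower bound is \cref{F-thm:disws-lower} applied with $k = \Theta(n)$. The padding step is unnecessary, because the $k$ gateway reflex vertices $\rho_1, \dots, \rho_k$ of \cref{F-lem:perm-realization} are distinct vertices of $\wv_\pi$; hence $n \ge k$ holds automatically, and the hard family already satisfies $m = k = \Theta(n)$.
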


\begin{proof}
The upper bound is \cref{F-cor:disws-runtime}. For the lower bound,
apply \cref{F-thm:disws-lower} with $k = \Theta(n)$: the polygon then has
$\OO(n)$ vertices and $m = \Theta(n)$ points, and
$k \log k = \Theta(n \log n)$.
\end{proof}

\subsection{The Class of Visibility Intersection Graphs}
\label{F-subsec:disws-expressive}

By \cref{F-thm:trapezoid-graph} every visibility intersection graph is a
trapezoid graph. We now locate this class more precisely. It
contains two familiar families, the interval graphs and the
permutation graphs, and it contains each of them properly. The
permutation graphs were already realized in \cref{F-lem:perm-realization}; we treat the interval graphs next.

\begin{lemma}[Interval Graphs are Visibility Intersection Graphs]
\label{F-lem:interval-realization}
For every interval graph $H$ there is an instance $(\wv, S)$ of
\textsc{Discrete Witness Set} whose visibility intersection graph
$G_S$ is isomorphic to $H$. The polygon $\wv$ has $\OO(n)$ vertices,
where $n$ is the number of vertices of $H$.
\end{lemma}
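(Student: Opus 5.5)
Take an interval representation of $H$ and turn it into a witness instance by placing points directly above the base edge, so that the only constraint between two points is the one coming from their base intervals; by \cref{F-lem:separation} and \cref{F-obs:trapezoid-disjoint} it then suffices to make the upper sides of the trapezoids behave consistently with the lower sides.

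Fix an interval representation $\{[a_i,b_i]\}_{i=1}^{n}$ of $H$. After a standard perturbation, all $2n$ endpoints are distinct, and they can be taken to be the integers $1,\dots,2n$. Take $\eb$ from $u=(0,0)$ to $v=(2n+1,0)$, and let $\wv$ be the polygon whose upper chain has reflex vertices $\lambda_i$ and $\rho_i$ sitting at abscissae near $a_i$ and $b_i$, at a common small height $h$. Between consecutive reflex vertices the chain rises into a tall, narrow spike. Each point $w_i$ is placed high inside a region above the strip $[a_i,b_i]$, shaped so that its view toward the base is cut off on the left by $\lambda_i$ and on the right by $\rho_i$. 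This makes $\LA(w_i)=\lambda_i$ and $\RA(w_i)=\rho_i$. Because $h$ is small, the projected points $\ell(w_i)$ and $r(w_i)$ lie within $\varepsilon$ of $a_i$ and $b_i$. A simpler alternative should be tried first: a single comb in which each $w_i$ sits in its own pocket with two gateway vertices, one just left of $a_i$ and one just right of $b_i$. The vertices are then ordered along the upper chain by abscissa, so $\tau$ preserves the order of interval endpoints, and the base interval is $\I(w_i)\approx[a_i,b_i]$.

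Once the anchors and base endpoints follow the same order as the interval endpoints, the two conditions in \cref{F-lem:separation}, $r(p)<\ell(q)$ and $\RA(p)\prec\LA(q)$, are each equivalent to $b_p<a_q$. The trapezoid of each point is therefore essentially the strip over $[a_i,b_i]$. By \cref{F-thm:trapezoid-graph}, $G_S$ is then exactly the interval graph $H$. Each interval contributes $\OO(1)$ vertices, so the total count is $\OO(n)$.

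The main obstacle is geometric realizability. The pockets for overlapping intervals cannot be disjoint regions of the upper chain, because their gateway vertices interleave. I would resolve this by not using pockets at all. Instead, the upper chain is a sawtooth of reflex vertices $c_1,\dots,c_{2n}$ at abscissae $1,\dots,2n$ and height $h$, with a thin vertical slit rising above each one. Each $w_i$ is placed at a point $(x_i,Y)$ far above, reachable only by a line of sight to the base through the gap between $c_{a_i}$ and $c_{b_i}$. The difficulty is that a single point far above sees down through every gap it can reach; keeping the structure weakly visible while confining each view to one window is delicate. I would first attempt the direct check: realize each $w_i$ inside a private funnel whose walls terminate at $c_{a_i}$ and $c_{b_i}$, with the funnels nested according to interval containment. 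The remaining work is then to verify weak visibility and to confirm that the anchors are as claimed.
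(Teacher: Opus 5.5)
There is a genuine gap: the proposal never constructs a polygon that realizes your reduction.

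Your reduction step is sound. Suppose $\tau$ placed $\LA(w_i)$ and $\RA(w_i)$ in the same relative order as $a_i$ and $b_i$, and $\I(w_i)\approx[a_i,b_i]$. Then both parts of condition~(i) of \cref{F-lem:separation} would be equivalent to $b_p<a_q$. The problem is realizing this, and you say so yourself. For two properly crossing intervals $a_i<a_j<b_i<b_j$, your gateway vertices interleave along the upper chain, so disjoint pockets are impossible. The sawtooth and funnel sketches leave weak visibility and the claimed anchor values unverified. The fallback of funnels nested according to interval containment does not address crossing intervals at all. General interval graphs need them: $P_4$ has no representation by pairwise nested-or-disjoint intervals, since such families give only trivially perfect graphs. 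So as written, the proposal does not establish the lemma.

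The missing idea is that the upper sides of the trapezoids need not mirror the intervals. They only need to be consistent with the base, in the sense your first paragraph already points to. Fix a linear extension $\sigma$ of the strict order ``$I_i$ left of $I_j$ iff $b_i<a_j$''; sorting by left endpoints $a_i$ already is one. Build ordinary, pairwise disjoint pockets, one per vertex of $H$, placed left to right along the upper chain in the order $\sigma$. Give pocket $i$ mouth vertices $\lambda_i,\rho_i$, and aim it so that $\I(w_i)=[a_i,b_i]$, as in the pocket construction of \cref{F-lem:perm-realization}. Then for $\sigma(i)<\sigma(j)$ you have $\LA(w_i)=\lambda_i\prec\rho_i=\RA(w_i)\prec\lambda_j=\LA(w_j)\prec\rho_j=\RA(w_j)$. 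Hence the anchor part of condition~(i) holds automatically, condition~(ii) always fails, and $\vis(w_i)\cap\vis(w_j)=\emptyset$ if and only if $b_i<a_j$. If $I_i$ and $I_j$ are disjoint, then $b_i<a_j$, because $\sigma$ extends the left-of order and $\sigma(i)<\sigma(j)$. If they meet, the base intervals overlap and the visibility polygons intersect. This is the paper's argument, and it avoids interleaved gateways entirely.
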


\begin{proof}
Fix an interval representation of $H$: closed intervals
$I_1, \dots, I_n$ on a horizontal line, with distinct endpoints, such
that two vertices of $H$ are adjacent if and only if the corresponding
intervals meet. Write $I_i = [a_i, b_i]$. Say that $I_i$ is
\emph{left of} $I_j$ when $b_i < a_j$, that is, when $I_i$ lies
entirely to the left of $I_j$. This relation is a strict partial order, and we fix a linear extension $\sigma$ of it, so that
$b_i < a_j$ implies $\sigma(i) < \sigma(j)$.

We reuse the pocket construction of \cref{F-lem:perm-realization}, with
the single change that the pockets are wider. Put $\eb$ on a
horizontal segment and build the upper chain of $\wv$ so that it
carries $n$ pairwise disjoint shallow pockets, arranged from left to
right in the order given by $\sigma$. The mouth of the pocket that
holds vertex $i$ has a left reflex vertex $\lambda_i$ and a right
reflex vertex $\rho_i$. Place a single input point $w_i$ so deep in
that pocket that the pocket is its only gateway to $\eb$. Then
$\spl(w_i)$ leaves the pocket around $\lambda_i$ and $\spr(w_i)$ leaves
it around $\rho_i$, so
\[
  \LA(w_i) = \lambda_i, \qquad \RA(w_i) = \rho_i .
\]
Hence $w_i$ sees on $\eb$ exactly the interval bounded by the rays
$\seg{w_i\,\lambda_i}$ and $\seg{w_i\,\rho_i}$, namely
$\I(w_i) = [\ell(w_i), r(w_i)]$, where $\ell(w_i)$ and $r(w_i)$ are the
points in which these two rays meet $\eb$. As in
\cref{F-lem:perm-realization} the pockets are separated horizontally, and the region below the reflex vertices is empty, so the two rays of each
pocket may be aimed independently of the other pockets. Aim them so
that $\ell(w_i) = a_i$ and $r(w_i) = b_i$, giving $\I(w_i) = I_i$.
Each pocket contributes $\OO(1)$ vertices, so $\wv$ has $\OO(n)$
vertices; it is weakly visible from $\eb$ because each pocket is seen
from its own base interval, and the rest of $\wv$ lies directly above
$\eb$.

It remains to identify $G_S$. Recall the two parallel lines $L_b$ and
$L_t$ and the order-preserving maps $\beta$ and $\tau$ of
\cref{F-subsec:disws-construction}. The pockets are disjoint and
ordered by $\sigma$, so on $L_t$ the upper sides
$\seg{\tau(\lambda_i)\,\tau(\rho_i)}$ of the trapezoids $T(w_i)$ are
pairwise disjoint and appear in the order of $\sigma$; on $L_b$ the
lower sides are the intervals $\I(w_i) = I_i$. Take two vertices $i$
and $j$ with $\sigma(i) < \sigma(j)$, so the upper side of $T(w_i)$
lies entirely left of the upper side of $T(w_j)$. Two trapezoids
spanning $L_b$ and $L_t$ are disjoint if and only if one lies entirely
left of the other; since the upper sides are already ordered, $T(w_i)$
and $T(w_j)$ are disjoint if and only if their lower sides are ordered
the same way, that is $b_i < a_j$. Now compare with $H$. If $I_i$ and
$I_j$ are disjoint, they are comparable in the left-of order; as
$\sigma$ extends that order and $\sigma(i) < \sigma(j)$, the interval
$I_i$ is left of $I_j$, so $b_i < a_j$ and the trapezoids are disjoint.
If $I_i$ and $I_j$ meet, their lower sides overlap, and the trapezoids
are not disjoint. By \cref{F-thm:trapezoid-graph}, the graph $G_S$ is the
intersection graph of the family $\{T(w_i)\}$, so $w_i$ and $w_j$ are
adjacent in $G_S$ if and only if $I_i$ and $I_j$ meet, which is
adjacency in $H$. Therefore, $G_S$ is isomorphic to $H$.
\end{proof}

Both containments are strict, and a single small graph separates the interval graphs from the visibility intersection graphs.

Together with \cref{F-lem:perm-realization} and \cref{F-lem:interval-realization}, the trapezoid representation locates the visibility intersection graphs between two familiar classes.

\begin{theorem}[Location of the Visibility Intersection Graphs]
\label{F-thm:VIGgraphclasslocation}
Let $\mathsf{VIG}$ denote the class of visibility intersection graphs of
instances of \textsc{Discrete Witness Set} in weak visibility polygons,
and let $\mathsf{Int}$, $\mathsf{Perm}$ and $\mathsf{Trap}$ denote the
classes of interval graphs, permutation graphs, and trapezoid graphs
respectively. Then
\begin{equation}
\label{F-eq:vig-location}
  \mathsf{Int} \subsetneq \mathsf{VIG},
  \qquad
  \mathsf{Perm} \subsetneq \mathsf{VIG},
  \qquad
  \mathsf{VIG} \subseteq \mathsf{Trap}.
\end{equation}
\end{theorem}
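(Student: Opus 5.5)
The three relations in \eqref{F-eq:vig-location} split into three containments and two strictness witnesses, and I will treat them in that order. The containment $\mathsf{VIG} \subseteq \mathsf{Trap}$ is exactly \cref{F-thm:trapezoid-graph}: every visibility intersection graph is realized by the trapezoids $\{T(s)\}$, which are well defined by \cref{F-lem:anchor-order}. The containment $\mathsf{Int} \subseteq \mathsf{VIG}$ is \cref{F-lem:interval-realization}, and $\mathsf{Perm} \subseteq \mathsf{VIG}$ is \cref{F-lem:perm-realization}. Both realizations are up to isomorphism, which is all that membership in a graph class requires.

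For strictness I would use the fact that $\mathsf{Int}$ and $\mathsf{Perm}$ are incomparable. Each class therefore contains a graph outside the other, and that graph lies in $\mathsf{VIG}$ by the lemma for its own class.

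For $\mathsf{Int} \subsetneq \mathsf{VIG}$ I take the chordless cycle $C_4$. It is not chordal, hence not an interval graph. It is a permutation graph: it is the comparability graph of the poset $2+2$ and its complement $2K_2$ is also a comparability graph, so it is realized by, e.g., $\pi = (3,4,1,2)$. By \cref{F-lem:perm-realization}, $C_4 \in \mathsf{VIG}$.

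For $\mathsf{Perm} \subsetneq \mathsf{VIG}$ I need an interval graph that is not a permutation graph. Permutation graphs are exactly the graphs that are comparability graphs and whose complements are also comparability graphs. So I would pick an interval graph whose complement is not a comparability graph. The standard choice is the interval graph formed by a long base interval meeting three pairwise disjoint intervals, each of which has a private pendant interval. This is the ``tent/3-sun''-type example, or any interval graph containing an asteroidal-free but non-cocomparability pattern. A cleaner option is to note that interval graphs are cocomparability, so I instead need a non-comparability interval graph. A suitable one is the 3-sun's relative that is interval but not a comparability graph, e.g.\ the graph obtained from $K_{1,3}$ by subdividing each edge once. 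That graph is a tree, hence a comparability graph, so it fails and I would discard it.

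The correct witness I would verify is the \emph{net}'s complement or, most simply, an odd-cycle obstruction: the interval graph $\overline{C_6}$? Instead I would settle on a concrete small interval graph with no transitive orientation, checked by the forcing (implication-class) argument. The main obstacle is this step, namely exhibiting and verifying a small interval graph that fails to be a permutation graph. My first attempt is the interval graph realized by intervals $[0,10]$, $[1,2]$, $[3,4]$, $[5,6]$ together with $[1.5,3.5]$, $[3.5,5.5]$, and I would show by implication-class forcing of orientations that it admits no transitive orientation. If that fails, I fall back to citing the known strictness of $\mathsf{Int} \not\subseteq \mathsf{Perm}$ from~\cite{GOLUMBIC20041}. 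Finally, I do not claim strictness for $\mathsf{VIG}\subseteq\mathsf{Trap}$, matching the statement.
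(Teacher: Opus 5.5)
Your decomposition matches the paper's. The three containments come from Theorem~\ref{F-thm:trapezoid-graph}, Lemma~\ref{F-lem:interval-realization} and Lemma~\ref{F-lem:perm-realization}. Strictness comes from the incomparability of interval and permutation graphs, which the paper simply cites from~\cite{GOLUMBIC20041} without exhibiting any graph.

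Your witness $C_4$ for $\mathsf{Int}\subsetneq\mathsf{VIG}$ is correct, since $\pi=(3,4,1,2)$ has exactly the inversions $\{1,3\},\{1,4\},\{2,3\},\{2,4\}$. Only the aside is off: the comparability graph of the poset $2+2$ is $2K_2$, and $C_4$ is a comparability graph simply because it is bipartite.

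The second half does not stand on its own. Every explicit candidate you name fails, so your argument rests entirely on the fallback citation, which is what the paper does.
\begin{itemize}
\item The long interval meeting three pairwise disjoint intervals, each with a private pendant, cannot be realized. The middle one of the three lies inside the long interval, so its pendant would meet it. The graph you describe is the subdivided claw, whose leaves form an asteroidal triple. It is the same tree you discard a sentence later.
\item No interval graph contains a ``non-cocomparability pattern'', since interval graphs are cocomparability graphs. As you then say, you need an interval graph that is not a comparability graph.
\item $\overline{C_6}$ contains an induced $C_4$. The complement of the net is the $3$-sun, whose three degree-two vertices form an asteroidal triple. So neither is an interval graph.
\item Your concrete family, with $U=[0,10]$, $X_1=[1,2]$, $X_2=[3,4]$, $X_3=[5,6]$, $Y_1=[1.5,3.5]$, $Y_2=[3.5,5.5]$, gives the bull plus a universal vertex. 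The bull is the triangle $Y_1X_2Y_2$ with pendants $X_1$ and $X_3$. The orientation $Y_1<X_2<Y_2$, $Y_1<X_1$, $X_3<Y_2$, with $U$ above everything, is transitive. So the forcing argument you plan must fail, and this graph is in fact a permutation graph.
\end{itemize}

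If you want a self-contained witness instead of the citation, take $a=[0,3]$, $b=[1,5]$, $c=[2,7]$, $d=[4,8]$, $x_1=[-1,0.5]$, $x_2=[1.2,1.5]$, $x_3=[6,6.5]$.
\begin{itemize}
\item The neighbourhoods are $N(a)=\{b,c,x_1,x_2\}$, $N(b)=\{a,c,d,x_2\}$ and $N(c)=\{a,b,d,x_3\}$.
\item In each of these neighbourhoods the non-adjacency graph is connected. This is shown by the sequences $b,x_1,c,x_2$ and $a,d,x_2,c$ and $d,a,x_3,b$, whose consecutive members are non-adjacent.
\item In a transitive orientation, two non-adjacent $u,w\in N(v)$ exclude $u\to v\to w$, since transitivity would make $u$ and $w$ adjacent. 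So every edge at $v$ points towards $v$, or every edge points away from it.
\item Hence each of the pairwise adjacent vertices $a,b,c$ is a source or a sink. Two of them have the same type, and the edge between them cannot be oriented.
\end{itemize}
This interval graph is therefore not a comparability graph, hence not a permutation graph. Lemma~\ref{F-lem:interval-realization} then places it in $\mathsf{VIG}\setminus\mathsf{Perm}$.
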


\begin{proof}
The containment $\mathsf{Int} \subseteq \mathsf{VIG}$ is
\cref{F-lem:interval-realization}, and the containment
$\mathsf{Perm} \subseteq \mathsf{VIG}$ is \cref{F-lem:perm-realization}.
The containment $\mathsf{VIG} \subseteq \mathsf{Trap}$ is
\cref{F-thm:trapezoid-graph}.

It remains to prove that the first two containments are proper. The
interval graphs and the permutation graphs are incomparable, so neither
class contains the other~\cite{GOLUMBIC20041}. Choose a permutation
graph $G_1$ that is not an interval graph, and choose an interval graph $G_2$ that is not a permutation graph. By \cref{F-lem:perm-realization}
the graph $G_1$ belongs to $\mathsf{VIG}$, and it does not belong to
$\mathsf{Int}$. By \cref{F-lem:interval-realization} the graph $G_2$
belongs to $\mathsf{VIG}$, and it does not belong to $\mathsf{Perm}$.
\end{proof}

\begin{remark}
\label{F-rem:vig-trap-open}
Whether the third containment in \cref{F-eq:vig-location} is proper is
open. A positive answer would characterize the conflict structure of
\textsc{Discrete Witness Set} exactly.
\end{remark}

\providecommand{\lo}{\lhd}                    
\providecommand{\tk}{\mathsf{trapShadow}}     
\providecommand{\RUC}{\mathsf{RUC}}           
\providecommand{\tp}{\mathcal{X}}             
\providecommand{\cand}{\mathcal{C}}           
\providecommand{\anc}{\mathcal{A}}            
\providecommand{\rst}{r^{*}}                  
\providecommand{\level}{h}                    

\section{\textsc{Continuous Witness Set Problem} in $\wvp$}
\label{F-sec:wvp-max-witness}
\label{F-sec:ws-correct}

We give an exact polynomial-time algorithm that computes a maximum witness set for the {\sc Continuous Witness Set Problem} $(\conWSP)$, prove its correctness, and bound its running time in \cref{F-sec:ws-time}.

Throughout, $\wv$ is a simple polygon with $n$ vertices that is weakly
visible from its base edge $\eb$, and $\rho$ is the number of reflex
vertices of $\wv$. The base has endpoints $u = v_0$ and $v = v_n$, and
we assume the general position of \cref{F-sec:prelims}. We reuse the
visibility polygon $\vis(p)$, the base interval
$\I(p) = [\ell(p), r(p)] = \vis(p) \cap \eb$, the anchors $\LA(p)$ and
$\RA(p)$, and the clockwise order $\prec$ on the upper chain. The
endpoints $\ell(p)$ and $r(p)$ follow from the two anchors in
$\OO(\log n)$ time after linear-time preprocessing~\cite{GHLST87}.
\Cref{F-lem:separation} and \cref{F-obs:trapezoid-disjoint} were stated for
input points, but their proofs use only general position, so we apply
them to arbitrary points of $\wv$.

A \emph{witness set} is a set of points of $\wv$ that are pairwise
witnesses (\cref{F-def:witness-set}), meaning that no point of $\wv$ sees
two of them. A \emph{maximum} witness set is one of the largest cardinality. The witness relation no longer lives on $\eb$, yet the
base intervals still order the candidates soundly.

\begin{observation}
\label{F-obs:vis-implies-interval}
If $\vis(w) \cap \vis(w') = \emptyset$, then
$\I(w) \cap \I(w') = \emptyset$.
\end{observation}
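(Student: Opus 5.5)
This follows directly from the definition of the base interval as a subset of the visibility region. By \cref{F-def:prelim-interval}, $\I(w) = \vis(w) \cap \eb$ and $\I(w') = \vis(w') \cap \eb$. Intersecting the two gives
\[
  \I(w) \cap \I(w') \;=\; \bigl(\vis(w) \cap \vis(w')\bigr) \cap \eb ,
\]
which is empty whenever $\vis(w) \cap \vis(w') = \emptyset$. That is the whole argument; it is the converse direction already recorded in \cref{F-rem:intervals-insufficient}. Unlike \cref{F-lem:separation}, it does not need general position, anchors, or windows, so it applies verbatim to arbitrary points of $\wv$ and not only to input points.

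Equivalently, in contrapositive form: if $\I(w) \cap \I(w')$ contains a base point $g$, then $g \in \vis(w)$ and $g \in \vis(w')$, so $g$ is a common point of the two visibility regions. One could also derive the statement from \cref{F-lem:separation}, since each of conditions (i) and (ii) contains the strict inequality $r(p) < \ell(q)$ or $r(q) < \ell(p)$. That route needlessly assumes general position, so I would give the direct set-theoretic argument.

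There is no real obstacle. The only point to check is that $\I(\cdot)$ is defined as the full intersection of $\vis(\cdot)$ with $\eb$, and not just as the segment between the window endpoints. \Cref{F-def:prelim-interval} states exactly that, and also that this intersection is a single closed segment.
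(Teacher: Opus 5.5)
Your proposal is correct and is essentially the paper's own proof. The paper also uses only $\I(p) = \vis(p) \cap \eb$ to conclude that any common point of $\I(w)$ and $\I(w')$ lies in both visibility polygons.
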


\begin{proof}
Since $\I(p) = \vis(p) \cap \eb$ for every $p \in \wv$, any common
point of $\I(w)$ and $\I(w')$ lies in both visibility polygons.
\end{proof}

The converse fails. Interval disjointness is only a necessary filter,
and full visibility disjointness must be checked separately. This is
what makes the continuous problem harder than the discrete one.

The proof of correctness has five parts. A witness set is a chain in a
partial order. The same points can always extend a shorter chain that ends further left than a longer one. Every witness can be
moved to the boundary without changing the size of the witness set. On
the boundary, every witness can be moved to one of finitely many
candidate points. A small table over anchors and chain lengths then
finds the longest chain among these candidates.

\subsection{Witness Sets Are Chains}
\label{F-subsec:ws-chains}

The separation lemma says that two visibility polygons are disjoint
exactly when one trapezoid lies to the left of the other on both
lines. We give this relation a name.

\begin{definition}[left of]
\label{F-def:ws-leftof}
Let $p$ and $q$ be points of $\wv$. We write $p \lo q$ and say that
$p$ is \emph{left of} $q$ when
\[
  r(p) < \ell(q)
  \qquad\text{and}\qquad
  \RA(p) \prec \LA(q).
\]
\end{definition}

The next lemma turns the witness problem into a question about
chains.

\begin{lemma}
\label{F-lem:ws-chain}
The relation $\lo$ is a strict partial order on the points of $\wv$.
Two points $p$ and $q$ satisfy $\vis(p) \cap \vis(q) = \emptyset$ if
and only if $p \lo q$ or $q \lo p$. A set $W \subseteq \wv$ is a
witness set if and only if it is a chain under $\lo$.
\end{lemma}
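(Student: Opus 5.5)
The proof has three parts, taken in the order of the statement: the order properties of $\lo$, the disjointness characterization, and the chain characterization.

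\emph{Irreflexivity.} This follows from $\ell(p) < r(p)$, which the paper derives from general position for every point. If $p \lo p$ held, we would get $r(p) < \ell(p)$, which is impossible.

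\emph{Transitivity.} Suppose $p \lo q$ and $q \lo s$. On the base we chain
\[
r(p) < \ell(q) < r(q) < \ell(s),
\]
using $\ell(q) < r(q)$ in the middle. On the upper chain we chain
\[
\RA(p) \prec \LA(q) \prec \RA(q) \prec \LA(s),
\]
using \cref{F-lem:anchor-order}, which gives $\LA(q) \prec \RA(q)$. That lemma was stated for $p \in S$, but its proof uses only general position and \cref{F-obs:chord-cross}, so it applies to any point of $\wv$. Strictness of both orders then gives $r(p) < \ell(s)$ and $\RA(p) \prec \LA(s)$, that is, $p \lo s$. Asymmetry follows from irreflexivity and transitivity.

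\emph{Disjointness characterization.} The second claim is exactly \cref{F-lem:separation} rewritten in the notation of \cref{F-def:ws-leftof}: condition~(i) is $p \lo q$ and condition~(ii) is $q \lo p$. As the section already notes, that proof uses only general position together with \cref{F-lem:shadow} and \cref{F-obs:chord-cross}, so it holds for arbitrary points of $\wv$.

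\emph{Chain characterization.} A set $W$ is a witness set if and only if every two distinct members have disjoint visibility polygons. By the previous step, this holds if and only if every two distinct members are comparable under $\lo$, which is precisely the definition of a chain.

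\emph{Main obstacle.} The only delicate point is extending general position from the points of $S$ to arbitrary points of $\wv$. Points lying on a line through two vertices could give $\ell(p) = r(p)$ or degenerate windows. I would handle this in one of two ways:
\begin{enumerate}
\item State the lemma for points in general position with respect to $\wv$.
\item Check directly that irreflexivity still holds in the degenerate case $\ell(p) = r(p)$, because the relation requires a strict $r(p) < \ell(p)$, while transitivity needs only $\ell(q) \le r(q)$ together with strictness at the outer inequalities. In that case \cref{F-lem:anchor-order} weakens to $\LA(q) \preceq \RA(q)$, and the chain on the upper chain still yields strict $\prec$ because its outer steps are strict.
\end{enumerate}
I would take the second route, so that the lemma holds for all points of $\wv$.
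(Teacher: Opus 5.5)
Your proof is correct and is essentially the paper's. Irreflexivity and transitivity come from chaining the base inequalities and the upper-chain inequalities, the second claim restates \cref{F-lem:separation}, and the third is the definition of a chain; the paper also uses your weak forms $\ell(q)\le r(q)$ and $\LA(q)\preceq\RA(q)$ from the start. One caveat: your degenerate-case route repairs only the order properties, because the second claim still rests on \cref{F-lem:shadow}, whose proof chooses $b$ with $\ell(p)<b<r(p)$ --- though the paper passes over this point in the same way.
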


\begin{proof}
No point $p$ satisfies $p \lo p$, because $\ell(p) \le r(p)$. Let
$p \lo q$ and $q \lo s$. Then
\[
  r(p) < \ell(q) \le r(q) < \ell(s).
\]
Every point $q$ satisfies $\LA(q) \preceq \RA(q)$, so
\[
  \RA(p) \prec \LA(q) \preceq \RA(q) \prec \LA(s).
\]
Hence $p \lo s$, and the relation is transitive. The second claim
restates \cref{F-lem:separation}. For the third claim, a set is a
witness set exactly when every two of its members are comparable
under $\lo$. This is the definition of a chain.
\end{proof}

A witness set is therefore a chain, and a maximum witness set is a longest chain. Transitivity also justifies the test used by the
algorithm. It suffices to compare a new point with the last point of
a chain.

\begin{lemma}
\label{F-lem:ws-order}
Let $x_1 \lo x_2 \lo \dots \lo x_m$ be a chain and let $w$ be a point
with $x_m \lo w$. Then $x_j \lo w$ for every index $j$, and the set
$\{x_1, \dots, x_m, w\}$ is a witness set.
\end{lemma}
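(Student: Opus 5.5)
The plan is to use transitivity of $\lo$ from \cref{F-lem:ws-chain}, together with the fact that a chain is a witness set.

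First I show that $x_j \lo w$ for every index $j$. This is an induction downward from $j = m$. The base case $x_m \lo w$ is the hypothesis. For the step, suppose $x_{j+1} \lo w$ holds. The chain hypothesis gives $x_j \lo x_{j+1}$, and transitivity of $\lo$ (\cref{F-lem:ws-chain}) then yields $x_j \lo w$. Equivalently, and without formal induction, the chain relations compose: $x_j \lo x_{j+1} \lo \dots \lo x_m \lo w$ gives $x_j \lo w$ by repeated transitivity. Likewise, any two chain elements with $i<j$ satisfy $x_i \lo x_j$ by the same composition.

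Second, I conclude that $\{x_1,\dots,x_m,w\}$ is a witness set. Every pair of its elements is comparable under $\lo$:
\begin{itemize}
\item pairs $x_i, x_j$ with $i<j$ satisfy $x_i \lo x_j$, as shown above;
\item pairs $x_j, w$ satisfy $x_j \lo w$ by the first step.
\end{itemize}
So the set is a chain. By the third claim of \cref{F-lem:ws-chain}, a chain is a witness set.

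Directly, \cref{F-lem:separation} gives $\vis(x_j)\cap\vis(w)=\emptyset$ for each $j$, and pairwise disjointness among the $x_i$ follows the same way. Distinctness also holds, since $w \ne x_j$ follows from irreflexivity of $\lo$.

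There is no real obstacle here. The only point to check is that transitivity was proved for arbitrary points of $\wv$, not just points of $S$. The paper notes that \cref{F-lem:separation} and the anchor-order fact $\LA(q)\preceq\RA(q)$ hold under general position for all points, so the argument of \cref{F-lem:ws-chain} applies verbatim.
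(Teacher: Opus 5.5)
Your proof is correct and is the paper's argument: transitivity of $\lo$ from \cref{F-lem:ws-chain} gives $x_j \lo x_m \lo w$, so the enlarged set is a chain and hence a witness set by the third claim of that lemma. Your downward induction through $x_{j+1},\dots,x_m$, and your remark that transitivity holds for arbitrary points of $\wv$, only spell out what the paper compresses into one line.
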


\begin{proof}
Transitivity gives $x_j \lo x_m \lo w$ for every $j < m$. The set is
a chain, so it is a witness set by \cref{F-lem:ws-chain}.
\end{proof}

Disjointness of two visibility polygons is a symmetric condition. A
new point may be disjoint from the last point of a chain and still lie
to its left. One extra comparison on the base, rules this out.

\begin{lemma}
\label{F-lem:ws-guard}
Let $x$ and $w$ satisfy $\vis(x) \cap \vis(w) = \emptyset$ and
$\ell(w) > r(x)$. Then $x \lo w$.
\end{lemma}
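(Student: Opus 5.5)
The plan is to combine the characterization in \cref{F-lem:ws-chain} with the necessary interval condition, and rule out the wrong orientation by comparing base endpoints. Since $\vis(x) \cap \vis(w) = \emptyset$, \cref{F-lem:ws-chain} (equivalently \cref{F-lem:separation}, which applies to arbitrary points of $\wv$ as noted above) gives $x \lo w$ or $w \lo x$. It therefore suffices to exclude $w \lo x$.

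Suppose, for contradiction, that $w \lo x$. By \cref{F-def:ws-leftof} this gives $r(w) < \ell(x)$. The base intervals are nondegenerate, so every point satisfies $\ell(p) \le r(p)$. Chaining the inequalities gives
\[
  r(x) < \ell(w) \le r(w) < \ell(x) \le r(x),
\]
which is impossible.

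Hence $w \lo x$ fails, and the remaining alternative $x \lo w$ holds. No step is a genuine obstacle. The only point to check is that $\ell(p) \le r(p)$ is available for arbitrary points of $\wv$, not just input points. This holds because $\I(p)$ is a nonempty closed segment by \cref{F-def:prelim-interval}.
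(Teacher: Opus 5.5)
Your proof is correct and is essentially the paper's argument. Like the paper, you invoke \cref{F-lem:ws-chain} to get $x \lo w$ or $w \lo x$, then rule out $w \lo x$ through the chain $r(x) < \ell(w) \le r(w) < \ell(x) \le r(x)$. This chain uses only that each base interval is nonempty.
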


\begin{proof}
By \cref{F-lem:ws-chain} we have $x \lo w$ or $w \lo x$. Suppose that
$w \lo x$. Then $r(w) < \ell(x)$. But $r(w) \ge \ell(w) > r(x) \ge
\ell(x)$. This is a contradiction, so $x \lo w$.
\end{proof}

\subsection{Substitution}
\label{F-subsec:ws-domination}

A chain leaves room to its right through two quantities. One is the
right endpoint of the last base interval. The other is the right
anchor of the last point. A point that is no larger in both can take
the place of the last point without losing any successor.

\begin{lemma}
\label{F-lem:ws-dom}
Let $x \lo q$, and let $x'$ be a point with $r(x') \le r(x)$ and
$\RA(x') \preceq \RA(x)$. Then $x' \lo q$.
\end{lemma}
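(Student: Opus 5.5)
This follows directly from \cref{F-def:ws-leftof}, by chaining the two hypotheses on $x'$ with the two conditions that $x \lo q$ supplies. Unfolding $x \lo q$ gives
\[
  r(x) < \ell(q) \qquad\text{and}\qquad \RA(x) \prec \LA(q).
\]
To show $x' \lo q$ we need the same two inequalities with $x$ replaced by $x'$, and we check them one line at a time.

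On the base line $\eb$, the hypothesis $r(x') \le r(x)$ combined with $r(x) < \ell(q)$ gives $r(x') < \ell(q)$. This uses only that $<$ is a linear order on $\eb$ oriented from $v_0$ to $v_n$.

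On the upper chain, the hypothesis $\RA(x') \preceq \RA(x)$ combined with $\RA(x) \prec \LA(q)$ gives $\RA(x') \prec \LA(q)$. This uses that $\preceq$ is the reflexive closure of the clockwise order $\prec$, which is linear on the points of the upper chain. The anchors of every point lie in that chain: they are reflex vertices or the endpoints $v_0$, $v_n$ by \cref{F-def:prelim-anchors}. So the comparisons are meaningful and transitivity applies.

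Combining the two inequalities yields $x' \lo q$ by \cref{F-def:ws-leftof}. There is no real obstacle. The one point to confirm is that the anchors and base endpoints of an arbitrary point of $\wv$, not only of an input point of $S$, are well defined elements of the two ordered sets. The paper notes this at the start of \cref{F-sec:wvp-max-witness}, and \cref{F-def:prelim-interval} and \cref{F-def:prelim-anchors} apply to every $p \in \wv$. Note also that no geometric argument such as \cref{F-lem:shadow} is needed here: the lemma is purely order-theoretic, and the geometry enters only through \cref{F-lem:ws-chain}.
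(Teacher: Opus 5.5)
Your proof is correct and matches the paper's argument: unfold $x \lo q$ into $r(x) < \ell(q)$ and $\RA(x) \prec \LA(q)$, then chain these with $r(x') \le r(x)$ and $\RA(x') \preceq \RA(x)$. Your remark that anchors and base endpoints are defined for arbitrary points of $\wv$ is accurate, but the paper's one-line proof does not need it spelled out.
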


\begin{proof}
From $x \lo q$ we get $r(x) < \ell(q)$ and $\RA(x) \prec \LA(q)$.
Hence
\[
  r(x') \le r(x) < \ell(q)
  \qquad\text{and}\qquad
  \RA(x') \preceq \RA(x) \prec \LA(q),
\]
which is the statement $x' \lo q$.
\end{proof}

\Cref{F-lem:ws-dom} says that, among chains of the same length whose last
points share a right anchor; only the smallest right endpoint matters.
This is the reason the table in \cref{F-subsec:ws-table} is small.

\subsection{Witnesses Can Be Moved to the Boundary}
\label{F-subsec:ws-boundary}

The algorithm's candidates lie on the boundary. A maximum
witness set may contain interior points. We show that every interior
witness can be pushed to the boundary along a ray through its right anchor;
see \cref{F-fig:ws-push}. The push leaves $r$ and the right anchor unchanged
and shrinks the trapezoid.

\begin{definition}[Trapezoid Shadow]
\label{F-def:ws-kernel}
The \emph{trapezoid shadow} of a point $w$ is the set $\tk(w)$ of all
points $x \in \wv$ whose trapezoid (\cref{F-def:trapezoid}) satisfies
$T(x) \subseteq T(w)$.
\end{definition}

Replacing $w$ by a point of $\tk(w)$ preserves every relation $\lo$ in which
$w$ takes part; the next lemma records the consequence we use.

\begin{lemma}
\label{F-lem:ws-kernel}
Let $x \in \tk(w)$ and let $z$ be a point with
$\vis(w) \cap \vis(z) = \emptyset$. Then
$\vis(x) \cap \vis(z) = \emptyset$.
\end{lemma}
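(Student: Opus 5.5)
The plan is to go through the separation characterization. By \cref{F-lem:ws-chain}, the hypothesis $\vis(w) \cap \vis(z) = \emptyset$ gives $w \lo z$ or $z \lo w$. By \cref{F-obs:trapezoid-disjoint}, each of these is equivalent to $T(w) \cap T(z) = \emptyset$. As noted at the start of \cref{F-sec:ws-correct}, \cref{F-lem:separation} and \cref{F-obs:trapezoid-disjoint} hold for arbitrary points of $\wv$ under general position, and this is the only place where that extension is needed.

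Since $x \in \tk(w)$, we have $T(x) \subseteq T(w)$, and therefore $T(x) \cap T(z) \subseteq T(w) \cap T(z) = \emptyset$. Applying \cref{F-obs:trapezoid-disjoint} to the pair $x, z$ then gives $x \lo z$ or $z \lo x$. Applying \cref{F-lem:separation} (equivalently the second claim of \cref{F-lem:ws-chain}) converts this back into $\vis(x) \cap \vis(z) = \emptyset$. So the whole proof is a sandwich: visibility disjointness to trapezoid disjointness, monotonicity of disjointness under inclusion, and back.

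One can also argue directly from the defining inequalities, which sidesteps the topological argument in \cref{F-obs:trapezoid-disjoint}. The inclusion $T(x) \subseteq T(w)$, with both trapezoids spanning the same two lines, forces the sides to nest: $[\ell(w), r(w)] \supseteq [\ell(x), r(x)]$ on $L_b$, and $\tau(\LA(w)) \le \tau(\LA(x))$, $\tau(\RA(x)) \le \tau(\RA(w))$ on $L_t$. Then $w \lo z$ gives $r(x) \le r(w) < \ell(z)$ and $\RA(x) \preceq \RA(w) \prec \LA(z)$, that is, $x \lo z$. The case $z \lo w$ is symmetric, using the left endpoints and left anchors. This is essentially \cref{F-lem:ws-dom} together with its mirror image.

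The one point needing care is degeneracy. For a non-input point, $T(x)$ may collapse to a segment, as in the pocket construction, or $\ell(x) = r(x)$ may occur if general position is not assumed for arbitrary points. The nesting argument still works, since a segment contained in $T(w)$ meets $L_b$ and $L_t$ inside the corresponding sides of $T(w)$. The strict inequalities survive because only the strict inequalities coming from $z$ are used, and the nesting contributes only non-strict ones.
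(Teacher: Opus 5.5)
Your proposal is correct, and its main argument is the paper's own. Disjointness of $\vis(w)$ and $\vis(z)$ gives condition~(i) or~(ii) of \cref{F-lem:separation}, hence $T(w)\cap T(z)=\emptyset$ by \cref{F-obs:trapezoid-disjoint}; the inclusion $T(x)\subseteq T(w)$ passes this to $T(x)$, and the same two results convert back, with you making explicit the step through \cref{F-lem:separation} that the paper's three-line proof leaves implicit. Your second, direct argument from the nested sides (essentially \cref{F-lem:ws-dom} and its mirror image) is also valid, and it has the small advantage of bypassing the connectedness argument in \cref{F-obs:trapezoid-disjoint}.
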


\begin{proof}
By \cref{F-obs:trapezoid-disjoint} the trapezoids $T(w)$ and $T(z)$ are
disjoint. Since $T(x) \subseteq T(w)$, the trapezoids $T(x)$ and
$T(z)$ are disjoint as well. The same observation gives
$\vis(x) \cap \vis(z) = \emptyset$.
\end{proof}

We now name the boundary point to which we push a witness. Let
$w \in \wv$ and let $\RA(w)$ be a reflex vertex. By
\cref{F-obs:anchor-collinear} the points $r(w)$, $\RA(w)$, $w$ lie on one
ray in this order, and we set
\[
  \RUC(w) \;=\; \win(r(w), \RA(w)),
\]
the first point of $\bd(\wv)$ on that ray after $\RA(w)$. The point $\RUC(w)$
lies on the upper chain, because the ray moves away from the base.

\begin{figure}[htbp]
    \centering
    \includegraphics[scale=0.9]{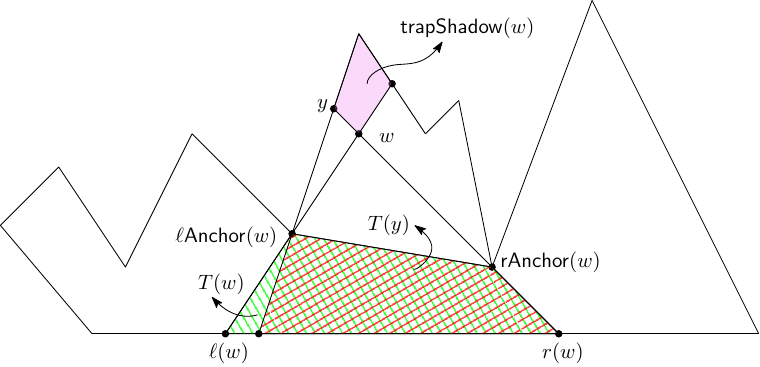}
    \caption{The push to the boundary. The point $w$ is pushed along the
      ray from $\RA(w)$ through $w$ to the boundary point $\RUC(w)$. The
      right endpoint $r(w)$ and the right anchor $\RA(w)$ do not change, see \cref{F-lem:ws-push}.}
    \label{F-fig:ws-push}
\end{figure}

\begin{lemma}
\label{F-lem:ws-push}
Let $w$ be a point of $\wv$ whose right anchor is a reflex vertex, and
let $y = \RUC(w)$. Then
\begin{gather*}
  r(y) = r(w) \qquad\text{and}\qquad \RA(y) = \RA(w), \\
  \ell(y) \ge \ell(w) \qquad\text{and}\qquad \LA(y) \succeq \LA(w).
\end{gather*}
Hence $T(y) \subseteq T(w)$, and $y$ lies in $\tk(w)$.
\end{lemma}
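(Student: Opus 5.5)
The plan is to move along the single line $\lambda$ through $r(w)$, $\RA(w)$ and $w$, and to show that $y=\RUC(w)$ inherits the right data of $w$ exactly while its left data can only move right. The final inclusion $T(y)\subseteq T(w)$ is then immediate from \cref{F-def:trapezoid}, because both the upper side and the lower side of $T(y)$ are subsegments of the corresponding sides of $T(w)$.

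First I place $y$. By \cref{F-obs:anchor-collinear} the points $r(w)$, $\RA(w)$, $w$ occur on $\lambda$ in this order, and $\seg{w\,r(w)}\subseteq\vis(w)$. The segment $\seg{\RA(w)\,w}$ lies in $\wv$, so the first boundary point after $\RA(w)$ on the ray is at or beyond $w$. Hence $w\in\seg{\RA(w)\,y}$, and the whole segment $\seg{y\,r(w)}$ lies in $\wv$.

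Next I prove $\RA(y)=\RA(w)$. Write $\spr(w)=w,\RA(w),\dots,v$. I claim that the concatenation of $\seg{y\,w}$ with $\spr(w)$ is the path $y,\RA(w),\dots,v$. This path is locally taut: at $\RA(w)$ it enters along the same line $\lambda$ as $\spr(w)$, so it turns in the same direction there, and all its other vertices are unchanged. In a simple polygon a locally taut polygonal path is the unique geodesic, so this path is $\spr(y)$ and $\RA(y)=\RA(w)$. By \cref{F-obs:anchor-collinear}, $r(y)$ is then the point where the ray from $y$ through $\RA(w)$ meets $\eb$, and this ray is $\lambda$. Therefore $r(y)=r(w)$.

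For $\ell(y)\ge\ell(w)$ I show $\I(y)\subseteq\I(w)$. Let $b\in\I(y)$. Since $r(y)=r(w)$, we have $b\le r(w)$. The closed triangle with vertices $y$, $b$, $r(w)$ has all three sides in $\wv$: the side $\seg{yb}$ because $y$ sees $b$, the side $\seg{y\,r(w)}$ by the first step, and the side $\seg{b\,r(w)}$ because it lies in $\eb$. In a simple polygon the boundary cannot enter such a triangle without crossing a side, and general position rules out a boundary vertex lying on a side. So the whole triangle lies in $\wv$. The point $w$ lies on the side $\seg{y\,r(w)}$, hence $\seg{wb}$ lies in the triangle and $w$ sees $b$. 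This gives $\I(y)\subseteq\I(w)$, and so $\ell(y)\ge\ell(w)$.

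The main obstacle is $\LA(y)\succeq\LA(w)$, because the two left anchors are unrelated a priori. I would argue by contradiction, mirroring the proof of \cref{F-lem:anchor-order}. If $\ell(y)=v_0$ then $\ell(w)=v_0$ as well, so both left windows are the single point $v_0$; if $\ell(w)=v_0$ but $\ell(y)\neq v_0$, the claim is trivial since every vertex succeeds $v_0$. Otherwise both left windows are chords. Suppose $\LA(y)\prec\LA(w)$. Since $\ell(y)\ge\ell(w)$, the clockwise order of the four endpoints is $\LA(y),\LA(w),\ell(y),\ell(w)$, and by \cref{F-obs:chord-cross} the chords $\LC(y)$ and $\LC(w)$ cross (or share an endpoint in the degenerate case) at a point $z$.

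I would derive the contradiction from where $z$ sits. The point $z$ lies on $\Lb(w)$ strictly beyond $\LA(w)$. I want to show that $z$ is also visible from $w$ by a path that bypasses $\LA(w)$, using the triangle $y$, $\ell(y)$, $r(w)$ from the previous step. That triangle lies in $\wv$, and it contains the part of $\Lb(w)$ that lies on its side of the segment $\seg{\ell(y)\,r(w)}$. A reflex vertex $\LA(w)$ strictly inside that triangle is impossible. If this placement argument turns out to be delicate, my fallback is the geodesic argument used for the right anchor, applied to $\spl$. The aim there is to show that $\spl(y)$ must wrap around $\LA(w)$, or around a vertex after it, because $\spl(w)$ separates $y$ from $u$ on the far side of $\lambda$.
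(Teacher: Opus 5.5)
Your proposal is correct in substance, and it takes a different route from the paper.

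\textbf{Comparison with the paper.} For the right data, the paper only remarks that $\RA(w)$ ``blocks the same side of $L$'' for $y$ as for $w$. Your observation that $y,\RA(w),\dots,v$ is locally taut, and hence equals $\spr(y)$, makes this precise. For $\ell(y)\ge\ell(w)$ the paper argues by contradiction. It places $y$ in $\wv^+_l(w)$ and shows, by a half-plane argument, that a sight line from $y$ to a base point left of $\ell(w)$ crosses $\LC(w)$ away from $\LA(w)$. It then traces the boundary out of the triangle $y,w,c$. For $\LA(y)\succeq\LA(w)$ it follows $\spl(y)$ to its crossing with $\LC(w)$.

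You replace all of this by one fact: a triangle whose three sides lie in $\wv$ lies in $\wv$, with its interior in the interior of $\wv$. Applied to the triangle $y,b,r(w)$, it gives the stronger inclusion $\I(y)\subseteq\I(w)$ in one line. Applied to the triangle $y,\ell(y),r(w)$, together with the crossing of the two left windows, it gives the anchor inequality. The paper's version stays inside its chord and subpolygon framework, and it locates $\LA(y)$ directly on the arc of $\wv^+_l(w)$ rather than arguing by contradiction.

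\textbf{Two points need repair.}

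First, your justification of the triangle fact is wrong. General position does not keep boundary vertices off the sides: $\RA(w)$ lies on $\seg{y\,r(w)}$, and $\LA(y)$ lies on $\seg{yb}$ when $b=\ell(y)$. The correct reason is topological. The complement of $\wv$ is connected and unbounded, so $\wv$ contains every bounded region enclosed by a closed curve in $\wv$. Moreover, an open set contained in $\wv$ lies in its interior.

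Second, your final step is left hedged and slightly misstated, but it closes at once; you need neither the ``bypass'' idea nor the geodesic fallback.
\begin{itemize}
\item Note that $z\neq\LA(w)$. Otherwise the boundary vertex $\LA(w)$ would lie on the chord $\LC(y)$ and would have to equal $\LA(y)$.
\item Let $\Delta$ be the triangle $y,\ell(y),r(w)$, and assume $w\neq y$, since otherwise there is nothing to prove.
\item The point $w$ lies on the side $\seg{y\,r(w)}$. The point $z\in\LC(y)\subseteq\seg{y\,\ell(y)}$ lies on a different side; this is the side you want, not $\seg{\ell(y)\,r(w)}$. Hence the relative interior of $\seg{wz}$ lies in the open triangle.
\item Since $\LA(w)$ lies strictly between $w$ and $z$, it would be a vertex of $\wv$ in the interior of $\wv$, which is absurd.
\end{itemize}
The degenerate case $\ell(y)=\ell(w)$, in which $z=\ell(y)$ is a corner of $\Delta$, goes the same way.
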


\begin{proof}
Write $L$ for the line through $w$ and $\RA(w)$. The segment from $y$
to $r(w)$ lies on $L$. It is the union of the segments from $y$ to
$w$, from $w$ to $\RA(w)$, and from $\RA(w)$ to $r(w)$. Each of these
lies in $\wv$. Hence $y$ sees $r(w)$, and the line of sight passes
through the reflex vertex $\RA(w)$. A line of sight through a reflex
vertex bounds the visible part of the base. So $r(w)$ is an endpoint
of $\I(y)$ and $\RA(w)$ is the anchor at that endpoint. The vertex
$\RA(w)$ blocks the same side of $L$ for $y$ as for $w$, because both
points lie on $L$ beyond $\RA(w)$. This proves the first two
equalities.

We now prove $\ell(y) \ge \ell(w)$. Write $L'$ for the line through
$w$ and $\LA(w)$, and write $c_0 = \LC(w)$ for the left window. By
\cref{F-obs:window} the window $c_0$ is a chord. It splits $\wv$
into the part $\wv^-_l(w)$ that contains $v_0$ and the part
$\wv^+_l(w)$ that contains $v_n$. The point $w$ lies in
$\wv^+_l(w)$, and the segment from $w$ to $y$ meets $L'$ only at $w$.
Hence $y$ lies in $\wv^+_l(w)$ as well.

Suppose that $y$ sees a base point $b$ with $b < \ell(w)$. The point
$b$ lies in $\wv^-_l(w)$. So the segment from $y$ to $b$ crosses the
chord $c_0$ at a point $c$. We first show that $c \ne \LA(w)$. The
two edges of $\wv$ at $\LA(w)$ lie in one closed half plane of $L'$,
because the interior angle at $\LA(w)$ contains both directions of
$L'$. The base points left of $\ell(w)$ lie in the same half plane,
and so does $y$. If $c = \LA(w)$, then the ray from $y$ through
$\LA(w)$ leaves this half plane at $\LA(w)$ and meets the base on the
other side, that is, at a point right of $\ell(w)$. This contradicts
$b < \ell(w)$. So $c$ lies in the relative interior of $c_0$, and
$\LA(w)$ lies between $w$ and $c$ on $L'$.

Consider the triangle $\Delta$ with corners $y$, $w$, and $c$. Its
three sides lie in $\wv$. The boundary of $\wv$ meets these sides
only at $y$ and at $\LA(w)$. The two edges at $\LA(w)$ enter the
interior of $\Delta$. Follow the boundary of $\wv$ from $\LA(w)$
along either edge. It cannot cross a side of $\Delta$ so that it can leave $\Delta$ only through the corner $y$. Hence, the whole boundary
of $\wv$ lies in $\Delta$. But the base point $\ell(w)$ lies on $L'$
beyond $c$, outside $\Delta$. This is a contradiction. So $y$ sees
no base point to the left of $\ell(w)$, and $\ell(y) \ge \ell(w)$.

Finally, we compare the left anchors. The shortest path from $y$ to
$v_0$ starts in $\wv^+_l(w)$ and ends in $\wv^-_l(w)$, so it crosses
$c_0$. If its first bend comes before the crossing, then $\LA(y)$ is
a reflex vertex on the boundary of $\wv^+_l(w)$ other than $\LA(w)$,
and this boundary arc follows $\LA(w)$ in clockwise order. If the
first segment of the path reaches $c_0$, then it reaches $\LA(w)$,
because the argument above rules out a first segment that ends in the
relative interior of $c_0$. In both cases $\LA(y) \succeq \LA(w)$.

The trapezoid $T(y)$ has upper side
$\seg{\tau(\LA(y))\,\tau(\RA(w))}$ and lower side
$\seg{\beta(\ell(y))\,\beta(r(w))}$. Both sides are contained in the
corresponding sides of $T(w)$. Convex hulls of nested sides on two
parallel lines are nested. So $T(y) \subseteq T(w)$.
\end{proof}

When $\RA(w)$ is the base endpoint $v_n$, the point $w$ sees $v_n$.
The same proof works with the ray from $v_n$ through $w$, and we use
the same name $\RUC(w)$ for the resulting boundary point.

\begin{corollary}
\label{F-cor:ws-boundary}
Every witness set has the same size as a witness set that lies in
$\bd(\wv)$ and whose members are the points $\RUC(w)$ for the members
$w$ of the original set.
\end{corollary}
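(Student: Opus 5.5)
We replace each member $w$ of a given witness set $W$ by $\RUC(w)$ and check three things: the new points are pairwise witnesses, they are pairwise distinct, and they lie on $\bd(\wv)$.

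\textbf{Step 1: the substitution is well defined and lands on the boundary.} For every $w\in W$ the right anchor $\RA(w)$ is either a reflex vertex or the base endpoint $v_n$. In the first case \cref{F-lem:ws-push} applies directly. In the second case we use the remark following that lemma, with the ray from $v_n$ through $w$. In both cases $\RUC(w)$ is the first point of $\bd(\wv)$ on a ray, so it lies in $\bd(\wv)$, and we get $\RUC(w)\in\tk(w)$, that is, $T(\RUC(w))\subseteq T(w)$.

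\textbf{Step 2: the witness property is preserved, by replacing one point at a time.} Enumerate $W=\{w_1,\dots,w_k\}$. Let $W_j$ be the set in which $w_1,\dots,w_j$ have been replaced by $y_i=\RUC(w_i)$. Assume $W_{j-1}$ is a witness set, and let $z$ be any other member of it. Then $\vis(w_j)\cap\vis(z)=\emptyset$. Since $y_j\in\tk(w_j)$, \cref{F-lem:ws-kernel} gives $\vis(y_j)\cap\vis(z)=\emptyset$, so $W_j$ is a witness set. A cleaner alternative works with trapezoids directly: by \cref{F-lem:separation} and \cref{F-obs:trapezoid-disjoint} the trapezoids $T(w_i)$ are pairwise disjoint. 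The trapezoids $T(y_i)$ are nested inside them, hence also pairwise disjoint, and the same two results turn this back into disjoint visibility polygons. Both routes need \cref{F-obs:trapezoid-disjoint} for arbitrary points of $\wv$, which the section grants under general position.

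\textbf{Step 3: the size is preserved.} We must rule out $y_i=y_j$ for $i\neq j$. Since $T(y_i)\subseteq T(w_i)$ and $T(y_j)\subseteq T(w_j)$ are disjoint, and each trapezoid is nonempty, the points $y_i$ and $y_j$ are distinct; otherwise they would have the same trapezoid. Hence $\{\RUC(w):w\in W\}$ has exactly $|W|$ members.

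The main obstacle is the degenerate case $\RA(w)=v_n$, which the corollary covers only through the sentence ``the same proof works.'' I would check the two equalities $r(y)=v_n$ and $\RA(y)=v_n$ directly: the segment from $y$ to $v_n$ passes through $w$, so $y$ sees $v_n$. The two inequalities $\ell(y)\ge\ell(w)$ and $\LA(y)\succeq\LA(w)$ should follow from the same chord argument as in \cref{F-lem:ws-push}, using $\LC(w)$. A second point to confirm is that the pushed points still satisfy the general-position hypotheses assumed in \cref{F-obs:trapezoid-disjoint}. If they did not, I would fall back on the nested-trapezoid argument of Step 2, which uses only \cref{F-obs:trapezoid-disjoint} in the direction from disjoint trapezoids to disjoint visibility polygons, applied to the pushed points.
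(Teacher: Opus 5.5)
Your proposal is correct and follows the paper's proof: the paper also uses \cref{F-lem:ws-push} to put $\RUC(w)$ in $\tk(w)$, applies \cref{F-lem:ws-kernel} twice to each pair (your one-at-a-time replacement does the same thing), and gets distinctness from the disjoint visibility polygons, whereas you get it from disjoint nonempty trapezoids, which amounts to the same. (Your closing fallback does not actually avoid the general-position question, because it uses the same direction of \cref{F-obs:trapezoid-disjoint}, applied to the pushed points, that \cref{F-lem:ws-kernel} already relies on; since the paper grants that result for arbitrary points of $\wv$, nothing is lost.)
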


\begin{proof}
Let $W$ be a witness set and let $W'$ be the set of points $\RUC(w)$
with $w \in W$. Let $w_1$ and $w_2$ be two members of $W$. Their
visibility polygons are disjoint. By \cref{F-lem:ws-push} and
\cref{F-lem:ws-kernel}, the point $\RUC(w_1)$ has a visibility polygon
disjoint from $\vis(w_2)$. A second application gives that
$\RUC(w_1)$ and $\RUC(w_2)$ have disjoint visibility polygons. In
particular, they are distinct points. So $W'$ is a witness set of the
same size.
\end{proof}

\subsection{Discretization}
\label{F-subsec:ws-discrete}

The push lands on a window endpoint in the sense of
\cref{F-def:prelim-window}: the base point $r(w)$ sees $\RA(w)$, because
$\RA(w)$ lies on the segment from $w$ to $r(w)$ and this segment lies
in $\wv$, and the ray grazes the reflex vertex $\RA(w)$, so it supports
a window of $\vis(r(w))$ whose far endpoint is $\RUC(w)$. We now describe the finite set of static candidates that the algorithm uses in
addition.

We next fix the finite set of points at which the anchors of a boundary point can change.

\begin{definition}[Transition Points]
\label{F-def:ws-transition}
Let $Q$ be the set consisting of the reflex vertices of $\wv$ and the
two endpoints of $\eb$. The set $\tp$ of \emph{transition points}
consists of the points $\win(a, b)$ over all ordered pairs $(a, b)$ of
distinct points of $Q$ for which \cref{F-def:prelim-window} defines this
point.
\end{definition}

The set $\tp$ has at most $(\rho + 2)(\rho + 1)$ members. A point of the upper chain that is not a vertex and not a transition point has
the same anchors as its neighbors on the chain.

\begin{lemma}
\label{F-lem:ws-pieces}
Remove the vertices and the transition points from the upper chain.
Each remaining open arc is called a \emph{piece}. On a piece, the
anchors $\LA$ and $\RA$ are constant. Along a piece, the functions
$\ell$ and $r$ are continuous, and each is either constant or strictly
monotone. When both are strictly monotone, they increase in the same
direction.
\end{lemma}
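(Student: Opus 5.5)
We argue by a perturbation principle: the combinatorial type of the two shortest paths $\spl(p)$ and $\spr(p)$ can change only where $p$ crosses the extension of a segment joining two points of $Q$. Fix a piece $\gamma$, an open arc of one edge of the upper chain (it contains no vertex). For $p$ on $\gamma$, the anchor $\RA(p)$ is the first vertex of $\spr(p)$. Shortest path maps with source $v_n$ are standard. As $p$ moves, this first vertex stays the same unless $p$ crosses the boundary of a cell of the shortest path map. Each such boundary lies on the extension beyond a vertex $b$ of a segment $\seg{ab}$, where $a$ is the predecessor of $b$ on the path ($a\in Q$, $b$ reflex), or on the line $\seg{v_n b}$ extended. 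In either case, where it first meets $\bd(\wv)$ is $\win(a,b)$ with $a,b\in Q$. Hence $\RA$ can change along the upper chain only at points of $\tp$ or at vertices, and on $\gamma$ it is constant. The same holds for $\LA$, using the map with source $v_0$.

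I would make this rigorous as follows. Suppose $\RA$ is not constant on $\gamma$. Take a point $x\in\gamma$ that is a supremum of a maximal subarc on which $\RA$ is constant. By the standard continuity of the shortest path under perturbation of its endpoint, the path $\spr(x)$ must pass through two anchors at once: its first segment contains both the old and the new first vertex. This puts $x$ on the line through two points of $Q$, beyond the second one, and so $x=\win(a,b)$. Ghosh's characterisation gives that cell boundaries of shortest path maps are segments of this form.

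Once $\RA$ is fixed, say equal to $a$, Observation~\ref{F-obs:anchor-collinear} gives $r(p)$ as the intersection of the ray from $p$ through $a$ with the line of $\eb$. If $a=v_n$, then $r\equiv v_n$ is constant. Otherwise $r(p)$ is a central projection from the fixed point $a$ of the segment $\gamma$ onto the base line. This projection is a continuous fractional linear map, hence either strictly monotone or constant. It is constant only if $\gamma$'s supporting line passes through $a$, and general position excludes that unless $\gamma$ is incident to $a$; in that case the projection is constant. The same argument applies to $\ell$ with anchor $\LA$.

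For the last claim, let $\gamma$ lie on a line $g$ and move $p$ along $g$ in one direction. The projection from $a$ onto $\eb$ preserves or reverses orientation according to the side of $g$ that $a$ lies on relative to the base. I would show that both anchors $\LA(p)$ and $\RA(p)$ lie on the same side of $g$ as $\eb$. Both segments $\seg{p\,\ell(p)}$ and $\seg{p\,r(p)}$ leave $p$ into $\wv$, and the interior of $\wv$ near the edge $\gamma$ lies on a single side of $g$. Each anchor lies on one of these segments, strictly between $p$ and the base. So both anchors, and the base points $\ell(p)$ and $r(p)$, lie in the same open half plane of $g$. Two central projections from points on the same side, onto the same line lying on that side, induce the same orientation. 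Hence $\ell$ and $r$ increase in the same direction.

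The main obstacle is the first step: making the claim that anchors change only at transition points precise without fully developing shortest path map theory, in particular handling degenerate positions where the first path segment ends exactly at a vertex of $\gamma$'s edge.
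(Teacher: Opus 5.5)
Your proposal is correct and follows essentially the paper's proof: anchors can change only where the moving point becomes collinear with two points of $Q$, that is, at a transition point, and after that $\ell$ and $r$ are central projections from fixed centres, each lying strictly between $p$ and its base point, which forces them to have the same orientation. Your version is slightly more careful than the paper's, since you also handle an anchor that is an endpoint of the edge carrying the piece, where the corresponding function is constant and which the paper's proof passes over. Your closing orientation principle should rest on this betweenness rather than on the side of $g$ alone, because the base edge need not lie on one side of $g$.
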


\begin{proof}
Let $x$ move along a piece. The left anchor $\LA(x)$ is the first
reflex vertex on the shortest path from $x$ to $v_0$. It changes only
when $x$ becomes collinear with $\LA(x)$ and the next vertex on that
path, or with $\LA(x)$ and $v_0$. At such a point $x$ is a transition
point, because $x$ sees $\LA(x)$ and the ray through the two vertices
enters $\wv$ there. The same holds for the right anchor. So both
anchors are constant on the piece. The piece lies on one edge of
$\wv$. The value $\ell(x)$ is the image of $x$ under the central
projection from the edge to the baseline with center $\LA(x)$, and
$r(x)$ is the image under the projection with center $\RA(x)$. A
central projection between two lines whose center is not on either line
is continuous and strictly monotone. Both centers lie between the edge
and the base, so both projections reverse the orientation. Hence, the
two functions increase in the same direction. When an anchor is a base
endpoint, the corresponding function is constant.
\end{proof}

\begin{figure}[htbp]
\centering
\captionsetup[subfigure]{skip=1pt}

\begin{subfigure}{\textwidth}
\centering
\includegraphics[scale=0.6]{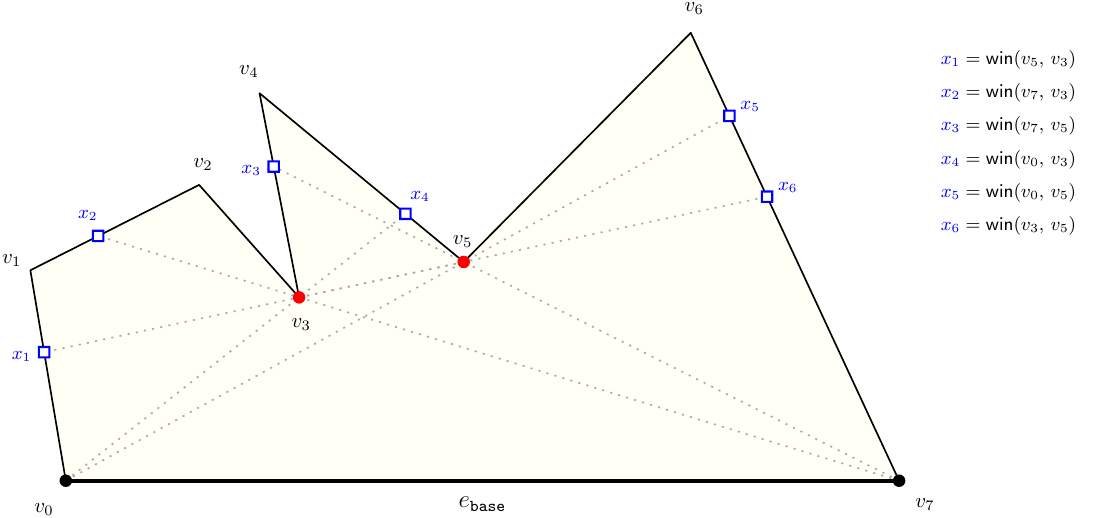}
\vspace{1mm}
\caption{The set $Q = \{v_0, v_3, v_5, v_7\}$ (disks) and the
transition points $x_1, \dots, x_6$ (squares). Each $x_i$ is
$\win(a,b)$ for the dotted segment from $a$ through $b$.}
\label{fig:ws-transition}
\end{subfigure}

\vspace{3ex}

\begin{subfigure}{\textwidth}
\centering
\includegraphics[scale=0.6]{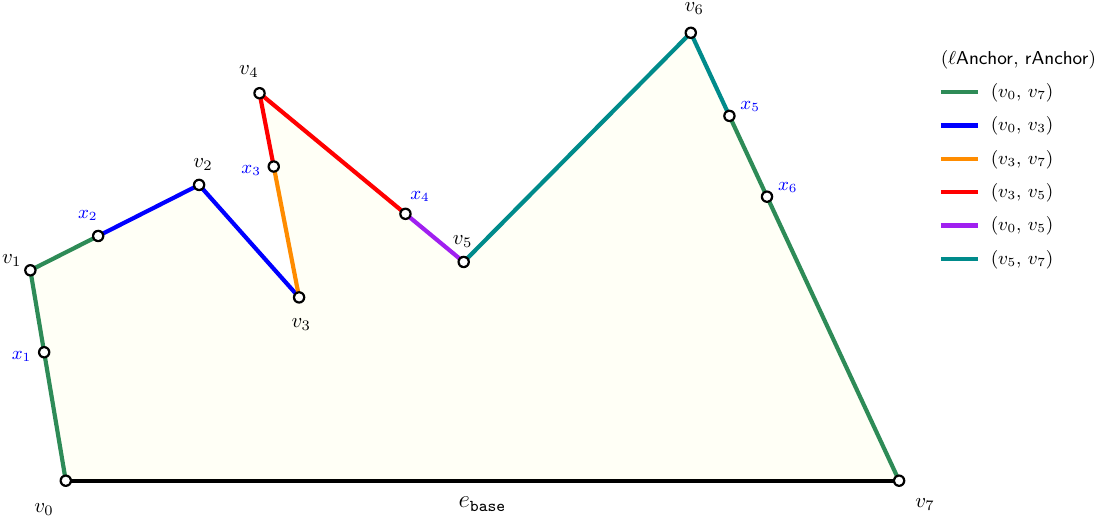}
\vspace{1mm}
\caption{The pieces of the upper chain, coloured by their anchor
pair $(\LA,\RA)$, and the static candidates (circles).}
\label{fig:ws-static}
\end{subfigure}
\vspace{2mm}
\caption{Transition points (\cref{F-def:ws-transition}) and static
candidates (\cref{F-def:ws-static}) in a weak visibility polygon with
base $\eb = v_0v_7$ and reflex vertices $v_3$ and $v_5$. The upper
chain is not monotone. The pairs whose second point is $v_0$ or
$v_7$ contribute the base endpoints themselves to~$\tp$. The anchors
are constant on every piece, but they need not change at a transition
point, as the points $x_1$ and $x_6$ show. Every candidate other than
$v_0$ and $v_7$ ends two pieces and therefore appears twice
in~$\cand$.}
\label{fig:ws-transition-static}
\end{figure}

\begin{definition}[Static Candidates]
\label{F-def:ws-static}
The set $\cand$ of \emph{static candidates} consists of the endpoints of all
pieces (\cref{F-lem:ws-pieces}). Each static candidate carries the two anchors of its
piece and the limits of $\ell$ and $r$ along that piece. A boundary
point that ends two pieces appears twice in $\cand$, once for each
piece.
\end{definition}

The set $\cand$ has at most $2(n + |\tp|)$ members. The window
endpoints are not static. They depend on the base point that is
fired, and the algorithm creates them as it runs.

\subsection{Firing}
\label{F-subsec:ws-firing}

To \emph{fire} a base point $g$ means to return $\win(g, \mu)$ for
every reflex vertex $\mu$ that supports a window of $\vis(g)$.

\begin{observation}
\label{F-obs:fire}
Firing a base point returns at most $\rho$ points, in clockwise boundary order, and it runs in $\OO(n + \rho \log n)$ time. The visibility polygon of a base point is computed in $\OO(n)$ time
\cite{JoeSimpson87}, and each returned point costs $\OO(\log n)$ to
locate its base interval by the anchor queries of \cref{F-sec:prelims}.
\end{observation}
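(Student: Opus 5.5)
The observation makes three claims: firing a base point $g$ returns at most $\rho$ points, the points come out in clockwise boundary order, and the whole operation costs $\OO(n + \rho \log n)$ time. I will prove them in that order. The first step is to compute $\vis(g)$ as a polygon in $\OO(n)$ time with the linear-time algorithm of Joe and Simpson~\cite{JoeSimpson87}. The output is a simple polygon. Its boundary consists of pieces of $\bd(\wv)$ together with windows. Each window $\seg{\mu\,\win(g,\mu)}$ is a chord of $\wv$ (\cref{F-obs:window}) starting at a reflex vertex $\mu$ that blocks the view of $g$. Since $\mu$ must be reflex, there are at most $\rho$ windows, so firing returns at most $\rho$ points.

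For the ordering, I traverse the boundary of $\vis(g)$ once. I read off each window together with its far endpoint $\win(g,\mu)$.

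The ordering claim needs a short argument. Windows of a single point cannot cross, since each one lies on a ray from $g$. Because $g \in \eb$, every window endpoint lies on the upper chain. The boundary of $\vis(g)$ is a simple closed curve, and its traversal follows the order of $\bd(\wv)$ on the parts it shares with $\bd(\wv)$; it jumps along a window only between the two endpoints of that window. Non-crossing chords read along this traversal produce their far endpoints in the boundary order of $\wv$. If the traversal happens to give the reverse order, I reverse the list. Alternatively, the list can be sorted in $\OO(\rho\log\rho)$ time, which still fits the budget. Identifying the windows and sorting are both linear in the size of $\vis(g)$, which is $\OO(n)$.

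For each returned point $x$, I use the two shortest path maps with point location built in \cref{F-subsec:disws-runtime} (via \cite{GHLST87}). Two $\OO(\log n)$ queries give $\LA(x)$ and $\RA(x)$. A single ray intersection with $\eb$ then gives $\ell(x)$ and $r(x)$ in $\OO(1)$ time, by \cref{F-obs:anchor-collinear}. Over all returned points this costs $\OO(\rho\log n)$, and adding the visibility polygon computation gives $\OO(n+\rho\log n)$.

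The only delicate step is the ordering claim, that is, checking that the order in which windows appear along $\bd(\vis(g))$ agrees with the clockwise order of their far endpoints on $\bd(\wv)$. If this argument turns out to be fragile, the fallback is simply to sort the at most $\rho$ points by their positions along the chain, which keeps the stated bound.
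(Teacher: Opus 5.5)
Your proposal is correct and follows the same route as the paper. You compute $\vis(g)$ in $\OO(n)$ time by Joe and Simpson, bound the returned points by $\rho$ because each window is supported by its own reflex vertex, and charge $\OO(\log n)$ per returned point for the shortest-path-map anchor queries. Your traversal argument for the clockwise order, in which each window replaces a disjoint pocket arc of $\bd(\wv)$ along $\bd(\vis(g))$, supplies a detail the paper only asserts; the $\OO(\rho\log\rho)$ sorting fallback is unnecessary but harmless.
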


The next lemma describes the base interval of a returned point.

\begin{lemma}
\label{F-lem:ws-fire}
Let $g$ be a base point, let $\mu$ be a reflex vertex that supports a
window of $\vis(g)$, and let $t = \win(g, \mu)$. Then $g$ is an
endpoint of $\I(t)$, and $\mu$ is the anchor of $t$ at that endpoint.
Either $\ell(t) = g$ and $\LA(t) = \mu$, or $r(t) = g$ and
$\RA(t) = \mu$. In the first case, $r(t) > g$.
\end{lemma}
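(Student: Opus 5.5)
The plan is to mirror the first paragraph of the proof of \cref{F-lem:ws-push}, with the base point $g$ in the role of $r(w)$. Write $L$ for the line through $g$ and $\mu$. By \cref{F-def:prelim-window}, $g$ sees $\mu$, and the ray from $g$ through $\mu$ enters the interior of $\wv$ at $\mu$. The point $t=\win(g,\mu)$ is the first boundary point on that ray after $\mu$, so the segment $\seg{gt}$ lies in $\wv$. Hence $t$ sees $g$, that is, $g\in\I(t)$, and this line of sight grazes the reflex vertex $\mu$.

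The first real step is to show that $g$ is an endpoint of $\I(t)$ with anchor $\mu$. The two edges of $\wv$ at $\mu$ lie in one closed half plane $H$ of $L$, because $\mu$ supports a window and the ray only touches the boundary there. Rotate the sight line from $t$ slightly about $t$, towards the side $H$. Near $\mu$ the rotated segment passes through the exterior of $\wv$, since the edges at $\mu$ lie on that side. So $t$ sees no base point on the corresponding side of $g$ close to $g$. Because $\I(t)$ is an interval (\cref{F-def:prelim-interval}), $g$ is an endpoint of it.

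Next comes the identification of the anchor. By \cref{F-obs:anchor-collinear}, the endpoint $\ell(t)$ is collinear with $t$ and $\LA(t)$, and the endpoint $r(t)$ with $t$ and $\RA(t)$. The segment $\seg{tg}$ contains $\mu$. Consider the shortest path from $t$ to the base endpoint on the blocked side. It wraps around $\mu$, since it cannot cross the chord $\seg{\mu t}$ (\cref{F-obs:window}). So its first reflex vertex is $\mu$. Under general position no other vertex lies on $L$, which gives uniqueness. Which case we are in is decided by whether the edges at $\mu$ lie to the left or to the right of $L$ as seen from $g$. This yields either $\ell(t)=g$ with $\LA(t)=\mu$, or $r(t)=g$ with $\RA(t)=\mu$.

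The final claim is that in the first case $r(t)>g$. We have $\ell(t)=g$, and $\ell(t)\le r(t)$. Equality is excluded by the general-position argument used before \cref{F-lem:anchor-order}, since $\ell=r$ would force $\LA(t)$, $\RA(t)$ and $t$ to be collinear. I expect the main obstacle to be this last point when $t$ is a boundary point rather than an input point. If $\RA(t)=\mu$ as well, then $\I(t)=\{g\}$, as in the pocket construction of \cref{F-lem:perm-realization}, and then $r(t)=g$. So this must be ruled out by general position. I would try to argue that $\LA(t)$ and $\RA(t)$ are distinct vertices, with $\LA(t)\prec\RA(t)$ by the proof of \cref{F-lem:anchor-order}, and that a line through $t$ containing both of them would contain two vertices, contrary to general position. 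Failing that, I would perturb $g$ slightly inside $\I(t)$.
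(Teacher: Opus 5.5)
The proposal proves the endpoint and anchor claims along the paper's route, but it does not prove the final claim $r(t) > g$. The paper's proof is short: $t$ sees $g$ along a sight line that grazes $\mu$, so $\mu$ hides one side, $g$ is an endpoint of $\I(t)$, and $\mu$ is the anchor there. Your half-plane observation and the rotation toward $H$ make this precise. Together with \cref{F-obs:anchor-collinear}, which puts the anchor on $\seg{tg}$, this part is fine.

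\textbf{Why your argument for $r(t) > g$ fails.} The paper settles it with \emph{by general position}. Your main line cites the argument stated before \cref{F-lem:anchor-order}, but that argument does not transfer to $t$.
\begin{itemize}
\item It needs $\LA(t) \neq \RA(t)$.
\item It rests on the hypothesis that points of $S$ avoid lines through two vertices, which a constructed window endpoint need not satisfy.
\item The global assumption only forbids three collinear vertices. A line through $g$, $\mu$, $t$ that carries a second vertex is therefore allowed, so ``contrary to general position'' is false here.
\end{itemize}
Your fallbacks fail as well. The proof of \cref{F-lem:anchor-order} uses $\ell(p) < r(p)$, so invoking it to separate the anchors is circular. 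Perturbing $g$ changes $t$, so it says nothing about $\win(g,\mu)$.

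\textbf{The missing idea.} The case you fear, $\RA(t) = \mu$, is already excluded by your own half-plane fact: $\mu$ blocks only the $H$ side. What remains is to show that nothing else blocks the opposite side $H'$. A second reflex vertex on $\seg{g\mu}$ with edges in $H'$, or $g = v_n$, would give $\I(t) = \{g\}$. Weak visibility rules both out, as follows.
\begin{itemize}
\item Pick a point $p$ of the pocket behind the window $\seg{\mu t}$, strictly on the $H$ side of $L$. Since $\wv$ is a weak visibility polygon, $p$ sees some $b \in \eb$.
\item $\eb$ lies on the other side of this chord, so $\seg{pb}$ meets the window at a point $y$. Having crossed $L$ there, the segment stays in $H'$. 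Hence $b \neq g$ lies on the $H'$ side, which in the first case means $b > g$.
\item The triangle with corners $y$, $g$, $b$ has all three sides in $\wv$, so it lies in $\wv$. This is the same kind of triangle argument as in the proof of \cref{F-lem:ws-push}.
\item Take a base point $b'$ slightly right of $g$. The segment $\seg{tb'}$ first runs close to $\seg{ty}$ on its $H'$ side, which is free there. At $t$ its direction points into $\wv$, the relative interior of the window is interior, and at $\mu$ only $H$ is blocked. It then enters the triangle through its side $\seg{yb}$.
\end{itemize}
So $t$ sees $b'$, and $r(t) > g$, with no general-position assumption on $g$ at all.
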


\begin{proof}
The segment from $g$ to $t$ lies in $\wv$ and passes through the
reflex vertex $\mu$. So $t$ sees $g$, and the line of sight grazes
$\mu$. Base points on one side of $g$ are hidden from $t$ by $\mu$.
Hence $g$ is an endpoint of $\I(t)$, and $\mu$ is the reflex vertex on
the line of sight to that endpoint. This gives the two cases. In the
first case, $r(t) > \ell(t) = g$ by general position.
\end{proof}

Chains grow to the right in a second sense. Their right anchors move
clockwise.

\begin{lemma}
\label{F-lem:ws-anchor}
Let $x_1 \lo x_2 \lo \dots \lo x_m$ be a chain. Then
$\RA(x_1) \prec \RA(x_2) \prec \dots \prec \RA(x_m)$. Consequently
$m \le \rho + 1$.
\end{lemma}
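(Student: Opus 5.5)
The plan is to derive the first claim directly from \cref{F-def:ws-leftof} together with \cref{F-lem:anchor-order}, and then obtain the bound on $m$ by counting the possible values of the right anchor.

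\textbf{Strict increase of the right anchors.} Fix a consecutive pair $x_j \lo x_{j+1}$. By \cref{F-def:ws-leftof} this gives $\RA(x_j) \prec \LA(x_{j+1})$. By \cref{F-lem:anchor-order}, or its extension to arbitrary points of $\wv$ noted at the start of this section (its proof uses only general position and the inequality $\ell(p) < r(p)$), we also have $\LA(x_{j+1}) \preceq \RA(x_{j+1})$. Combining the two gives $\RA(x_j) \prec \RA(x_{j+1})$. Chaining over $j = 1, \dots, m-1$ yields the displayed strict chain. The clockwise order $\prec$ is a strict total order on the upper chain, so the right anchors $\RA(x_1), \dots, \RA(x_m)$ are pairwise distinct.

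\textbf{Counting.} Each right anchor is, by \cref{F-def:prelim-anchors}, either a reflex vertex of $\wv$ or the base endpoint $v_n$, which gives at most $\rho + 1$ possible values. Distinctness then gives $m \le \rho + 1$.

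\textbf{The delicate point.} I need to rule out that both $v_0$ and $v_n$ serve as right anchors. Since $\RA(p)$ is either a reflex vertex or $v_n$, the value $v_0$ never occurs. The only other subtlety is whether $\LA(q) \preceq \RA(q)$ may be used for non-input points, where equality is possible in degenerate positions. The non-strict form suffices for the argument above, because the required strictness comes from $\RA(x_j) \prec \LA(x_{j+1})$ in the definition of $\lo$. I therefore only need the weak inequality, which is exactly what the transitivity proof of \cref{F-lem:ws-chain} already used.
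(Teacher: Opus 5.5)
Your proposal is correct and follows the paper's proof essentially verbatim. You combine $\RA(x_j) \prec \LA(x_{j+1})$ from the definition of $\lo$ with the weak inequality $\LA(x_{j+1}) \preceq \RA(x_{j+1})$ to get strictly increasing right anchors, then count the at most $\rho+1$ possible values (reflex vertices and $v_n$); your observation that only the non-strict anchor ordering is needed for arbitrary points matches how the paper uses it in the transitivity argument of \cref{F-lem:ws-chain}.
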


\begin{proof}
For consecutive points, $x_j \lo x_{j+1}$ gives
$\RA(x_j) \prec \LA(x_{j+1})$, and every point satisfies
$\LA(x_{j+1}) \preceq \RA(x_{j+1})$. So the right anchors strictly
increase. A right anchor is a reflex vertex or the endpoint $v_n$.
There are at most $\rho + 1$ such points.
\end{proof}

By \cref{F-lem:ws-chain} and \cref{F-lem:ws-anchor}, no witness set has
more than $\rho + 1$ members.

\subsection{The State Table}
\label{F-subsec:ws-table}

Let $\anc$ be the set of possible right anchors. It consists of the
reflex vertices and of $v_n$, so $|\anc| \le \rho + 1$. Let
$\level = \rho + 1$. A \emph{state} is a pair $(i, \mu)$ with
$1 \le i \le \level$ and $\mu \in \anc$.

\begin{definition}[Table Entries]
\label{F-def:ws-table}
For a state $(i, \mu)$, let $\rst(i, \mu)$ be the infimum of
$r(x_i)$ over all chains $x_1 \lo \dots \lo x_i$ of points of $\wv$
with $\RA(x_i) = \mu$. When no such chain exists, we set
$\rst(i, \mu) = +\infty$.
\end{definition}

The largest index $i$ with a finite entry in row $i$ is the size of a
maximum witness set. This follows from \cref{F-lem:ws-chain}. We show
that the table can be filled row by row from finitely many candidates.
Row $i + 1$ is determined by row $i$.

For a state $(i, \mu)$ with finite entry and for $\mu' \in \anc$, let
$S(i, \mu, \mu')$ be the set of all points $x \in \wv$ that satisfy
\[
  \LA(x) \succ \mu, \qquad
  \ell(x) > \rst(i, \mu), \qquad
  \RA(x) = \mu'.
\]
When $\rst(i, \mu) = +\infty$ the set $S(i, \mu, \mu')$ is empty.

\begin{lemma}
\label{F-lem:ws-recursion}
For every state $(i + 1, \mu')$,
\[
  \rst(i + 1, \mu') \;=\;
  \min_{\mu \in \anc}\;
  \inf\bigl\{\, r(x) \bigm| x \in S(i, \mu, \mu') \,\bigr\}.
\]
\end{lemma}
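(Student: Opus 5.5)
We prove the equality by two inequalities, both driven by the chain characterization of \cref{F-lem:ws-chain} and the substitution principle of \cref{F-lem:ws-dom}. Write $R$ for the right-hand side. When no chain of length $i+1$ ends with right anchor $\mu'$ we will see that every $S(i,\mu,\mu')$ is empty, so both sides are $+\infty$. We therefore treat the finite case.

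\emph{The inequality $\rst(i+1,\mu') \le R$.} Fix $\mu \in \anc$ with $\rst(i,\mu)$ finite, and fix $x \in S(i,\mu,\mu')$. We must show that $r(x)$ is at least $\rst(i+1,\mu')$, that is, that $x$ ends some chain of length $i+1$.

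Since $\ell(x) > \rst(i,\mu)$ and $\rst(i,\mu)$ is an infimum, there is a chain $x_1 \lo \dots \lo x_i$ with $\RA(x_i) = \mu$ and $r(x_i) < \ell(x)$. Together with $\RA(x_i) = \mu \prec \LA(x)$, this gives $x_i \lo x$ by \cref{F-def:ws-leftof}. By \cref{F-lem:ws-order}, the sequence $x_1 \lo \dots \lo x_i \lo x$ is a chain of length $i+1$. Its last point satisfies $\RA(x) = \mu'$, so $\rst(i+1,\mu') \le r(x)$. Taking the infimum over $x$ and then the minimum over $\mu$ yields the inequality.

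This step is where the strict inequality $\ell(x) > \rst(i,\mu)$ is essential: a strict inequality lets us pick a witness chain whose value lies strictly below $\ell(x)$, even though the infimum itself may not be attained.

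\emph{The inequality $\rst(i+1,\mu') \ge R$.} Take any chain $x_1 \lo \dots \lo x_{i+1}$ with $\RA(x_{i+1}) = \mu'$, and let $\mu = \RA(x_i) \in \anc$. The prefix $x_1 \lo \dots \lo x_i$ is a chain of length $i$ ending with right anchor $\mu$, so $\rst(i,\mu) \le r(x_i)$. From $x_i \lo x_{i+1}$ we get
\[
  \LA(x_{i+1}) \succ \mu
  \qquad\text{and}\qquad
  \ell(x_{i+1}) > r(x_i) \ge \rst(i,\mu).
\]
Hence $x_{i+1} \in S(i,\mu,\mu')$, and therefore $r(x_{i+1}) \ge R$. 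Taking the infimum over all such chains gives $\rst(i+1,\mu') \ge R$. The same argument shows that if some $S(i,\mu,\mu')$ is nonempty then a chain of length $i+1$ exists, which settles the case in which both sides are $+\infty$.

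\emph{Expected obstacle.} The only delicate point is the infimum bookkeeping in the first direction: the table stores infima that need not be attained, and the defining condition $\ell(x) > \rst(i,\mu)$ must be strict for the first direction to go through. No geometry beyond \cref{F-lem:ws-chain} is needed.
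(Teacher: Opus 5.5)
Your proposal is correct and matches the paper's proof step for step: the lower bound places the last point of any chain of length $i+1$ into $S(i,\RA(x_i),\mu')$, and the upper bound uses the strict inequality $\ell(x) > \rst(i,\mu)$ to pick a length-$i$ chain with $r(x_i) < \ell(x)$ and extend it by $x$. The only differences are cosmetic: you cite \cref{F-lem:ws-dom} at the outset although the argument never uses it, and you spell out the $+\infty$ case, which the paper leaves implicit.
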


\begin{proof}
Write $S$ for the right-hand side. Let $x_1 \lo \dots \lo x_{i+1}$ be
a chain with $\RA(x_{i+1}) = \mu'$, and let $\mu = \RA(x_i)$. Then
$\LA(x_{i+1}) \succ \mu$ and $\ell(x_{i+1}) > r(x_i) \ge
\rst(i, \mu)$. So $x_{i+1}$ belongs to $S(i, \mu, \mu')$, and
$r(x_{i+1}) \ge S$. Hence $\rst(i + 1, \mu') \ge S$.

Conversely, let $\mu \in \anc$ and let $x$ belong to $S(i, \mu, \mu')$.
Since $\ell(x) > \rst(i, \mu)$, there is a chain $x_1 \lo \dots \lo
x_i$ with $\RA(x_i) = \mu$ and $r(x_i) < \ell(x)$. Also
$\RA(x_i) = \mu \prec \LA(x)$. So $x_i \lo x$, and the chain extends
by $x$. This gives $\rst(i + 1, \mu') \le r(x)$. Taking the infimum
over $x$ and the minimum over $\mu$ gives $\rst(i + 1, \mu') \le S$.
\end{proof}

The infimum in the recursion ranges over all points of $\wv$. The
next lemma shows that finitely many candidates suffice. For a state
$(i, \mu)$ with finite entry we write $g = \rst(i, \mu)$, and we call
a point $x$ a \emph{successor} of the state when $\LA(x) \succ \mu$
and $\ell(x) > g$. The set $S(i, \mu, \mu')$ consists of the
successors with right anchor $\mu'$.

\begin{lemma}
\label{F-lem:ws-update}
Let $(i, \mu)$ be a state with finite entry $g$, and let
$\mu' \in \anc$. The infimum of $r(x)$ over all successors $x$ with
$\RA(x) = \mu'$ equals the minimum of the following two quantities.
\begin{enumerate}
\item The minimum of $r(t)$ over the points $t = \win(g, \nu)$
  returned by firing $g$ that satisfy $\ell(t) = g$, $\nu \succ \mu$,
  and $\RA(t) = \mu'$.
\item The minimum of the limit of $r$ over the static candidates
  $c \in \cand$ whose piece has anchors $\LA \succ \mu$ and
  $\RA = \mu'$ and whose limit of $\ell$ exceeds $g$.
\end{enumerate}
Each minimum is $+\infty$ when its set is empty.
\end{lemma}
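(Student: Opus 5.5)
We prove the two inequalities between the infimum $I$ of $r(x)$ over successors $x$ with $\RA(x)=\mu'$ and the minimum $M$ of the two quantities. The direction $I \le M$ is direct. Each fired point $t=\win(g,\nu)$ with $\ell(t)=g$ is not itself a successor, because $\ell(t)=g$ is not strictly greater than $g$. We therefore approximate it. By \cref{F-lem:ws-fire} the point $t$ has $\LA(t)=\nu\succ\mu$ and $r(t)>g$. Points of the upper chain slightly beyond $t$, on the side where $\ell$ increases, keep the anchors $\nu,\mu'$ and have $\ell>g$; this uses \cref{F-lem:ws-pieces} with $t$ as a piece endpoint. By continuity their $r$-values tend to $r(t)$, so $I\le r(t)$. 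In the same way, for a static candidate $c$ whose limit of $\ell$ exceeds $g$, the points of its piece close to $c$ are successors with anchors $(\LA,\mu')$. Their $r$-values tend to the limit of $r$ at $c$, which gives $I\le$ that limit.

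For $I\ge M$ we take an arbitrary successor $x$ with $\RA(x)=\mu'$ and show that $r(x)\ge M$.

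\emph{Reduction to the boundary.} First we reduce to a boundary point. By \cref{F-lem:ws-push}, the point $y=\RUC(x)$ satisfies $r(y)=r(x)$ and $\RA(y)=\mu'$. It also satisfies $\ell(y)\ge\ell(x)>g$ and $\LA(y)\succeq\LA(x)\succ\mu$, so $y$ is again a successor. Moreover $y$ lies on the upper chain. If $y$ is a vertex or a transition point, we regard it as an endpoint of an adjacent piece, that is, as a static candidate with the appropriate one-sided anchors. So assume $y$ lies in the interior of a piece $P$ with anchors $\LA\succ\mu$ and $\RA=\mu'$.

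\emph{Sliding along the piece.} By \cref{F-lem:ws-pieces}, $\ell$ and $r$ are monotone in the same direction on $P$. We slide $y$ along $P$ in the direction in which $r$ does not increase; this also does not increase $\ell$. Two things can happen.
\begin{itemize}
\item We reach the endpoint $c$ of $P$ while $\ell$ stays $>g$ in the limit. Then $c$ is a static candidate of type~2, and its limit of $r$ is at most $r(x)$.
\item The value $\ell$ drops to exactly $g$ at an interior point $t$ of $P$. This requires $\ell$ to be nonconstant, so $\LA=\nu$ is a reflex vertex. The point $t$ lies on the ray from $g$ through $\nu$, and it sees $g$ along a segment grazing $\nu$. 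Hence $\nu$ supports a window of $\vis(g)$ and $t=\win(g,\nu)$ is returned by firing $g$, with $\ell(t)=g$, $\nu\succ\mu$, $\RA(t)=\mu'$ and $r(t)\le r(x)$. This is a quantity of type~1.
\end{itemize}
If $\ell$ and $r$ are constant on $P$, either endpoint of $P$ already serves as the candidate $c$.

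\emph{Main obstacle.} The delicate step is the second case: justifying that $t$ is exactly the window endpoint $\win(g,\nu)$. We must check that the segment from $g$ to $t$ lies in $\wv$. This should follow from \cref{F-obs:anchor-collinear}, since $t$, $\nu$, $\ell(t)=g$ are collinear in this order and $\Lb(t)\subseteq\vis(t)$. We must also check that $t$ is the first boundary point after $\nu$ on that ray. This holds because the open segment from $\nu$ to $t$ lies in $\Lb(t)$, hence in $\wv$, and meets the boundary only at its endpoints by general position.

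A second subtlety is the behaviour of the one-sided limits at transition points. We will handle it by the convention of \cref{F-def:ws-static}, under which each piece endpoint carries its piece's anchors and limits.
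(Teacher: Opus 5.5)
Your proof is correct and follows the paper's argument essentially step for step. For the inequality $I \le M$ you approximate fired points and static candidates by successors along their pieces. For the converse you push $x$ to $\RUC(x)$ via \cref{F-lem:ws-push}, then slide it along its piece until it reaches either a piece endpoint or the event $\ell = g$, at which point it equals $\win(g,\LA(x))$. One small correction: a fired point $t=\win(g,\nu)$ is generally an interior point of its piece rather than a piece endpoint, because transition points come only from pairs of reflex vertices and base endpoints. Your approximation argument is unaffected, since the anchors stay constant near $t$ on its piece.
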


\begin{proof}
Both quantities are at least the infimum. A returned point $t$ with
$\ell(t) = g$ is the limit of successors along its piece, and its value
$r(t)$ is the limit of their values by \cref{F-lem:ws-pieces}. A static
candidate is a limit of successors along its piece in the same way.
So the infimum is at most each of the quantities.

Now let $x$ be a successor with $\RA(x) = \mu'$. We find a point in
one of the two sets with value at most $r(x)$. Let $y = \RUC(x)$. By
\cref{F-lem:ws-push}, the point $y$ is a successor with $\RA(y) = \mu'$ and $r(y) = r(x)$. So we may assume that $x$ lies on the upper chain.
If $x$ is a static candidate, we are done. Otherwise, $x$ lies on a
piece. Move $x$ along the piece in the direction that decreases $r$.
By \cref{F-lem:ws-pieces} the value $\ell$ decreases as well, and the
anchors do not change. The motion stops at the first of two events.
Either $x$ reaches an endpoint of the piece, which is a static
candidate with the required anchors and with limit of $\ell$ at least
$g$, or $\ell(x)$ reaches $g$. In the second event, $x$ lies on the
ray from $g$ through $\LA(x)$, and $x$ sees $g$ along this ray. So
$x = \win(g, \LA(x))$, with $\LA(x) \succ \mu$ and $\RA(x) = \mu'$.
In both events, the value of $r$ did not increase.
\end{proof}

We illustrate the entire construction on a single polygon.
\Cref{F-fig:instance} shows a weak visibility polygon with six pockets,
the visibility cone of the apex of each pocket, and the trapezoid that
\cref{F-def:trapezoid} assigns to each apex; the apices are the points
$w_1, \dots, w_6$ are used in the rest of this example.

\begin{figure}[htbp]
\centering
\captionsetup[subfigure]{skip=1pt}

\begin{subfigure}{\textwidth}
\centering
\begin{tikzpicture}[x=0.44cm,y=0.50cm]
  \path (-1.4,-2.0) rectangle (25.4,12.4);
  \fill[black!4] \WVpath;
  \cone[blue!60]   {0.6667}{10.6667}{1.2}{2.8}
  \cone[red!55]    {2.8}{5.6}{10.5}{14}
  \cone[green!65]  {11.5}{9}{7}{8.8}
  \cone[orange!85] {16.2}{5.6}{5}{8.5}
  \cone[violet!60] {18.6667}{10.6667}{16}{17.6}
  \cone[cyan!65]   {23}{8}{19}{21}
  \draw[line width=.9pt] \WVpath;
  \foreach \x/\y/\lb/\pos in {0.6667/10.6667/1/{above},2.8/5.6/2/{above right},
      11.5/9/3/{above right},16.2/5.6/4/{above right},
      18.6667/10.6667/5/{above},23/8/6/{right}}
     {\fill (\x,\y) circle (2.3pt);
      \node[\pos=1.5pt,scale=.82] at (\x,\y) {$w_{\lb}$};}
  \vlab{1}{2} \vlab[8pt]{2}{4} \vlab{5}{5} \vlab[8pt]{6}{7}
  \vlab{9}{8} \vlab[8pt]{10}{10} \vlab{13}{11} \vlab[8pt]{14}{13}
  \vlab{17}{14} \vlab[8pt]{18}{16} \vlab{21}{17} \vlab[8pt]{22}{19}
  \fill (0,4) circle (1.5pt); \fill (24,4) circle (1.5pt);
  \fill (0,0) circle (2pt);
  \node[below left=0pt,scale=.72] at (0,0) {$v_0$};
  \fill (24,0) circle (2pt);
  \node[below right=0pt,scale=.72] at (24,0) {$v_{21}$};
  \node[below=2.5pt,scale=.85] at (12.5,0) {$\eb$};
\end{tikzpicture}

\caption{The polygon $\wv$ has $n=21$ and twelve reflex vertices, and it is
weakly visible from $\eb=v_{21}v_0$.
Each witness $w_i$ sits at the apex of a pocket. The shaded region is
$\vis(w_i)$, whose intersection with the base is the interval $\I(w_i)$. Its two bounding rays pass through the anchors
$\LA(w_i)$ and $\RA(w_i)$. The pockets of $w_2$ and $w_4$ lean in opposite directions, so these two witnesses look across each other, and their visibility polygons meet.}
\label{F-fig:wvp-instance}
\end{subfigure}

\vspace{-2.2ex}

\begin{subfigure}{\textwidth}
\centering
\begin{tikzpicture}[x=0.44cm,y=0.70cm]
  \trap[blue!60]  {\tA}{\tB}{1.2}{2.8}
  \trap[red!55]   {\tC}{\tD}{10.5}{14}
  \trap[orange!85]{\tG}{\tH}{5}{8.5}
  \trap[green!65] {\tE}{\tF}{7}{8.8}
  \trap[violet!60]{\tI}{\tJ}{16}{17.6}
  \trap[cyan!65]  {\tK}{\tL}{19}{21}
  \rails \topticks
  \bt{1.2}{} \bt{2.8}{} \bt{5}{} \bt{7}{} \bt{8.5}{} \bt{8.8}{}
  \bt{10.5}{} \bt{14}{} \bt{16}{} \bt{17.6}{} \bt{19}{} \bt{21}{}
  \node[scale=.72] at (2.3,2) {$T(w_1)$};
  \node[scale=.72] at (6.6,3.1) {$T(w_2)$};
  \node[scale=.72,fill=white,fill opacity=.75,text opacity=1,inner sep=.6pt]
       at (9.1,1.9) {$T(w_3)$};
  \node[scale=.72] at (13.2,1.0) {$T(w_4)$};
  \node[scale=.72] at (17.0,2) {$T(w_5)$};
  \node[scale=.72] at (20.3,2) {$T(w_6)$};
\end{tikzpicture}

\caption{The trapezoid $T(w_i)$ has upper side
$\seg{\tau(\LA(w_i))\,\tau(\RA(w_i))}$ on $L_t$ and lower side
$\seg{\beta(\ell(w_i))\,\beta(r(w_i))}$ on $L_b$. By
\cref{F-thm:trapezoid-graph}, two witnesses conflict if and only if their trapezoids meet. In particular, $T(w_2)$ and $T(w_4)$ cross because the two
orders disagree: on $L_t$ the trapezoid $T(w_2)$ lies to the left, whereas on $L_b$ it lies to the right. The vertices of the upper chain that bound the pockets are drawn at a common height for clarity; an arbitrarily small
perturbation restores the general position assumed in
\cref{F-sec:prelims} and changes none of the values shown.}
\label{F-fig:trap-instance}
\end{subfigure}

\caption{A weak visibility polygon and its trapezoid representation.}
\label{F-fig:instance}
\end{figure}


\begin{figure}[htbp]\centering
\captionsetup[subfigure]{skip=1pt}
\begin{subfigure}{\textwidth}\centering
\begin{tikzpicture}[x=0.44cm,y=0.44cm]
  \trap[blue!60]{\tA}{\tB}{1.2}{2.8}
  \traptd[black!45]{\tE}{\tF}{6.6}{9.2}
  \trap[green!65]{\tE}{\tF}{7}{8.8}
  \rails \topticks
  \bt{1.2}{$\ell(w_1)$} \bt[8pt]{2.8}{$r(w_1)$} \bt{7}{$\ell(w_3)$}
  \bt[8pt]{6.6}{$\ell(w_3')$} \bt{8.8}{$r(w_3)$} \bt[8pt]{9.2}{$r(w_3')$}
  \node[scale=.72] at (2.3,2) {$T(w_1)$};
  \node[scale=.72,fill=white,fill opacity=.8,text opacity=1,inner sep=.8pt]
       at (8.8,1.8) {$T(w_3)$};
  \draw[black!65,line width=.4pt] (13.4,1.5) -- (10.1,0.8);
  \node[scale=.72,right,black!65] at (13.5,1.55) {$T(w_3')$};
  \frontier
  \node[red!75!black,scale=.76,above=10pt] at (\tF,\Hgt)
        {$g=\rst(2,v_{10})=r(w_3)$};
\end{tikzpicture}
\caption{The value $g=\rst(2,v_{10})=r(w_3)$, attained by $w_1 \lo w_3$,
defines the dashed frontier from $\beta(g)$ to $\tau(v_{10})$, here the
right side of $T(w_3)$; the competing $w_3'$ shares the anchors of $w_3$
but has a larger $r$, so \cref{F-lem:ws-dom} discards it.}
\label{F-fig:rec-state}
\end{subfigure}
\vspace{-1.2ex}
\begin{subfigure}{\textwidth}\centering
\begin{tikzpicture}[x=0.44cm,y=0.44cm]
  \rightzone
  \trap[violet!60]{\tI}{\tJ}{16}{17.6}
  \trap[cyan!65]{\tK}{\tL}{19}{21}
  \rails \topticks
  \bt{8.8}{$g$} \bt{16}{$\ell(w_5)$} \bt[8pt]{17.6}{$r(w_5)$}
  \bt{19}{$\ell(w_6)$} \bt[8pt]{21}{$r(w_6)$}
  \node[scale=.72] at (17.0,2) {$T(w_5)$};
  \node[scale=.72] at (20.3,2) {$T(w_6)$};
  \frontier
\end{tikzpicture}
\caption{$T(w_5)$ and $T(w_6)$ lies strictly right of the frontier on both
lines, so $w_5$ and $w_6$ are successors and contribute $r(w_5)$ and
$r(w_6)$ to row~$3$.}
\label{F-fig:rec-accept}
\end{subfigure}
\vspace{-1.2ex}
\begin{subfigure}{\textwidth}\centering
\begin{tikzpicture}[x=0.44cm,y=0.44cm]
  \rightzone
  \traptd[red!60]{\tC}{\tD}{10.5}{14}
  \traptd[orange!90]{\tG}{\tH}{5}{8.5}
  \rails \topticks
  \bt{5}{$\ell(w_4)$} \bt[8pt]{8.5}{$r(w_4)$} \bt{8.8}{$g$}
  \bt[8pt]{10.5}{$\ell(w_2)$} \bt{14}{$r(w_2)$}
  \node[scale=.72,fill=white,fill opacity=.8,text opacity=1,inner sep=.8pt]
       at (6.4,3.25) {$T(w_2)$};
  \node[scale=.72,fill=white,fill opacity=.8,text opacity=1,inner sep=.8pt]
       at (13.6,0.75) {$T(w_4)$};
  \frontier
  \draw[red!75!black,line width=.35pt] (16.5,3.05) -- (12.0,2.55);
  \node[red!75!black,scale=.70,right] at (16.6,3.05) {$\LA(w_2)\prec v_{10}$};
  \draw[red!75!black,line width=.35pt] (16.5,1.35) -- (10.5,1.05);
  \node[red!75!black,scale=.70,right] at (16.6,1.35) {$\ell(w_4)\le g$};
\end{tikzpicture}
\caption{Here $w_2$ fails the anchor condition and $w_4$ the base condition,
each satisfying the other.}
\label{F-fig:rec-reject}
\end{subfigure}
\vspace{-1.2ex}
\begin{subfigure}{\textwidth}\centering
\begin{tikzpicture}[x=0.44cm,y=0.44cm]
  \trap[blue!60]{\tA}{\tB}{1.2}{2.8}
  \trap[green!65]{\tE}{\tF}{7}{8.8}
  \trap[violet!60]{\tI}{\tJ}{16}{17.6}
  \trap[cyan!65]{\tK}{\tL}{19}{21}
  \rails \topticks
  \bt{1.2}{} \bt{2.8}{} \bt{7}{} \bt{8.8}{} \bt{16}{} \bt{17.6}{}
  \bt{19}{} \bt{21}{}
  \node[scale=.72] at (2.3,2)  {$T(w_1)$};
  \node[scale=.72] at (8.7,2)  {$T(w_3)$};
  \node[scale=.72] at (17.0,2) {$T(w_5)$};
  \node[scale=.72] at (20.3,2) {$T(w_6)$};
\end{tikzpicture}
\caption{Iterating gives $w_1 \lo w_3 \lo w_5 \lo w_6$, a
witness set by \cref{F-lem:ws-chain}.}
\label{F-fig:rec-chain}
\end{subfigure}
\caption{One step of the recursion of \cref{F-lem:ws-recursion} and
\cref{F-lem:ws-update} from the state $(2,v_{10})$, on the polygon of
\cref{F-fig:wvp-instance}.}
\label{F-fig:recursion}
\end{figure}

\Cref{F-fig:recursion} carries out one step of the recursion on the instance
of \cref{F-fig:instance}, from the state $(2, v_{10})$.

The algorithm fills the table by \cref{F-lem:ws-recursion} and
\cref{F-lem:ws-update}. It records for each finite entry the candidate
that achieved the minimum and the state it came from; \cref{F-fig:state-table}
shows the completed table for the instance of \cref{F-fig:instance}.

\begin{algorithm}[htbp]
\DontPrintSemicolon
\caption{The state table algorithm.}
\label{F-alg:ws-table}
compute $\tp$, the pieces, and the static candidates $\cand$\;
set every entry $\rst(i, \mu)$ to $+\infty$\;
\ForEach{static candidate $c \in \cand$ with anchors $(\nu, \mu')$}{
  $\rst(1, \mu') \gets \min\{\rst(1, \mu'),\ r(c)\}$\;
}
\For{$i \gets 1$ \KwTo $\level - 1$}{
  \ForEach{$\mu \in \anc$ with $\rst(i, \mu) < +\infty$}{
    $g \gets \rst(i, \mu)$\;
    fire $g$ and compute the anchors of every returned point\;
    \ForEach{returned point $t$ with $\ell(t) = g$ and $\LA(t) \succ \mu$}{
      $\rst(i + 1, \RA(t)) \gets \min\{\rst(i + 1, \RA(t)),\ r(t)\}$\;
    }
    \ForEach{static candidate $c$ with anchors $(\nu, \mu')$,
      $\nu \succ \mu$, and limit of $\ell$ above $g$}{
      $\rst(i + 1, \mu') \gets \min\{\rst(i + 1, \mu'),\ r(c)\}$\;
    }
  }
}
$k \gets$ the largest $i$ with a finite entry in row $i$\;
\Return a witness set of size $k$ built from the recorded candidates
 by \cref{F-lem:ws-output}\;
\end{algorithm}

\begin{figure}[H]\centering
\begin{tikzpicture}[x=1.44cm,y=1.04cm,
   cell/.style={draw,black!60,line width=.4pt,minimum width=1.30cm,
                minimum height=.64cm,inner sep=1pt,anchor=center},
   hit/.style ={cell,fill=green!22,line width=1pt,draw=black},
   hd/.style  ={anchor=center,scale=.95},
   ar/.style  ={-{Stealth[length=4.5pt]},line width=.9pt,rounded corners=2pt},
   lbl/.style ={fill=white,inner sep=1.3pt,scale=.8,text=black}]
 
  \foreach \j/\lab in {1/{$v_4$},2/{$v_7$},3/{$v_{10}$},4/{$v_{13}$},
                       5/{$v_{16}$},6/{$v_{19}$},7/{$v_{21}$}}
     {\node[hd] at (\j,5.62) {\lab};}
  \node[hd,scale=.9] at (-0.06,5.62) {$i \,\backslash\, \mu$};
  \foreach \i in {1,...,5}{\node[hd] at (-0.06,6-\i) {$\i$};}
 
  \node[hit] (A) at (1,5) {$2.8$};    \node[cell] at (2,5) {$14$};
  \node[cell]     at (3,5) {$8.8$};   \node[cell] at (4,5) {$8.5$};
  \node[cell]     at (5,5) {$17.6$};  \node[cell] at (6,5) {$21$};
  \node[cell]     at (7,5) {$24$};
 
  \node[cell]     at (1,4) {$+\infty$};\node[cell] at (2,4) {$14$};
  \node[hit] (B)  at (3,4) {$8.8$};    \node[cell] at (4,4) {$8.5$};
  \node[cell]     at (5,4) {$17.6$};   \node[cell] at (6,4) {$21$};
  \node[cell]     at (7,4) {$+\infty$};
 
  \node[cell]     at (1,3) {$+\infty$};\node[cell] at (2,3) {$+\infty$};
  \node[cell]     at (3,3) {$+\infty$};\node[cell] at (4,3) {$+\infty$};
  \node[hit] (C)  at (5,3) {$17.6$};   \node[cell] at (6,3) {$21$};
  \node[cell]     at (7,3) {$+\infty$};
 
  \node[cell]     at (1,2) {$+\infty$};\node[cell] at (2,2) {$+\infty$};
  \node[cell]     at (3,2) {$+\infty$};\node[cell] at (4,2) {$+\infty$};
  \node[cell]     at (5,2) {$+\infty$};\node[hit] (D) at (6,2) {$21$};
  \node[cell]     at (7,2) {$+\infty$};
 
  \foreach \j in {1,...,7}{\node[cell] at (\j,1) {$+\infty$};}
 
  \draw[ar] (1,4.78) -- (1,4.50) -- (3,4.50) -- (3,4.22);
  \node[lbl] at (2,4.50) {$w_3$};
  \draw[ar] (3,3.78) -- (3,3.50) -- (5,3.50) -- (5,3.22);
  \node[lbl] at (4,3.50) {$w_5$};
  \draw[ar] (5,2.78) -- (5,2.50) -- (6,2.50) -- (6,2.22);
  \node[lbl] at (5.5,2.50) {$w_6$};
\end{tikzpicture}
\caption{The state table for the polygon of
\cref{F-fig:wvp-instance}, filled row by row by Lemmas~\ref{F-lem:ws-recursion} and~\ref{F-lem:ws-update}. A column is shown for each anchor that is the $\RA$ of some point of $\wv$; the remaining six columns of $\anc$ are
identically $+\infty$, as are the rows below the fifth, and both are
omitted. The arrows record, for each shaded entry, the candidate that
achieved the minimum and the state it came from. Row~$5$ is empty, so the
largest index with a finite entry is $k=4$, which by Lemmas~\ref{F-lem:ws-chain} and~\ref{F-lem:ws-anchor} is the size of a maximum witness set; following the records backward, as in \cref{F-lem:ws-output}, returns the chain
$w_1 \lo w_3 \lo w_5 \lo w_6$.}
\label{F-fig:state-table}
\end{figure}

\begin{theorem}[Correctness]
\label{F-thm:ws-correct}
\Cref{F-alg:ws-table} computes every entry $\rst(i, \mu)$ of the table.
The number $k$ that it returns is the size of a maximum witness set of
$\wv$.
\end{theorem}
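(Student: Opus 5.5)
The plan is induction on the row index $i$: after the algorithm processes row $i$, every entry of row $i+1$ equals $\rst(i+1,\mu')$ as in \cref{F-def:ws-table}. The second claim then follows from the first.

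\emph{Base case, row $1$.} A chain of length one is a single point $x$ with $\RA(x)=\mu'$, so $\rst(1,\mu')$ is the infimum of $r(x)$ over all such points of $\wv$.
\begin{itemize}
\item By \cref{F-lem:ws-push}, $\RUC(x)$ lies on the upper chain and has the same $r$ and the same $\RA$. So the infimum may be taken over the upper chain.
\item By \cref{F-lem:ws-pieces}, $r$ is continuous and monotone or constant on each piece, and the anchors are constant there. So the infimum over a piece is the limit of $r$ at one of its endpoints, i.e.\ the value stored with a static candidate of that piece.
\item Conversely, every static candidate is a limit of points of its piece, so its stored value is at least the infimum.
\end{itemize}
Hence the initialisation loop sets row $1$ correctly, with ties and infima handled by the limit convention of \cref{F-def:ws-static}.

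\emph{Inductive step.} Assume row $i$ is correct. By \cref{F-lem:ws-recursion},
\[
\rst(i+1,\mu')=\min_{\mu\in\anc}\inf\{\,r(x) : x\in S(i,\mu,\mu')\,\}.
\]
States $(i,\mu)$ with infinite entry contribute nothing, since $S(i,\mu,\mu')$ is empty. For each finite state, set $g=\rst(i,\mu)$. \Cref{F-lem:ws-update} says the inner infimum equals the smaller of two quantities:
\begin{itemize}
\item the minimum of $r$ over the points returned by firing $g$ with $\ell(t)=g$, $\LA(t)\succ\mu$ and $\RA(t)=\mu'$;
\item the minimum of the stored $r$ over static candidates whose piece has anchors $(\nu,\mu')$ with $\nu\succ\mu$ and limit of $\ell$ above $g$.
\end{itemize}
These are exactly the two inner loops of \cref{F-alg:ws-table}, and each loop writes its value into column $\RA(t)$ or $\mu'$. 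Taking the running minimum over all $\mu$ and over both loops therefore yields the right-hand side of \cref{F-lem:ws-recursion}. This shows row $i+1$ is correct. The induction stops at $\level=\rho+1$, and \cref{F-lem:ws-anchor} guarantees that no row beyond $\level$ is ever needed.

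\emph{Main obstacle.} The delicate step is the endpoint convention when $\ell(x)$ ranges over an open condition $\ell(x)>g$.
\begin{itemize}
\item A fired point $t$ has $\ell(t)=g$ exactly, so $t$ is not itself a successor. It contributes only as a limit of successors along its piece.
\item Likewise, a static candidate contributes through the limit of $\ell$ on its piece, not through the values at the candidate point.
\end{itemize}
So the table stores infima that may not be attained, which is why \cref{F-def:ws-table} uses an infimum. I will check that the recursion of \cref{F-lem:ws-recursion} is stated with the strict inequality $\ell(x)>\rst(i,\mu)$, which is correct even when the infimum in row $i$ is not attained. The argument is the same as in its proof: given $\ell(x)>g$, choose a chain whose last point has $r$ strictly between $g$ and $\ell(x)$.

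\emph{Final claim.} A witness set of size $k$ is a chain of length $k$ by \cref{F-lem:ws-chain}, with right anchor some $\mu\in\anc$. Therefore row $k$ has a finite entry if and only if a witness set of size $k$ exists, since the infimum over an empty family is $+\infty$. Monotonicity in $k$ holds because a prefix of a chain is again a chain. So the largest row with a finite entry equals the maximum witness number.
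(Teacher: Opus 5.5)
Your proposal is correct and follows the paper's own proof essentially step by step: induction on the row index. Row $1$ is handled by pushing to the upper chain with \cref{F-lem:ws-push} and taking piece-endpoint limits via \cref{F-lem:ws-pieces}; the inductive step combines \cref{F-lem:ws-recursion} with \cref{F-lem:ws-update}; and the claim about $k$ is read off from \cref{F-lem:ws-chain}, \cref{F-lem:ws-anchor} and \cref{F-def:ws-table}. Your extra remarks on the strict inequality $\ell(x) > \rst(i,\mu)$, on unattained infima, and on monotonicity in $k$ spell out what the paper leaves implicit, and do not change the argument.
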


\begin{proof}
We use induction on the row index. Row $1$ is correct. The entry
$\rst(1, \mu')$ is the infimum of $r$ over all points with right anchor
$\mu'$. By \cref{F-lem:ws-push}, the infimum over $\wv$ equals the
infimum over the upper chain. By \cref{F-lem:ws-pieces} the infimum over
the upper chain is attained in the limit at a piece endpoint, that is,
at a static candidate. Now assume that row $i$ is correct. For each
state of row $i$ with a finite entry, the algorithm performs the update
of \cref{F-lem:ws-update}. By \cref{F-lem:ws-recursion} this yields row
$i + 1$. The claim about $k$ follows from \cref{F-lem:ws-chain},
\cref{F-lem:ws-anchor}, and \cref{F-def:ws-table}.
\end{proof}

The recorded candidates form a chain in a weak sense. Two consecutive
candidates may touch on the base. A small shift removes the contact.

\begin{lemma}
\label{F-lem:ws-output}
Let $\rst(k, \mu) < +\infty$. Following the records backward from
$(k, \mu)$ gives candidates $y_1, \dots, y_k$ with
$\RA(y_k) = \mu$. These candidates can be shifted along their pieces
into points $x_1 \lo x_2 \lo \dots \lo x_k$. The set
$\{x_1, \dots, x_k\}$ is a witness set of size $k$.
\end{lemma}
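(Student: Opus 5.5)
Each record is a limit point: a static candidate (the endpoint of a piece, carrying that piece's anchors and the limits of $\ell$ and $r$) or a fired point $t=\win(g,\nu)$ with $\ell(t)=g$. Following the records back from $(k,\mu)$ gives $y_1,\dots,y_k$. Each $y_{j+1}$ was produced from the state $(j,\RA(y_j))$ with $g_j=\rst(j,\RA(y_j))=r(y_j)$. The record satisfies the anchor condition $\LA(y_{j+1})\succ\RA(y_j)$ strictly, but on the base it only gives $\ell(y_{j+1})\ge r(y_j)$. Equality occurs exactly for fired points, and a static candidate may also touch in the limit. The plan is to shift every candidate into the open interior of its piece, backwards from $y_k$ down to $y_1$, so that every right endpoint decreases slightly and every left endpoint stays slightly above the previous right endpoint.

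\textbf{Setting up the shift.} By \cref{F-lem:ws-pieces}, along the piece of $y_j$ the anchors are constant and $\ell,r$ are continuous and monotone in the same direction. A point $x_j$ in the piece near $y_j$ therefore has the same anchors as $y_j$, and its values $\ell(x_j),r(x_j)$ lie within any prescribed $\varepsilon$ of the recorded limits. For a fired point $t=\win(g,\nu)$ we look at its piece on the side where $\ell>g$. This side exists because, as in the proof of \cref{F-lem:ws-update}, $t$ is a limit of successors along that piece. Moving into that side raises $\ell$ above $g$ and, by the monotone coupling, raises $r$ slightly too. A static candidate is likewise the limit of a piece with $\ell$-limit $>g$, so a small move keeps $\ell>g$ while perturbing $r$ by a small amount.

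\textbf{Ordering the perturbations.} Because moving into the piece may increase $r$, we cannot simply push every point the same way. We process $j=1,\dots,k$ with shrinking tolerances. Choose $x_1$ in the piece of $y_1$ with $r(x_1)<r(y_1)+\delta_1$, possibly slightly above $r(y_1)=g_1$. Then choose $x_2$ in its piece with $\ell(x_2)>r(x_1)$ and $r(x_2)<r(y_2)+\delta_2$, and continue.

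This is the main obstacle. For a fired $y_2$ with $\ell(y_2)=g_1$ exactly, we need $\ell(x_2)$ to exceed $r(x_1)$, which itself exceeds $g_1$ by up to $\delta_1$. Continuity gives values of $\ell(x_2)$ filling an interval $(g_1,g_1+\eta)$. So we need $\delta_1<\eta$, where $\eta$ depends only on the piece of $y_2$, and we first try to fix all the $\eta_j$ in advance. If $\ell$ is constant along the piece of $y_{j+1}$ (anchor a base endpoint), then $\ell(y_{j+1})=v_0$, which cannot exceed $g$, so this case does not arise for $j\ge1$. If $r$ is constant along the piece of $y_j$, then no slack is lost. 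Fixing $\eta_j>0$ for each piece and choosing $\delta_j<\eta_{j+1}$ inductively yields $r(x_j)<\ell(x_{j+1})$ for every $j$.

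\textbf{Conclusion.} The anchors are unchanged, so $\RA(x_j)=\RA(y_j)\prec\LA(y_{j+1})=\LA(x_{j+1})$. Together with the base inequality this gives $x_j\lo x_{j+1}$. By \cref{F-lem:ws-order} (transitivity), $x_1\lo\dots\lo x_k$ is a chain, so by \cref{F-lem:ws-chain} it is a witness set of size $k$. Finally, $\RA(x_k)=\RA(y_k)=\mu$.
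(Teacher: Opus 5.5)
Your proposal is correct and follows essentially the paper's route. You shift each recorded candidate slightly along its piece into the side where $\ell$ exceeds the previous right endpoint, use that anchors are constant on pieces so that $\RA(x_j)\prec\LA(x_{j+1})$ survives, and conclude by transitivity and \cref{F-lem:ws-chain}; your tolerance bookkeeping is more explicit than the paper's ``small enough'' remark.

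Three tightenings. First, the tolerances need a backward choice. You must pick $\delta_j$ only after $\delta_{j+1}$, because $\delta_j$ must also lie below the $\ell$-slack on the piece of $y_{j+1}$ that keeps $r(x_{j+1})<r(y_{j+1})+\delta_{j+1}$; a constant depending on that piece alone does not suffice. A simpler fix is to pick the points themselves backward, from $x_k$ down to $x_1$, so that each step needs only continuity of $r$ at $y_j$.

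Second, drop your opening claim that every right endpoint decreases. Your own later paragraph rightly abandons it.

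Third, drop the remark that a static candidate may touch in the limit. The algorithm admits a static candidate only when its $\ell$-limit strictly exceeds $g$, so only fired points touch.
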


\begin{proof}
Each record was produced by \cref{F-lem:ws-update}. So for every index
$j < k$, the candidate $y_{j+1}$ is a limit of successors of the state
that contains $y_j$, along a piece $\alpha_{j+1}$. In particular
$\LA(y_{j+1}) \succ \RA(y_j)$ and $\ell(y_{j+1}) \ge r(y_j)$, where
the inequality may be an equality. We process $j = 2, \dots, k$ in
this order. Move $y_j$ along $\alpha_j$ into the direction that
increases $\ell$, by a positive distance small enough that
$\ell(y_j) > r(y_{j-1})$ holds for the shifted point. Such a distance
exists because $y_j$ is a limit of successors along $\alpha_j$. The
shift increases $r(y_j)$ by a small amount, and the next step will
move $y_{j+1}$ past the new value. After the last step, consecutive
points satisfy $\ell(x_{j+1}) > r(x_j)$ and
$\LA(x_{j+1}) \succ \RA(x_j)$, because anchors do not change along a
piece. Hence $x_j \lo x_{j+1}$ for every $j$, and the points form a
chain by transitivity. By \cref{F-lem:ws-chain} they form a witness set.
\end{proof}

\begin{corollary}
\label{F-cor:ws-maxset}
\Cref{F-alg:ws-table} returns a maximum witness set of $\wv$.
\end{corollary}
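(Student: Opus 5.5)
The corollary combines \cref{F-thm:ws-correct} with \cref{F-lem:ws-output}. The plan has three steps: pin down the value $k$ returned, show the output set is a witness set of size $k$, and then invoke optimality.

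\emph{Step 1 (value of $k$).} By \cref{F-thm:ws-correct}, every entry $\rst(i,\mu)$ computed by \cref{F-alg:ws-table} equals the true infimum of \cref{F-def:ws-table}. By \cref{F-lem:ws-anchor}, every chain has at most $\rho+1=\level$ members, so no row beyond $\level$ is needed. Hence the largest index $k$ with a finite entry is the length of a longest $\lo$-chain in $\wv$. By \cref{F-lem:ws-chain}, this is the maximum size of a witness set. We also need a witness set to exist at all, so that $k \ge 1$. Any point of $\wv$ alone is a witness set, and some static candidate yields a finite row-$1$ entry.

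\emph{Step 2 (the output is a witness set of size $k$).} Fix $\mu$ with $\rst(k,\mu)<+\infty$. The algorithm recorded, for each finite entry, the candidate achieving the minimum and its predecessor state. Following these records backward from $(k,\mu)$ gives candidates $y_1,\dots,y_k$. \Cref{F-lem:ws-output} then shifts them along their pieces to points $x_1\lo\dots\lo x_k$, and this set is a witness set.

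The only subtlety is that the $x_j$ are pairwise distinct, so that the set has exactly $k$ members. Distinctness follows from irreflexivity of $\lo$ together with transitivity (\cref{F-lem:ws-chain}). We must also check that each shift stays inside the open piece, so that the anchors are unchanged. This holds because the shift distances may be taken arbitrarily small.

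\emph{Step 3 (maximality).} Combining the two steps, the returned set has size $k$, which equals the maximum witness number. It is therefore a maximum witness set.

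\emph{Main obstacle.} The delicate point is Step 2's reliance on the recorded predecessor being consistent. The state $(j,\RA(y_j))$ from which $y_{j+1}$ was fired must have entry exactly $r(y_j)$, so that the inequality $\ell(y_{j+1}) \ge r(y_j)$ holds for the actual recorded $y_j$. After the shift, $r(y_j)$ increases slightly, and we need strict slack to absorb this. I would handle it by processing the indices left to right with shifts chosen sequentially, exactly as in \cref{F-lem:ws-output}, and appeal to the continuity of $\ell$ and $r$ along pieces given by \cref{F-lem:ws-pieces}.
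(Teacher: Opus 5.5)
Your proposal is correct and follows the paper's own argument, which is left implicit: \cref{F-thm:ws-correct}, together with \cref{F-lem:ws-chain} and \cref{F-lem:ws-anchor}, identifies the returned $k$ with the maximum witness number, and \cref{F-lem:ws-output} turns the backtracked records into a witness set of exactly that size. The extra checks you add are details the paper leaves unstated and do not change the route: that $k \ge 1$, that the $x_j$ are distinct by irreflexivity of the left-of relation, and that each shift stays inside its open piece.
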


Backtracking through the records of \cref{F-fig:state-table} returns the four
points drawn in \cref{F-fig:max-witness}, which is a maximum witness set of
that polygon.

\begin{figure}[htbp]\centering
\begin{tikzpicture}[scale=0.45]
  \path (-1.4,-2.0) rectangle (25.4,12.4);
  \fill[black!4] \WVpath;
  \cone[blue!60]   {0.6667}{10.6667}{1.2}{2.8}
  \cone[green!65]  {11.5}{9}{7}{8.8}
  \cone[violet!60] {18.6667}{10.6667}{16}{17.6}
  \cone[cyan!65]   {23}{8}{19}{21}
  \draw[line width=.9pt] \WVpath;
  \foreach \x/\y/\lb/\pos in {2.8/5.6/2/{above right},16.2/5.6/4/{above right}}
     {\draw[black!55,line width=.8pt] (\x-.32,\y-.32) -- (\x+.32,\y+.32);
      \draw[black!55,line width=.8pt] (\x-.32,\y+.32) -- (\x+.32,\y-.32);
      \node[\pos=1pt,scale=.8,black!55] at (\x,\y) {$w_{\lb}$};}
  \foreach \x/\y/\lb/\pos in {0.6667/10.6667/1/{above},11.5/9/3/{above right},
      18.6667/10.6667/5/{above},23/8/6/{right}}
     {\fill (\x,\y) circle (2.5pt);
      \node[\pos=1.5pt,scale=.85] at (\x,\y) {$w_{\lb}$};}
  \foreach \l/\r/\c in {1.2/2.8/{blue!60!black},7/8.8/{green!45!black},
                        16/17.6/{violet!70!black},19/21/{cyan!55!black}}
     {\draw[line width=2.2pt,\c] (\l,0.07) -- (\r,0.07);
      \fill[\c] (\l,0.07) circle (2pt); \fill[\c] (\r,0.07) circle (2pt);}
  \fill (0,0) circle (2pt); \node[below left=0pt,scale=.72] at (0,0) {$v_0$};
  \fill (24,0) circle (2pt);\node[below right=0pt,scale=.72] at (24,0) {$v_{21}$};
  \node[below=2.5pt,scale=.85] at (12.5,0) {$\eb$};
\end{tikzpicture}
\caption{The maximum witness set $\{w_1,w_3,w_5,w_6\}$ returned by
\cref{F-alg:ws-table} on the polygon of \cref{F-fig:wvp-instance}, so that
$\witN(\wv,\wv)=4$. The four visibility polygons are pairwise disjoint, as
\cref{F-def:witness-set} demands, and the four base intervals $\I(w_i)$ drawn in bold are pairwise disjoint and ordered from left to right, in agreement
with \cref{F-obs:vis-implies-interval}. The two witnesses marked with a cross
are excluded: each of them conflicts with a member of the set, and no witness set of this polygon has five members. The optimum is not unique,
since $\{w_1,w_2,w_5,w_6\}$ is a second maximum witness set.}
\label{F-fig:max-witness}
\end{figure}

\subsection{Termination}
\label{F-subsec:ws-termination}

The algorithm processes each state at most once and fires one base
point per state. The number of anchors bounds the number of states. This gives termination without any hypothesis on the
polygon.

\begin{theorem}
\label{F-thm:ws-terminate}
\Cref{F-alg:ws-table} fires at most $(\rho + 1)^2$ base points, creates
at most $\rho (\rho + 1)^2$ window endpoints, and stops.
\end{theorem}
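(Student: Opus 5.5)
The plan is to count directly from the loop structure of \cref{F-alg:ws-table}, using the bounds on the table size fixed in \cref{F-subsec:ws-table}. The states are the pairs $(i,\mu)$ with $1 \le i \le \level$ and $\mu \in \anc$. Firing happens only inside the body of the double loop over $i \in \{1,\dots,\level-1\}$ and over $\mu \in \anc$ with finite entry. Each such iteration fires exactly one base point, namely $g = \rst(i,\mu)$. We never fire in the initialization of row $1$, nor in row $\level$, since the outer loop stops at $\level - 1$. Hence the number of firings is at most
\[
(\level - 1)\,|\anc| \le \rho(\rho+1) \le (\rho+1)^2 ,
\]
using $|\anc| \le \rho + 1$ and $\level = \rho + 1$.

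Next we bound the window endpoints. By \cref{F-obs:fire}, one firing returns at most $\rho$ points, one for each reflex vertex $\mu$ supporting a window of $\vis(g)$. These returned points $\win(g,\mu)$ are exactly the window endpoints the algorithm creates. Multiplying by the number of firings gives at most $\rho\,(\rho+1)^2$ window endpoints.

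For termination, we check that every loop is finite.
\begin{itemize}
\item The sets $\tp$, the pieces, and $\cand$ are finite: $|\tp| \le (\rho+2)(\rho+1)$ and $|\cand| \le 2(n+|\tp|)$.
\item Each firing is a finite computation, running in $\OO(n+\rho\log n)$ time by \cref{F-obs:fire}.
\item The inner loops range over the at most $\rho$ returned points and over the finite set $\cand$.
\item The outer loops are bounded for-loops over finitely many states. Entries of row $i+1$ are written only while row $i$ is processed, and row $i$ is never revisited, so each state is processed at most once.
\end{itemize}
The final step takes the largest finite row index and backtracks along recorded predecessors. Each step of the backtracking decreases the row index, so by \cref{F-lem:ws-output} it takes at most $k \le \rho+1$ steps.

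The only point needing care is ensuring no firing creates new states or revisits old ones. I would argue this from the fact that firings write only into row $i+1$, and the row index strictly increases. No infinite process, such as repeated window-of-window generation, is triggered, because returned points are used once and never fired themselves.
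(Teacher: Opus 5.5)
Your proof is correct and follows the same route as the paper. You bound the firings by the number of states with a finite entry, multiply by the at most $\rho$ points returned per firing (\cref{F-obs:fire}), and conclude termination because every loop ranges over a finite set. Your observation that the last row is never fired sharpens the count to $\rho(\rho+1)$ firings, which still implies the stated bounds.
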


\begin{proof}
A base point is fired once for each state with a finite entry. There
are at most $\level \cdot |\anc| \le (\rho + 1)^2$ states. Each firing
returns at most $\rho$ points by \cref{F-obs:fire}. Every loop of the
algorithm runs over a finite set, so the algorithm stops.
\end{proof}

We close with a remark on why such a bound is needed.
Firing produces points whose right endpoints move to the right by
\cref{F-lem:ws-fire}. Firing a returned point again can produce a
further point with the same two anchors, and so on. This cascade
consists of points that touch each other on the base, and their base
intervals tile a part of $\eb$ from left to right. The cascade can be
infinite. Its points all share one right anchor, and their right
endpoints increase. By \cref{F-lem:ws-dom} the first point of the
cascade can replace every later one. \Cref{F-alg:ws-table} never fires a
point of such a cascade after the first, because it fires only table
entries, and a table entry is the smallest right endpoint of its
anchor. This is the reason why the algorithm is stated in terms of
states rather than in terms of individual candidates.


\section{Running Time}
\label{F-sec:ws-time}

We bound the running time of \cref{F-alg:ws-table}.

\paragraph{Preprocessing.}
We triangulate $\wv$ in $\OO(n)$ time \cite{Chazelle91} and build the
shortest path trees from $v_0$ and from $v_n$ in $\OO(n)$ time
\cite{GHLST87}. These trees answer an anchor query for any point of
$\bd(\wv)$ in $\OO(\log n)$ time. A ray shooting structure with
$\OO(n)$ preprocessing answers a query in $\OO(\log n)$ time
\cite{DBLP:journals/jal/HershbergerS95}. The set $\tp$ has $\OO(\rho^2)$ members, and a single ray-shooting query finds each.
We sort the vertices and the transition points along the upper chain
in $\OO((n + \rho^2) \log n)$ time. This yields the pieces. For each piece, we query the anchors of an interior point once, and we compute
the limits of $\ell$ and $r$ at its two endpoints by two projections.
The preprocessing therefore takes $\OO((n + \rho^2) \log n)$ time.

\paragraph{One state.}
Consider a state $(i, \mu)$ with finite entry $g$. The visibility
polygon of the base point $g$ is computed in $\OO(n)$ time
\cite{JoeSimpson87}. Its windows and their far endpoints are read simultaneously. There are at most $\rho$ returned points, and
the anchors of each cost $\OO(\log n)$. The scan over the static
candidates takes $\OO(n + \rho^2)$ time. So one state costs
$\OO(n + \rho^2 + \rho \log n)$ time, which is $\OO(n + \rho^2)$.

\paragraph{All states.}
By \cref{F-thm:ws-terminate} there are at most $(\rho + 1)^2$ states
with a finite entry. The output step of \cref{F-lem:ws-output} shifts
$k \le \rho + 1$ points, each in $\OO(\log n)$ time. Adding the
preprocessing gives the bound.

\begin{theorem}[Running Time]
\label{F-thm:ws-time}
\Cref{F-alg:ws-table} computes a maximum witness set of a weak
visibility polygon with $n$ vertices and $\rho$ reflex vertices in
\[
  \OO\bigl(\, n \log n + \rho^{2} (n + \rho^{2}) \,\bigr)
\]
time.
\end{theorem}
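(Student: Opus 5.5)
The bound follows by adding the three cost components analysed just before the statement, with correctness taken from \cref{F-thm:ws-correct} and \cref{F-cor:ws-maxset}; the plan is to verify each component and then simplify the sum.

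\textbf{Preprocessing.} Triangulation and the two shortest path trees cost $\OO(n)$ \cite{Chazelle91,GHLST87}. The ray-shooting structure costs $\OO(n)$, after which each of the $\OO(\rho^2)$ transition points of \cref{F-def:ws-transition} is found by one $\OO(\log n)$ query. Sorting the vertices and transition points along the upper chain, determining the pieces, and attaching to each piece its anchors and the limits of $\ell$ and $r$ costs $\OO((n+\rho^2)\log n)$.

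\textbf{Main loop.} By \cref{F-thm:ws-terminate}, at most $(\rho+1)^2$ states have a finite entry, and each is processed once. For one state:
\begin{itemize}
\item firing $g$ costs $\OO(n+\rho\log n)$ by \cref{F-obs:fire};
\item updating from the at most $\rho$ returned points is $\OO(\rho)$;
\item scanning $\cand$, which has $\OO(n+\rho^2)$ members, is $\OO(n+\rho^2)$.
\end{itemize}
Since $\rho\log n=\OO(n)$ (as $\rho\le n$ and we can bound $\rho\log n\le n+\rho^2$ by splitting into the cases $\rho\le n/\log n$ and $\rho> n/\log n$, where in the latter $\rho^2\ge \rho\, n/\log n\ge \rho\log n$ for large $n$ hmm, more simply $\rho\log n\le \rho^2+n$ whenever $\log n\le \rho$ or $\rho\log n\le n$), one state costs $\OO(n+\rho^2)$. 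All states together cost $\OO(\rho^2(n+\rho^2))$. The output step of \cref{F-lem:ws-output} adds only $\OO(\rho\log n)$.

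\textbf{Summation.} The total is $\OO((n+\rho^2)\log n+\rho^2(n+\rho^2))$. The term $n\log n$ appears explicitly in the bound. For the remaining term, $\rho^2\log n\le \rho^2 n$, so it is absorbed into $\rho^2(n+\rho^2)$. This gives the stated bound.

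The main obstacle is the routine absorption $\rho\log n=\OO(n+\rho^2)$ inside a single state. I would settle it by the case split $\rho\le\log n$ versus $\rho>\log n$:
\begin{itemize}
\item if $\rho\le\log n$, then $\rho\log n\le\log^2 n=\OO(n)$;
\item if $\rho>\log n$, then $\rho\log n<\rho^2$.
\end{itemize}

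A second check is also needed: the per-state scan of $\cand$ must not require re-sorting. This holds because the candidates were sorted once during preprocessing, so a linear scan with the stored anchors and limits suffices.
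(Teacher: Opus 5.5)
Your proposal is correct and follows the paper's accounting step for step: preprocessing in $\OO((n+\rho^2)\log n)$, at most $(\rho+1)^2$ states at $\OO(n+\rho^2)$ each, and an $\OO(\rho\log n)$ output step. Your case split on $\rho\le\log n$ versus $\rho>\log n$ supplies the justification for $\rho\log n=\OO(n+\rho^2)$, which the paper only asserts; drop the earlier parenthetical claim $\rho\log n=\OO(n)$, because it fails for $\rho=\Theta(n)$ and the case split is what actually carries that step.
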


For polygons with $\rho = \OO(\sqrt{n})$ reflex vertices the bound is
$\OO(n^{2})$. The term $\rho^{2} n$ comes from computing one
visibility polygon per state. Replacing this computation by $\rho$
ray shooting queries per state, and replacing the scan over $\cand$ by
a two-dimensional range minimum structure, lowers the bound to
$\OO(n \log n + \rho^{3} \log^{2} n)$. We do not need this refinement
for the results of this paper.


\section{Conclusion}
\label{F-sec:conclusion}

We gave the first exact polynomial time algorithms for witness sets in weak
visibility polygons, in both the discrete and the continuous model. The discrete algorithm is near-linear and optimal, and it rests on a structural
fact of independent interest: the visibility intersection graph of a weak visibility polygon is a trapezoid graph, and the class of these graphs in turn
properly contains the interval graphs and the permutation graphs.

Monotone mountains are perfect: their guard number equals their witness
number~\cite{DAESCU201922}. Weak visibility polygons are not. There are such
polygons on $n$ vertices that require $\Omega(n)$ guards and whose maximum
witness set is a single point, so the certificate a maximum witness set
provides can be as weak as the trivial one. The witness number of a weak
visibility polygon is therefore worth computing in its own right, and not as a
proxy for the guard number.

Our work leaves three concrete questions.
\begin{enumerate}
  \item Is every trapezoid graph a visibility intersection graph of some weak
  visibility polygon? A positive answer would characterize the conflict
  structure of $\dWSP$ exactly.
  \item Our lower bound for $\dWSP$ is tight when $m = \Theta(n)$. For larger
  $m$ the picture changes, because a witness set has at most $\rho + 1$
  members, and most of the input is therefore discarded. What is the exact
  complexity in that regime?
  \item Can the running time for $\conWSP$ be improved? Replacing each
  visibility polygon computation by $\rho$ ray shooting queries, and the scan
  over the static candidates by a two-dimensional range minimum structure,
  appears to give $\OO(n \log n + \rho^{3} \log^{2} n)$.
\end{enumerate}

\paragraph*{Declaration on the use of AI.}
The authors used Claude (Anthropic) to assist with drafting and revising portions of the manuscript text. All mathematical results, proofs, and technical content are the authors' own.

\bibliographystyle{splncs04}
\bibliography{udvas}

\end{document}